\newcommand{\eps}{\varepsilon}
\newcommand{\seqalign}[2]{\begin{subequations} #1 \begin{align} #2 \end{align}\end{subequations}}
\newcommand{\balign}[1]{\begin{align} #1 \end{align}}
\newtheoremstyle{def}
	{0.5cm}                   
  {0.5cm}                   
  {}           							
  {}                      
  {\bfseries}  					  
  {}                      
  {\newline}        			
  {\underline{\thmname{#1} \thmnumber{#2:}} \thmnote{[#3]}}
  {}                       
\newcommand{\q}{\quad}
\newcommand{\tx}{\text}
\newcommand{\ds}{\displaystyle}
\newcommand{\reffig}[1]{\text{Figure~}\ref{#1}}
\newtheorem{prop}{Proposition}
\newtheorem{lem}{Lemma}
\newtheorem{rem}{Remark}
\newtheorem{theorem}{Theorem}
\def\Id{{\rm{Id}}}
\def\ptm{$P\!M_1$ }
\def\psit{\tilde{\psi}}
\def\psic{\psit^{\rm{c}}}
\def\psid{\psit^{\rm{d}}}
\def\P{\cP_{\bu}}
\def\Pt{\tilde{\cP}_{\bu}}
\def\Q{\cQ_{\bu}}
\def\Qt{\tilde{\cQ}_{\bu}}
\def\ahat{\hat{\bsalpha}}
\def\etad{\eta_*}
\def\fe{\bff^{\cE}}
\def\fc{\bff^{\rm{C}}}
\def\fd{\bff^{\rm{D}}}
\def\fdh{\hat{\bff}^{\rm{D}}}
\def\pp{\Pi^{\rm{P_1}}}
\def\pe{\Pi^{\rm{M_1}}}
\def\pc{\Pi^{\rm{C}}}
\def\pd{\Pi^{\rm{D}}}
\def\pdel{\Pi_{\delta}}
\def\pdell{\Pi_{\delta,\ell}}
\def\qe{Q^{^{\rm{M_1}}}}
\def\E{\cE(\bu)}
\def\div{\grad_x \cdot}
\def\sigt{\sigma_{\rm{t}}}
\def\siga{\sigma_{\rm{a}}}
\def\sigs{\sigma_{\rm{s}}}
\def\rs{r_{\rm{s}}}
\def\ra{r_{\rm{a}}}
\def\wbar{\overline}
\def\xjalpha{{\hat{x}}_j^\ell}
\begin{document}
\title{Perturbed, Entropy-Based Closure for Radiative Transfer}

\author{Martin Frank $^{\rm a}$\thanks{Corresponding author. Email: frank@mathcces.rwth-aachen.de} , Cory D. Hauck $^{\rm b}$\thanks{The research of
this author is sponsored by the Office of
Advanced Scientific Computing Research; U.S. Department of Energy. The work was
performed at the Oak Ridge National Laboratory, which is managed by UT-Battelle,
LLC under Contract No. De-AC05-00OR22725. Accordingly, the U.S. Government
retains a non-exclusive, royalty-free license to publish or reproduce the
published form of this contribution, or allow others to do so, for U.S.
Government purposes. \vspace{6pt}} , Edgar Olbrant $^{\rm a}$ , \\ $^{\rm a}$ \small{\textit{RWTH Aachen University, Department of Mathematics \& Center for Computational}} \\ \small{\textit{Engineering Science, Schinkelstrasse 2, D-52062 Aachen, Germany}} \\
$^{\rm b}$ \small{\textit{Oak Ridge National Laboratory, 1 Bethel
Valley Road, Oak Ridge, Tennessee, 37831 USA}}
}

\date{\today}
\maketitle

\begin{abstract}
We derive a hierarchy of closures based on perturbations of well-known entropy-based closures;  
we therefore refer to them as \textit{perturbed entropy-based models}. Our derivation reveals final 
equations containing an additional convective and diffusive term which are added to the 
flux term of the standard closure. We present numerical simulations for the simplest member of the hierarchy, the
\textit{perturbed $M_1$} or \textit{\ptm model}, in one spatial dimension. Simulations are performed using a Runge-Kutta 
discontinuous Galerkin method with special limiters that
guarantee the realizability of the moment variables and the positivity of the material
temperature. Improvements to the 
standard $M_1$ model are observed in cases where unphysical shocks develop in the $M_1$ model.
\end{abstract}

\section{Introduction}
In this paper, we derive a new hierarchy of kinetic moment models in the
context of frequency-integrated (grey) photon transport.  These new models
are perturbations of well known entropy-based models;  we therefore refer to them
as \textit{perturbed entropy-based} or \textit{PEB models}.  We
present numerical simulations for the simplest member of the hierarchy, the
\textit{perturbed $M_1$} or \textit{\ptm model}, in one spatial dimension.  In
this setting, the \ptm model approximates the evolution of the photon radiation
energy $E$ and radiation flux $F$ through a material medium with slab geometry.
The photons interact with the material through scattering and
emission/absorption processes.


Entropy-based (EB) models have been studied extensively in areas such as
extended
thermodynamics \cite{Muller-Ruggeri-1993,Dreyer-1987}, gas dynamics
\cite{Levermore-1996, Hauck-Levermore-Tits-2008,Junk-1998,Junk-2000,
Schneider-2004, Groth-MacDonald-2009,Levermore-Morokoff-Nadiga-1998},
semiconductors \cite{Anile-Romano-2000,Junk-Romano-2005,
Rosa-Mascali-Romano-2009,Anile-Muscato-1995,Anile-Allegretto-Ringhofer-1998,
Anile-Pennisi-1992, Levermore-1998,Jungel-Krause-Pietra-2007,Hauck-2006},
quantum fluids \cite{Degond-Ringhofer-2003,Dreyer_Hermann_Kunik_2004}, radiation
transport \cite{Hauck-McClarren-2010, Dubroca-Klar-2002,Brunner-Holloway-2001,
Brunner-Holloway-2005,Wright-Frank-Klar-2009, Monreal-Frank-2009,Hau10,
Dubroca-Feugas-1999,Minerbo-1978,Frank-Dubroca-Klar-2006,
Cernohorsky-Bludman-1994-2,Cernohorsky-vandenHorn-Cooperstein-1989,
Smit-vanderHor-Bludman-2000}, and phonon transport in solids
\cite{Dreyer_Hermann_Kunik_2004}. In the context of radiative transfer,
entropy-based
models are commonly referred to as \textit{$M_N$ models}, where $N$ is
order of the expansion.  The $M_1$ model dates back to
\cite{Minerbo-1978}, where it was first derived using Maxwell-Boltzmann
statistics.  For problems with Bose-Einstein statistics, formal theoretical
properties
such as hyperbolicity and entropy dissipation were first reported in
\cite{Dubroca-Feugas-1999} for arbitrary $N$. However, computational studies
have focused primarily on properties of the $M_1$
model and its extensions, including multigroup equations \cite{Tur05} and
partial moment models \cite{Dubroca-Klar-2002,Frank-Dubroca-Klar-2006}. In
related work, one may find simulations of $M_1$ models based on other
statistics, including Maxwell-Boltzmann
\cite{Brunner-Holloway-2001,BrunnerThesis, Brunner-Holloway-2005} and
Fermi-Dirac \cite{BluCer95,SmiCerBlu97}. This
attachment to $M_1$ is due to the fact that the higher order members of the
$M_N$ hierarchy require the repeated solution of expensive numerical
optimization problems.  However, simulations of the $M_2$ model
\cite{Wright-Frank-Klar-2009, Monreal-Frank-2009} (the next member in the
hierarchy) have been performed for Bose-Einstein statistics and for $M_N$ up to
order $N=15$ for special benchmark problems using Maxwell-Boltzmann statistics
\cite{Hau10, Alldredge-Hauck-Tits-2012}.

There are several reasons to consider perturbative modifications to EB models. 
First, it is more economical to improve the model with perturbative corrections
than to increase the number of moments, since the latter increases the memory
footprint and makes the defining optimization problem more difficult to solve.
In this case of grey photon transport, the flux in the $M_1$ model can be
expressed analytically, i.e., no direct solution of the optimization problem is
required.  Thus, in this case, the argument against increasing the number of
moments is
especially compelling.  A second reason is that perturbations add (among
other things) diffusive terms to the EB model.  It is hoped that these terms
will smooth out non-physical shocks which are known to exist in EB models. 
These shocks are a generic artifact of the modeling procedure that result when
approximating linear transport in phase space by a nonlinear hyperbolic balance
law for a set of moments.  A third reason is that the specification of boundary
conditions for moment equations that are consistent with the underlying
kinetic boundary conditions is an open problem.  However, at least in the case
of linear moment equations, recent efforts \cite{Levermore2009} have shown the
potential for well-posed boundary conditions for models with perturbative
corrections. A fourth and final reason is that entropy-based closure do not
depend on the properties of the material.  Perturbations on the other hand
can couple material properties into the closure.

Our goal in this work is to assess the qualitative behavior of the \ptm
model relative to the original $M_1$ model. We consider several test
cases and find that the \ptm model gives mixed results.  Roughly speaking, it
does quite well for shock problems that the $M_1$ model cannot handle.  However,
for more regular solutions, the two models perform comparably; and in some
cases, the $M_1$ model performs slightly better.



One of the fundamental questions associated with any moment model is the issue
of \textit{realizability}. 
In the context of the $M_1$ and $PM_1$ models, the two unknowns 
$E$ and $F$ are called realizable if and only if they are the first two moments of an
underlying kinetic distribution. This requirement on $E$ and $F$
is mathematically equivalent to the condition
\begin{equation}
\label{eq:real}
 | F |\leq cE \:,
\end{equation}
which must be satisfied point-wise in space and time. Here, $c$ is the speed of light.
It is expected that the solutions of the $M_1$ model will satisfy
\eqref{eq:real} because it
(like all EB models) is derived assuming an ansatz for the kinetic distribution
which is positive.  However, the underlying ansatz for the \ptm model is a
perturbation of the EB ansatz that is no necessarily positive. Therefore, a
modification of the PEB ansatz is needed which controls the contribution of the
perturbative term.

Even for the $M_1$ model, the realizability condition \eqref{eq:real} can be
destroyed by a numerical method unless special care is taken to enforce it. To
address this issue in the current setting, we build on previous work with the
$M_1$ model \cite{OlbHauFra11}, using a Runge-Kutta discontinuous Galerkin
(RKDG) method that is equipped with a special slope limiter \cite{ZhaShu10a,
ZhaShu10b} in the spatial variable. For implementation of the \ptm model, this
special limiter must be applied in combination with a control parameter that
limits the size of the perturbations in the underlying ansatz of the PEB
closure. The RKDG method \cite{BasReb97} is a natural discretization here
because we deal with a hyperbolic system of equations that is augmented by a
diffusive term.

The remainder of the paper is organized as follows.  In Section
\ref{sec:moments}, we introduce the radiative transfer equation and moment model
framework. In Section \ref{sec: closure}, we derive perturbed entropy-based
closures and give explicit expressions for the perturbed $M_1$ model.  In Section
\ref{sec:dg}, we focus on the $PM_1$ model in slab geometry and give details of
the discontinuous Galerkin method used for
simulation. In Section \ref{sec:results}, we present numerical results.  Section
\ref{sec:disc} is for discussion and conclusions.  Several calculations and
proofs are relegated to the Appendix.

\section{Radiative Transfer and Moment Equations} \label{sec:moments}
We consider a collection of photons which move at the speed of light $c$ through
a static material medium.  In engineering and physics applications, the
fundamental quantity of interest is the radiation intensity $\psi =
\psi(x,\Omega,\nu,t)$ which is a function of position $x \in K \subset \bbR^3$,
direction $\Omega \in \bbS^2$, frequency $\nu \in (0,\infty)$, and time $t \in
(0,\infty)$.  Roughly speaking, $\psi$ is the flux of energy through a surface
normal to $\Omega$.
If $f$ is the kinetic density of photons---that is, the number density with
respect to the Lebesgue measure $dx d\Omega d\nu$---then $\psi = h\nu cf$,
where $h$ is Planck's constant.

The material is characterized by a temperature $T=T(x)$, an equation of state
 for the energy $e=e(T)$, and by scattering, absorption, and total
cross-sections: $\sigs$, $\siga$, and $\sigt = \siga+\sigs$ that depend on $x$
directly and also indirectly through the material temperature.

\subsection{The Radiative Transfer Equation}
The radiative transfer equation, which approximates the evolution of $\psi$, is
given by
\begin{equation} \label{eq:transport}
\frac{1}{c} \p_t \psi +  \Omega \cdot \grad_x \psi = \cC(\psi;T) \:.
\end{equation}
The collision operator $\cC$ models interactions of photons with the
medium.  For the purposes of this paper, we assume $\cC$ has the form
\begin{equation} \label{eq:coll_op}
 \cC(\psi;T):= - \sigt\psi + \frac{1}{4 \pi} \left( \sigs \phi + \siga B(T)
+ s \right )
\:,
\end{equation}
where $\phi$ is the angular integral of $\psi$:
\begin{equation}
 \phi := \int_{\bbS^2} \psi d\Omega \:,
\end{equation}
and the Planckian
\begin{equation}
 B(T) := \frac{2h \nu^3}{c^2} 
    \frac{1}{\exp \left( \frac{h \nu } {kT}\right) - 1} \:
\end{equation}
models blackbody radiation from the material. The constant $k$ is
Boltzmann's constant.    The first term in $\cC$ accounts for the loss of
photons at a given frequency and angle due to both out-scattering and absorption
by the material.  The second group of terms gives the gain of photons due to
in-scattering from other angles, re-emission by the material, and a generic
external source $s$.  To make calculations, $s$ is assumed to be isotropic.
However, such an assumption is not necessary.

The evolution of the material temperature is determined by a balance of
emitting and absorbed photons:
\begin{equation}
\p_t e(T) = \siga \left( \vint{\psi} - acT^4 \right) \:,
\label{eq:material_energy}
\end{equation}
where angle brackets are used as a shorthand notation for integration
over
angle and frequency:
\begin{equation}
 \vint{\,\cdot\,} \equiv \int_0^\infty \int_{\bbS^2} (\,\cdot\,) \,d\Omega d\nu
\:,
\end{equation}
and the $T^4$ term in the first equation comes from the
Stefan-Boltzmann Law:
\begin{equation}
 \int_0^\infty B(T)d\nu = ac T^4 \,.
 \label{eq:SB_law}
\end{equation}
The constant $a$ is the \textit{radiation constant}. Though the material
equation \eqref{eq:material_energy} plays an important role, we will focus here
on simulating the transport equation \eqref{eq:transport}.

\subsection{Moment Equations}
The large phase space on which \eqref{eq:transport} is defined makes direct
numerical simulation prohibitively expensive.  Thus, approximate models are
needed to reduce the size of the system.  A common and well-known approach is
the method of moments, for which the angular and/or frequency dependency of
$\psi$ is approximated using a finite number of weighted averages.

Derivation of any moment system begins with the choice of a vector-valued
function $\bm:\bbS^2 \to \bbR^n, \, \Omega \mapsto
[m_0(\Omega),\ldots,m_{n-1}(\Omega)]^T$, whose $n$ components are linearly
independent functions of $\Omega$. Evolution equations for the moments
$\bu(x,t):=
\Vint{\bm \psi(x,\cdot,t)}$ are found by multiplying the transport equation by
$\bm$ and integrating over all angles to give
\begin{equation} \label{eq:moment_eqns}
 \frac{1}{c} \p_t \bu
   + \grad_x \cdot  \vint{\Omega \bm \psi}
   = \vint{\bm \cC(\psi; T)} \;.
 \end{equation}

The system \eqref{eq:moment_eqns} is not closed; a recipe,
or \textit{closure}, must be prescribed to express unknown quantities in terms
of the given moments. Often this is done via an
approximation for $\psi$ in \eqref{eq:moment_eqns} that depends on $\bu$,
 \begin{equation}
   \psi(x,\Omega,t) \simeq \cE(\bu(x,t))(\Omega) \:,
 \end{equation}
and satisfies the consistency relation
\begin{equation}\label{eq:consistency}
 \vint{\bm \cE(\bu)} = \bu \,.
\end{equation}
The resulting moment system is
\begin{equation} \label{eq:moment_eqns_closed}
 \frac{1}{c} \p_t \bu
   + \grad_x \cdot  \vint{\Omega \bm \E}
   = \vint{\bm \cC(\E; T)} \;.
 \end{equation}
In general, a closure is required to evaluate both the flux terms and the
collision terms in \eqref{eq:moment_eqns}.  However for the collision operator
in \eqref{eq:coll_op}, no closure is required.  Indeed, it is
straight-forward to show that 
$\vint{\bm \cC(\E;T)} = \vint{\bm \cC(\psi;T)}$ for any reconstruction that satisfies the consistency
relation.  Thus, for the purposes of this paper, we will be focused on closure
of the flux term.  As one might expect, the behavior of a moment
system---and in particular its ability to capture fundamental features of the
kinetic description---depends heavily on the form of the reconstruction.

\section{Entropy-Based and Perturbed Entropy-Based (PEB) Closures}
\label{sec: closure}
In this section, we briefly review the theory of entropy-based closures for
radiative transfer \cite{Dubroca-Klar-2002,Brunner-Holloway-2001,
Brunner-Holloway-2005,
Dubroca-Feugas-1999,Minerbo-1978,Frank-Dubroca-Klar-2006,Wright-Frank-Klar-2009,
Monreal-Frank-2009,Hau10} and introduce our new perturbative
model.

\subsection{Entropy-Based Closures}
A general strategy for prescribing a closure is to use the solution of a
constrained optimization problem
\begin{align} \label{eq:optimization}
 \min_{\footnotesize g \in \operatorname{Dom}(\cH)} ~&~ \cH(g) \\
 \mbox{s.t.} ~\quad&~ \vint{\bm g} = \vint{\bm \psi}
\end{align}
where $\cH(g) := \vint{\eta(g)}$ and $\eta: \bbR \rightarrow \bbR$ is a
strictly convex function that is related to the entropy of the system.  For
photons, the physically relevant entropy comes from Bose-Einstein statistics
and is given by \cite{Ore55,Ros54}
\begin{equation}
 \eta(g) = \frac{2k\nu^2}{c^3}\left[n_g\log(n_g)-(n_g+1)\log(n_g+1)\right] \:,
\end{equation}
where the $n_g$ is the occupation number associated with $g$:
\begin{equation}
 n_g := \frac{c^2}{2h\nu^3}g \:.
\end{equation}
The solution of \eqref{eq:optimization} is expressed in terms of 
the Legendre dual
\begin{equation}
	\etad(f) = - \frac{2k\nu^2}{c^3} \log\left(1 - \exp \left( - \frac{h\nu
c}{k }  f \right)\right) \:.
\end{equation}
 Let 
\begin{equation}
\cB(\bsalpha):= \etad'\left(\bsalpha^{T}\bm \right)=
\frac {2h \nu^3}{c^2}
    \frac{1}{\exp \left( - \frac{h\nu c}{k }  \bsalpha ^T \bm\right) - 1}
    \label{eq:entropy_ansatz} \:.
\end{equation}
Then we have the following.

\begin{theorem}
\label{thm:DubFeu}
The solution of \eqref{eq:optimization} is given by $\cB(\ahat)$, where
$\ahat=\ahat(\bu)$
solves the dual problem
\begin{equation}
 \min_{\bsalpha \in \bbR^n}~\left\{\Vint{\etad\left(
\bsalpha^{T}\bm \right)} - \bsalpha^T \bu\right\} 
\:. \label{eq:dual_optimization}
\end{equation}
It is also the Legendre dual variable of $\bu$ with respect to the
strictly
convex entropy $h(\bu): = \cH(\cB(\ahat(\bu)))$, i.e.,
\begin{equation}
\ahat(\bu) = \left [\frac{\p h}{\p \bu}(\bu) \right ]^T \,.
\end{equation}
The moment system derived by setting $\cE(\bu) =
\cB(\ahat)$ in
\eqref{eq:moment_eqns_closed} is hyperbolic and symmetric when expressed in
the $\ahat$ variables and its solution formally dissipates $h$. Moreover, $\cE$
is an inherently positive quantity.
\end{theorem}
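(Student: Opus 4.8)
The plan is to establish the assertions in sequence: the identification of the minimizer and the Legendre-pair statement by convex (Lagrangian) duality, and then symmetric hyperbolicity, entropy dissipation, and positivity by direct computation using the closed form $\cE(\bu)=\cB(\ahat)$. First I would form the Lagrangian for \eqref{eq:optimization},
\[
  L(g,\bsalpha)=\vint{\eta(g)-(\bsalpha^{T}\bm)\,g}+\bsalpha^{T}\bu ,
\]
and minimize pointwise in $g$: since $\eta$ is strictly convex, $\inf_{z}\{\eta(z)-pz\}=-\etad(p)$ for admissible $p$, with the unique minimizer $z=(\eta')^{-1}(p)=\etad'(p)$. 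Hence the dual functional is $-\vint{\etad(\bsalpha^{T}\bm)}+\bsalpha^{T}\bu$, whose maximization is precisely \eqref{eq:dual_optimization}; strict convexity of $\cH$ together with strict feasibility of the moment constraint (for $\bu$ in the interior of the realizable set) gives strong duality with attainment, forcing the primal minimizer to equal $\etad'(\ahat^{T}\bm)=\cB(\ahat)$, and \eqref{eq:consistency} is then just primal feasibility. For the Legendre statement I would note, using Fenchel's equality, that $h(\bu)=\vint{\eta(\cB(\ahat))}=\ahat^{T}\bu-\vint{\etad(\ahat^{T}\bm)}$ is the Legendre transform of $\phi(\bsalpha):=\vint{\etad(\bsalpha^{T}\bm)}$; the envelope theorem then gives $\p h/\p\bu=\ahat^{T}$, and $h$ is strictly convex because $\phi$ is --- which follows from $\etad''>0$ and the linear independence of the components of $\bm$.

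For symmetry I would differentiate the consistency relation, $\p\bu/\p\ahat=\vint{\etad''(\ahat^{T}\bm)\,\bm\bm^{T}}=:\mathbf{H}(\ahat)$, which is symmetric and positive definite (again by $\etad''>0$ and independence of the $m_{i}$), so that $\ahat\mapsto\bu$ is a diffeomorphism. The closed flux $\mathbf{F}(\bu):=\vint{\Omega\,\bm\,\cE(\bu)}$ has, in the $\ahat$ variables and componentwise in each spatial direction $i$, Jacobian $\mathbf{G}_{i}(\ahat):=\vint{\etad''(\ahat^{T}\bm)\,\Omega_{i}\,\bm\bm^{T}}$, which is symmetric. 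Rewriting \eqref{eq:moment_eqns_closed} via the chain rule therefore yields the Friedrichs-symmetric form $\tfrac1c\mathbf{H}(\ahat)\p_{t}\ahat+\sum_{i}\mathbf{G}_{i}(\ahat)\p_{x_{i}}\ahat=\vint{\bm\cC(\cB(\ahat);T)}$ with $\mathbf{H}$ SPD and each $\mathbf{G}_{i}$ symmetric; equivalently, in conservative form, $h$ is a Godunov--Mock entropy with symmetrizer $\mathbf{H}$, and for any direction $\mathbf{n}$ the flux Jacobian $\bigl(\sum_{i}n_{i}\mathbf{G}_{i}\bigr)\mathbf{H}^{-1}$ is similar via $\mathbf{H}^{1/2}$ to the symmetric matrix $\mathbf{H}^{-1/2}\bigl(\sum_{i}n_{i}\mathbf{G}_{i}\bigr)\mathbf{H}^{-1/2}$ and hence real-diagonalizable, i.e.\ the system is hyperbolic.

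For entropy dissipation I would introduce the ($\bbR^{3}$-valued) entropy-flux potential $\chi(\ahat):=\vint{\Omega\,\etad(\ahat^{T}\bm)}$, which satisfies $\p\chi/\p\ahat=\mathbf{F}^{T}$, so the entropy flux $\mathbf{j}:=\mathbf{F}^{T}\ahat-\chi$ obeys $\div\mathbf{j}=\ahat^{T}\div\mathbf{F}$. Contracting \eqref{eq:moment_eqns_closed} with $\ahat^{T}=\p h/\p\bu$ then gives the local balance $\tfrac1c\p_{t}h(\bu)+\div\mathbf{j}=\vint{\eta'(\cB(\ahat))\,\cC(\cB(\ahat);T)}$, using $\ahat^{T}\bm=\eta'(\cB(\ahat))$, and the right-hand side is $\le 0$ because $\cC$ in \eqref{eq:coll_op} satisfies an $H$-theorem for this entropy; the word ``formally'' covers the fact that one must still append the material-energy equation \eqref{eq:material_energy} and account for the source terms. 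Positivity is immediate: $\cE=\cB(\ahat)=\etad'(\ahat^{T}\bm)$, and since $\etad$ is the Legendre dual of the Bose--Einstein entropy, whose natural domain is the cone of nonnegative densities, $\etad'$ takes values there; the constraint implicit in \eqref{eq:dual_optimization} --- that $\ahat^{T}\bm$ lie in $\operatorname{Dom}(\etad)$ for every $\Omega$ --- makes $\cB(\ahat)>0$ pointwise.

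The step I expect to be the real obstacle is the duality argument itself: attainment of the dual minimizer $\ahat(\bu)$ can fail on the boundary of the realizable set, and the smoothness of $\bu\mapsto\ahat(\bu)$ needed to legitimize the chain-rule manipulations above requires $\phi\in C^{2}$ with nondegenerate Hessian on the relevant domain. I would therefore state and prove the theorem on the interior of the realizable set --- the regime in which it is used --- after which the remaining assertions reduce to the (lengthy but routine) computations sketched above.
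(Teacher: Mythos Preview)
Your proposal is correct and follows the standard route---Lagrangian duality for the minimizer, the Legendre/envelope identity for $\ahat=\p h/\p\bu$, and the Godunov--Mock/Friedrichs symmetrization for hyperbolicity and entropy dissipation. The paper itself does not spell out a proof: it simply notes that the Lagrange-multiplier computation is formal, points to \cite{Junk-2000} for the rigorous existence argument, and to \cite{Dubroca-Feugas-1999,Levermore-1996} for the remaining structural properties. What you have written is essentially a self-contained sketch of the arguments contained in those references, so your approach and the paper's coincide; your version is just more explicit, and your caveat about restricting to the interior of the realizable set is exactly the technical point the paper outsources to \cite{Junk-2000}.
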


\begin{proof}
The form of the minimizer in \eqref {eq:entropy_ansatz} can be derived formally
using standard Lagrange multiplier techniques.  However, a rigorous proof
requires more technical arguments which can be found, for example, in
\cite{Junk-2000} for the Maxwell-Boltzmann entropy and applied directly to the
current setting.  Once the existence of a minimizer is found, the other
properties can be verified, as is done in
\cite{Dubroca-Feugas-1999,Levermore-1996}
\end{proof}

\subsection{Perturbed Entropy-Based (PEB) Closures}
\label{sec:PEB}

Perturbations to standard \pn closures%
\footnote{These closures are based on a spherical harmonic expansion in angle
and can be formulated as an entropy-based closure with an $L^2$ cost
functional \cite{Hauck-McClarren-2010,Pomraning-1973}.
}
have been
derived for
$N=3$ in \cite{Oh-Holloway-2009} and for general $N$ in
\cite{SchFraLev11} (see also \cite{Hauck-2006} and \cite{Struchtrup-2008}
). 
The idea
behind the derivation in \cite{SchFraLev11} is to write $\psi
= \psi_{\rm{pn}} + \psit$,
where $\psi_{\rm{pn}}$ is the standard $P_N$ expansion.  The perturbation
$\psit$ satisfies its own kinetic equation, which can be then used to
approximate $\psit$ in terms of $\psi_{\rm{pn}}$. The resulting ``$D_N$''
models gain a diffusive term in the equations for the highest order moments.
 uch an approach need not be restricted to the $P_N$ equations.  Indeed,
following this exact strategy, we define
\begin{enumerate}
 \item The moment map $\cM:  g \mapsto \bu :=\vint{\bm g}$;
 \item The expansion map $\cE: \bu \mapsto \cB(\ahat(\bu))$;
 \item The reconstruction $\cR = \cE \circ \cM$;
 \item The kinetic perturbation $ \psit = \psi - \cR(\psi)$.
\end{enumerate}
The kinetic equation for $\psit$ is
\begin{equation} \label{eq:diff}
 \p_t \psit =  \p_t \psi - \p_t \cR(\psi)=  \p_t \psi - \p_t \cE(\bu)
  =  \p_t \psi -\cE'(\bu) \p_t\bu
\end{equation}
where
\begin{equation}
 \cE'(\bu)
  = \cB'(\ahat)
     \frac{\p \ahat}{\p \bu} 
  = \bm^T \cW(\bu)
    \Vint{\bm \bm^T \cW(\bu)}^{-1}
\,,
    \label{eq:E_prime}
\end{equation}
\begin{equation}
  \cW(\bu) := \etad''( \ahat^{T}\bm )
  =\frac{2h^2 \nu^4} {kc}\frac{\exp(-\frac{h\nu c}{k }
\ahat^{T}\mathbf{m})}
        {\left[\exp(- \frac{h\nu c}{k }\ahat^{T}\mathbf{m})-1\right]^2}
> 0 \:,
\end{equation}
and we have used the relation
\begin{equation}
{\rm Id}  = \Vint{\bm \cE'(\bu)} 
=  \Vint{\bm \cB'(\ahat) }  \frac{\p\ahat}{\p\bu}
=  \Vint{\bm \bm^T \cW(\bu) }  \frac{\p\ahat}{\p\bu}
\end{equation}
to compute the matrix $\frac{\p\ahat}{\p\bu}$ in \eqref{eq:E_prime}.
By operating with $\Pt := \cI - \P$ on \eqref{eq:transport} , where $\P:=\cE'(\bu)\cM$, we can write $\eqref{eq:diff}$ as
\begin{equation} \label{eq:deviation}
 \frac1c\p_t \psit + \Pt(\Omega \cdot \grad_x \psi) = \Pt \cC( \psi; T).
\end{equation}
It should be noted, for future use, that the
projection $\Q$, given by
\begin{equation}
\Q g:= \frac{1}{\cW(\bu)} \P(\cW(\bu) g)  \:,
\label{eq:Q}
\end{equation}
is self-adjoint in $L^2$ with respect to the
positive weight $\cW(\bu)$.

Equation \eqref{eq:deviation} for the perturbation is exact.  To derive a
closure, we neglect the
time derivative and perturbative component of the flux to arrive at the
following approximate balance equation
\begin{equation} \label{eq:balance}
 \Pt(\Omega \cdot \grad_x \E) \simeq \Pt \cC( \psi; T),
\end{equation}
where
\begin{equation}
 \Pt \cC( \psi; T) = -\sigt \left[ \Pt \cE(\bu) + \psit \right]
    + \frac{1}{4 \pi} \left[ \sigs \Pt \phi + \siga \Pt B(T) +\Pt s \right].
\end{equation}
In the appendix, we show that for the grey equations with Bose Einstein
entropy,
\begin{equation}
\int_0^{\infty} \Pt \cE(\bu) \,d \nu = 0.
\label{eq:p_id}
\end{equation}
Knowing that this component will be integrated out in the final closure, we
therefore solve \eqref{eq:balance} for $\Pt \cE(\bu) + \psit$ in terms of a
convective component $\psic$ and a diffusive component $\psid$:
\begin{equation}
 \Pt \cE(\bu) + \psit  \simeq
   \frac{1}{4 \pi} \left[ \rs \Pt \phi + \ra \Pt B(T) + \frac{1}{\sigt}
\Pt s \right ]
   -\frac{1}{\sigt}\Pt(\Omega \cdot \grad_x \E)
    =: \psic + \psid ,
  \label{eq:deviation_approx}  
\end{equation}
where $\rs$ and $\ra$ are the scattering and absorption ratios, respectively:
\begin{equation}
\rs= \frac{\sigs}{\sigt} \quand \ra= \frac{\siga}{\sigt} \:.
\end{equation}
Inserting \eqref{eq:deviation_approx} back into the flux term of
the moment equation \eqref{eq:moment_eqns_closed} gives
\begin{equation}
 \vint{\Omega \bm \psi} \simeq \vint{\Omega \bm \E} + \vint{\Omega \bm \psic} +
\vint{\Omega \bm \psid} =: \fe + \fc + \fd.
\end{equation}

%

At this point, it is not clear whether this flux dissipates an entropy or if the
convective flux $\fc$ is always hyperbolic. In general, the hyperbolicity of
moment models closed by an entropy minimization principle follows from the fact
that (in terms the Lagrange multipliers $\ahat$) the model can be written as a
symmetric Lax-Friedrichs form \cite{Levermore-1996}. This structure is not
present here.  However, at least for slab geometries, the convective flux
in the $PM_1$ model is hyperbolic.  (See Proposition \ref{prop:pM1_hyperbolic}
in the following section.)  Moreover, in general, the diffusive flux
satisfies a local dissipation law.   
\begin{prop}
\label{prop:dissipation}
 The diffusion term $\fd$ dissipates the entropy $h(\bu):= \cH(\E)$
locally in space.
\end{prop}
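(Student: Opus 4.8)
The plan is to contract the closed moment system with the entropy variable $\ahat(\bu)$, isolate the contribution that the diffusive flux $\fd$ makes to the entropy balance, and show that --- after splitting off a spatial divergence that is absorbed into the entropy flux --- what is left is pointwise nonpositive. Concretely, I would start from \eqref{eq:moment_eqns_closed} with the flux written as $\fe+\fc+\fd$, contract by $\ahat^T$, and use $\ahat(\bu)=[\p h/\p\bu]^T$ (Theorem~\ref{thm:DubFeu}) together with the chain rule to get $\tfrac1c\p_t h(\bu)=-\ahat^T\bigl(\div\fe+\div\fc+\div\fd\bigr)+\ahat^T\vint{\bm\cC}$. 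Applying the product rule componentwise to the diffusive term, $\ahat^T\div\fd=\div(\ahat^T\fd)-\sum_j(\p_{x_j}\ahat)^T\fd_j$, so the diffusive modification augments the entropy flux by $\ahat^T\fd$ and contributes the extra source term $\cD:=\sum_j(\p_{x_j}\ahat)^T\fd_j$; it then remains to prove $\cD\le 0$ pointwise in $x,t$.

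Next I would unfold $\cD$. Writing $\fd_j=\vint{\Omega_j\bm\,\psid}$ with $\psid=-\tfrac1\sigt\Pt(\Omega\cdot\grad_x\E)$ gives $\cD=-\tfrac1\sigt\vint{(\Omega\cdot\vec\phi)\,\Pt(\Omega\cdot\grad_x\E)}$, where $\vec\phi:=(\phi_1,\phi_2,\phi_3)$ and $\phi_j:=(\p_{x_j}\ahat)^T\bm$. The one substantive step is to notice that $\grad_x\E$ factors through the positive weight $\cW(\bu)$: since $\E=\cB(\ahat)$ and $\cB'(\ahat)=\etad''(\ahat^T\bm)\,\bm^T=\cW(\bu)\,\bm^T$, the chain rule yields $\p_{x_j}\E=\cW(\bu)\,(\p_{x_j}\ahat)^T\bm=\cW(\bu)\,\phi_j$, i.e.\ $\Omega\cdot\grad_x\E=\cW(\bu)\,(\Omega\cdot\vec\phi)$. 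Setting $w:=\Omega\cdot\vec\phi$, this reduces $\cD$ to $-\tfrac1\sigt\vint{w\,\Pt\bigl(\cW(\bu)\,w\bigr)}$.

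To finish, I would invoke the projection structure already recorded in the paper. From \eqref{eq:Q} one has $\Pt(\cW(\bu)\,g)=\cW(\bu)\,\Qt g$ with $\Qt:=\cI-\Q$; moreover $\Q$ is idempotent --- because $\P=\cE'(\bu)\cM$ is, using ${\rm Id}=\Vint{\bm\cE'(\bu)}$ --- and, as noted after \eqref{eq:Q}, self-adjoint with respect to $\langle f,g\rangle_{\cW}:=\vint{\cW(\bu)\,f\,g}$. Hence $\Qt$ is a self-adjoint idempotent, i.e.\ a nonnegative operator, on this weighted space, so $\cD=-\tfrac1\sigt\langle w,\Qt w\rangle_{\cW}=-\tfrac1\sigt\langle\Qt w,\Qt w\rangle_{\cW}=-\tfrac1\sigt\,\|\Qt(\Omega\cdot\vec\phi)\|_{\cW}^2\le 0$, since $\cW(\bu)>0$ and $\sigt>0$. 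This is the asserted local entropy dissipation, with the dissipation rate identified explicitly.

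I expect the only genuine obstacle to be spotting the factorization $\grad_x\E=\cW(\bu)\,(\grad_x\ahat)^T\bm$: it is exactly what makes the weighted self-adjointness of $\Q$ (equivalently the $\cW(\bu)$-orthogonality of $\Qt$) applicable, after which the sign is immediate and everything else is routine bookkeeping. A secondary point to handle carefully is that $\ahat$ and $\cW(\bu)$ depend on $x$ only through $\bu(x,t)$, so the chain rule must be carried out consistently when differentiating $\E=\cB(\ahat(\bu(x,t)))$ in the spatial variable; and one should keep straight that $\div(\ahat^T\fd)$ genuinely belongs to the entropy flux, so that only $\cD$ has to be signed.
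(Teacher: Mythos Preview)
Your proposal is correct and follows essentially the same route as the paper: contract with $\ahat^T$, peel off a spatial divergence via the product rule, rewrite $\grad_x\E=\cW(\bu)\,(\grad_x\ahat)^T\bm$ through the chain rule, convert $\Pt(\cW\,\cdot)$ to $\cW\,\Qt(\,\cdot\,)$, and conclude nonpositivity from the self-adjoint idempotence of $\Qt$ in the $\cW$-weighted inner product. The paper's argument is the same computation written slightly differently (it keeps $\sigt^{-1}$ inside the bracket and phrases the final quantity as $\vint{\sigt^{-1}\cW\,[\Qt(\Omega\cdot\grad_x(\ahat^T\bm))]^2}$), so there is nothing substantively new or different in either direction.
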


\begin{proof}
A dissipation law for $h$ is found by multiplying
the closed moment system \eqref{eq:moment_eqns_closed} by
$ \ahat^T \equiv \frac{\p h}{\p \bu}$.  Multiplying $\grad_x\cdot\fd$ on the right by
$\ahat^T$ gives
 \begin{align*}
 \ahat^T \left ( \grad_x\cdot\fd \right )&=
 -\ahat ^T \left [ \div \Vint{\Omega \bm \sigt^{-1} \Pt\left(\Omega \cdot \grad_x
\E\right)} \right ] \\
 &= -\div \Vint{\Omega (\ahat ^T \bm) \sigt^{-1}  \Pt\left(\Omega \cdot \grad_x
\E\right)}
    + \left(\grad_x \ahat^T \right) \cdot \Vint{\Omega \bm \sigt^{-1}  \Pt\left(\Omega \cdot
\grad_x
\E\right)},
 \end{align*}
where $\grad_x$ acts on the components of $\Omega$ and the Lagrange multiplier $\ahat^T$ on $\bm$.
We only need to work with the term that is not in divergence form.  We use the
fact that $\cB(\ahat) = (h \nu c /k) \bm^T \cW$
\begin{align*}
 \left(\grad_x \ahat^T \right) \cdot \Vint{\Omega \bm \sigt^{-1} \Pt\left(\Omega \cdot \grad_x
\E\right)}
  &= \left(\grad_x \ahat^T \right) \cdot
      \Vint{\Omega \bm \sigt^{-1} \cW \Qt
        \left( \frac{\Omega \cdot \grad_x \E  }{\cW} \right)} \nonumber \\
  &= \frac{h \nu c}{k}\Vint{\Omega \cdot \grad_x ( \ahat^T \bm) \sigt^{-1}  \cW
       \Qt \left( \Omega \cdot \grad_x (\ahat^T \bm) \right)} \nonumber\\
  &=  
      \frac{h \nu c}{k}\Vint{\sigt^{-1} \cW \left[\Qt( \div (\Omega \ahat^T
\bm)\right]^2}\nonumber \\
      & \geq 0\:,
 \end{align*}
where $\Qt := \Id - \Q$ and $\Q$ is given in \eqref{eq:Q}.
\end{proof}

%

\subsection{Controlling the Perturbations}
\label{subsec:pert}

While the entropy-based ansatz in \eqref{eq:entropy_ansatz} is positive for
all $\Omega$, the addition of the perturbation in \eqref{eq:deviation_approx}
may lead to an ansatz which is not.   As a consequence, the moments
of the perturbed ansatz may not satisfy the realizability condition
\eqref{eq:real}.  
To correct for this defect, we introduce a modification and approximate $\psi$
with
\begin{equation}
  \E = \cB(\ahat) + \delta \psit,
  \label{eq:cut_ansatz}
\end{equation}
where $\delta(x,t)$ is  a scalar control parameter.  Several
different choices for $\delta$ are possible.  For example, one could select
it to ensure that $\E$ is positive everywhere.  However, this choice
requires 
pointwise evaluations with respect to $\Omega$---a task we would like to avoid. 
Instead, we select $\delta$ in such a way as to preserve \eqref{eq:real} in the numerical
computation.  While the exact form of $\delta$ depends on the details of the
numerical method, the general framework relies on the realizability conditions
for the moments. We call an array $(\Psi_0,\Psi_1, \ldots , \Psi_N)$ realizable with
respect to $(1,\Omega, \ldots,\Omega^{\otimes N})$ if there exists a
non-negative measure  on $d\Omega d\nu$ with density $\Psi(\Omega,\nu)$ such
that $\Psi_k = \vint{\Omega^{\otimes k}  \Psi}$ for $k=1, \ldots, N$. The set
$\cR_N$ of all such vectors is called the realizable set.

Roughly speaking, we select $\delta$ to ensure that $(\Psi_0,\Psi_1, \ldots ,
\Psi_N) \in \cR_N$.  Note that such a $\delta$
always exists: When $\delta = 0$, there is no perturbative term and since the minimum entropy
ansatz is always positive, $(\Psi_0,\Psi_1, \ldots , \Psi_N) \in \cR_N$. 
Details for the $PM_1$ are given in Section \ref{subsec:limiters}.


\section{The Perturbed $M_1$ ($P\!M_1$) model}
The perturbed $M_1$ model is based on the moments
\begin{equation}
\bu =  \left(
\begin{array}{c}
\bu_0 \\
\bu_1
\end{array}
 \right)
= \left(
\begin{array}{c}
cE \\
F
\end{array}
 \right)
 := \left(
\begin{array}{c}
\vint{\psi} \\
\vint{\Omega \psi}
\end{array}
 \right) \:.
\end{equation}
where $E$ is the photon energy density and $F$ is the energy flux density.
The model approximates the evolution of $E$ and $F$ with the following system:
\begin{subequations}
\begin{align}
\p_t E + \div F &= -\siga (cE - acT^4) + S, \\
\p_t F   + c^2\div \Pi(E,F) &= -c\sigt F \,,
\end{align}
\end{subequations}
where $S(x,t):= \int_0^\infty s(x,\nu,t) \, d\nu$ and the closure for the
pressure term is
\begin{equation}
\Pi(E,F) := \frac{1}{c}\vint{(\Omega \vee \Omega) (\cE(\bu) + \psic + \psid)}
=: \pe(E,F) + \pc(E,F) + \pd(E,F) \:.
\end{equation}
Here $\pe(\bu)$ is the term that comes from the entropy ansatz (the
\textit{entropy-based term}).  The term $\pc(\bu)$ is the \textit{convective
correction} and $\pd(\bu)$ is the \textit{diffusive correction}.  These
corrections can be expressed in terms of $\pe$ and
\begin{equation}
\qe:= \Vint{\Omega^{\vee 3}\E}
\end{equation}
which, in turn, can be expressed in terms of the
unit vector $\bn:= F / |F|$ and the scalars
\begin{equation}
\chi_k = \frac{\vint{(\Omega \cdot \bn)^k \cE}}{cE} \:.
\label{eq:eddingtons}
\end{equation}

\begin{lem} \label{lem:pd_pc}
The correction terms $\pc$ and $\pd$ are given by
\begin{subequations}
\begin{align}
\pd &=  \frac{1}{c \sigt}\left[-\div \qe
 + \frac{ \p \pe}{\p E}  (\div F)
 + c^2 \frac{ \p \pe}{\p F}  (\div \pe) \right]
 \:, \\
\pc &= \eta \cdot \left( \rs E + \ra a T^4  + \frac{S}{c \sigt} \right), \q \tx{where} \q \eta=\left( \third \Id -  \frac{ \p \pe}{\p E} \right).  \label{eq:eta_abstract}
\end{align}
\end{subequations}
\end{lem}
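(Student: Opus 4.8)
The plan is to compute $\pc=\tfrac1c\vint{(\Omega\vee\Omega)\psic}$ and $\pd=\tfrac1c\vint{(\Omega\vee\Omega)\psid}$ directly from the definitions in \eqref{eq:deviation_approx}, where $\psic=\tfrac1{4\pi}\bigl[\rs\Pt\phi+\ra\Pt B(T)+\tfrac1{\sigt}\Pt s\bigr]$ and $\psid=-\tfrac1{\sigt}\Pt(\Omega\cdot\grad_x\E)$, with $\Pt=\cI-\P$, $\P g=\cE'(\bu)\vint{\bm g}$, $\bm=(1,\Omega)^T$, and the grey cross-sections constant in $(\Omega,\nu)$. Four elementary ingredients drive the calculation. \emph{(a)} Differentiating $\pe(\bu)=\tfrac1c\vint{(\Omega\vee\Omega)\cE(\bu)}$ under the integral sign gives $\tfrac1c\vint{(\Omega\vee\Omega)\cE'(\bu)}=\tfrac{\p\pe}{\p\bu}$; after the chain rule through $\bu=(cE,F)$ this says the matrix $\vint{(\Omega\vee\Omega)\cE'(\bu)}$ has energy block $\tfrac{\p\pe}{\p E}$ and flux block $c\,\tfrac{\p\pe}{\p F}$. \emph{(b)} $\int_{\bbS^2}\Omega\,d\Omega=0$ and $\int_{\bbS^2}\Omega\otimes\Omega\,d\Omega=\tfrac{4\pi}3\Id$. \emph{(c)} Consistency gives $\vint{\Omega\E}=F$ and, by definition of $\pe$, $\vint{\Omega\otimes\Omega\,\E}=c\pe$; also $\grad_x$ commutes with $\vint{\cdot}$ since $\Omega$ carries no $x$-dependence. \emph{(d)} $\int_0^\infty\phi\,d\nu=\vint{\psi}=cE$, $\int_0^\infty B(T)\,d\nu=acT^4$ by \eqref{eq:SB_law}, and $\int_0^\infty s\,d\nu=S$.

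For $\pd$, pulling $\sigt$ out of the integral gives $\pd=-\tfrac1{c\sigt}\vint{(\Omega\vee\Omega)\Pt(\Omega\cdot\grad_x\E)}$. By \emph{(c)} the identity piece is $\vint{(\Omega\vee\Omega)(\Omega\cdot\grad_x\E)}=\div\vint{\Omega^{\vee 3}\E}=\div\qe$. For the $\P$ piece, $\vint{\bm(\Omega\cdot\grad_x\E)}=(\div F,\;c\,\div\pe)^T$ again by \emph{(c)}, so $\vint{(\Omega\vee\Omega)\P(\Omega\cdot\grad_x\E)}=\vint{(\Omega\vee\Omega)\cE'(\bu)}\cdot(\div F,\;c\,\div\pe)^T$, which by \emph{(a)} equals $\tfrac{\p\pe}{\p E}(\div F)+c^2\tfrac{\p\pe}{\p F}(\div\pe)$. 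Subtracting the two pieces and multiplying by $-\tfrac1{c\sigt}$ reproduces exactly the stated formula for $\pd$.

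For $\pc$, the point is that $\phi$, $B(T)$ and $s$ are all isotropic in $\Omega$, so for such an $h=h(\nu)$ the vector $\vint{\bm h}$ has vanishing flux component by \emph{(b)} and energy component $4\pi\int_0^\infty h\,d\nu$, while $\vint{(\Omega\vee\Omega)h}=\tfrac{4\pi}3\bigl(\int_0^\infty h\,d\nu\bigr)\Id$. Using \emph{(a)} for the projected part,
\[
\tfrac1c\vint{(\Omega\vee\Omega)\Pt h}
 = \frac{4\pi}{c}\Bigl(\int_0^\infty h\,d\nu\Bigr)\Bigl(\third\Id-\tfrac{\p\pe}{\p E}\Bigr)
 = \frac{4\pi}{c}\Bigl(\int_0^\infty h\,d\nu\Bigr)\eta .
\]
Substituting $h=\phi,B(T),s$ with the $\nu$-integrals from \emph{(d)} and combining with the prefactors $\tfrac{\rs}{4\pi},\tfrac{\ra}{4\pi},\tfrac1{4\pi\sigt}$ collapses $\pc$ to $\eta\cdot\bigl(\rs E+\ra aT^4+\tfrac S{c\sigt}\bigr)$. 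I expect the computation to be routine bookkeeping; the only points needing care are the chain rule through $\bu=(cE,F)$ in \emph{(a)} — it is what makes $\tfrac{\p\pe}{\p E}$ (rather than a rescaling) appear cleanly in both formulas — and the observation that the isotropy of the three sources lets $\P$ act on them solely through the energy component, which is precisely what produces the single matrix factor $\eta$ in $\pc$.
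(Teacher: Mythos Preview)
Your proposal is correct and follows essentially the same route as the paper's proof: both split $\Pt=\cI-\P$, identify $\vint{(\Omega\vee\Omega)\cE'(\bu)}$ with $c\,\p\pe/\p\bu$ via differentiation under the integral, and then evaluate the two pieces using the consistency relations for $\pd$ and the isotropy of $\phi$, $B(T)$, $s$ for $\pc$. Your explicit bookkeeping of the chain rule through $\bu=(cE,F)$ is a helpful clarification of where the $c^2$ in the $\pd$ formula and the bare $\p\pe/\p E$ in $\eta$ come from, but the argument is otherwise identical to the paper's.
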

\begin{proof}
see appendix.
\end{proof}

\begin{rem}The formula for the convective
correction is independent of the specific form of $\cE$.  In particular for the
$P_1$ model, the pressure term is $\pp = \third E$ so that $\frac{ \p
\pp}{\p E} = \third \Id$ and $\pc = 0$; cf. \eqref{eq:eta_abstract}. This is
consistent with the fact that the ``$D_N$'' models in
\cite{SchFraLev11} contain only diffusive corrections.
\end{rem}

\begin{lem} \label{lem:pe_qe}
The entropy-based terms $\pe$ and $\qe$ are given by
\begin{subequations}
\begin{align}
 \pe &= \frac{E}{2}[(1- \chi_2) \Id + (3\chi_2-1)(\bn \vee \bn)] \:,\\
 \qe &= \frac{3cE}{2}[(\chi_1 - \chi_3) (\Id \vee \bn) + (5 \chi_3 - 3\chi_1)
\bn^{\vee 3}] \:,
\end{align}
\end{subequations}
where the scalars $\chi_1$, $\chi_2$, and
$\chi_3$ are defined in \eqref{eq:eddingtons}.
\end{lem}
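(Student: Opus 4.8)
The plan is to reduce everything to one–dimensional moment integrals by exploiting the rotational symmetry of the entropy ansatz. For the $PM_1$ closure $\bm=(1,\Omega)^T$, so the minimizer of Theorem~\ref{thm:DubFeu} has the form $\cE(\bu)=\cB(\ahat)=\etad'(\ahat^T\bm)$ with $\ahat^T\bm$ affine in $\Omega$. Because the moment constraints $\vint{\bm g}=(cE,F)^T$ and the cost $\cH$ are invariant under every rotation $R$ that fixes the direction of $F$, uniqueness of the dual minimizer in \eqref{eq:dual_optimization} forces $g\circ R^{-1}=g$ for all such $R$; hence the $\Omega$–linear part of $\ahat^T\bm$ points along $\bn:=F/|F|$, and $\cE(\bu)$ depends on $\Omega$ only through $\mu:=\Omega\cdot\bn$. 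Writing $G(\mu):=\int_0^\infty\cE(\bu)\,d\nu$, the defining relations then become $cE=\vint{\cE}=\vint{G(\mu)}$, $F=\vint{\Omega\,\cE}=cE\,\chi_1\,\bn$, and $cE\,\chi_k=\vint{\mu^k\,\cE}=\vint{\mu^kG(\mu)}$ for $k=1,2,3$. (The degenerate case $F=0$ is the isotropic limit $\ahat^T\bm=\mathrm{const}$, $\chi_1=\chi_3=0$, $\chi_2=\tfrac13$, for which both claimed formulas reduce correctly to $\pe=\tfrac13E\,\Id$ and $\qe=0$ independently of the undefined $\bn$.)

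Next I use that $\pe=\frac1c\vint{(\Omega\vee\Omega)\,\cE}$ and $\qe=\vint{\Omega^{\vee3}\,\cE}$ are symmetric tensors invariant under every rotation fixing $\bn$. The space of such symmetric $2$–tensors is $\operatorname{span}\{\Id,\,\bn\vee\bn\}$ and the space of such symmetric $3$–tensors is $\operatorname{span}\{\Id\vee\bn,\,\bn^{\vee3}\}$, so
\[
\pe=A\,\Id+B\,(\bn\vee\bn),\qquad \qe=P\,(\Id\vee\bn)+Q\,\bn^{\vee3}
\]
for scalars $A,B,P,Q$ depending on $\bu$ alone. It then remains to read off these scalars by contracting each identity with suitable test tensors. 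Taking the trace of the first identity and using $|\Omega|^2=1$ gives $3A+B=\frac1c\vint{\cE}=E$, while contracting twice with $\bn$ gives $A+B=\frac1c\vint{\mu^2\cE}=E\chi_2$; solving this $2\times2$ system yields $A=\tfrac E2(1-\chi_2)$ and $B=\tfrac E2(3\chi_2-1)$, which is the asserted form of $\pe$ (the classical $M_1$ Eddington–tensor identity). For $\qe$, contracting two of the three indices (a partial trace) and using $|\Omega|^2=1$ forces a vector identity of the form $(\text{combination of }P,Q)\,\bn=\vint{\Omega\,\cE}=cE\chi_1\,\bn$, and fully contracting with $\bn\otimes\bn\otimes\bn$ gives a second scalar identity involving $\vint{\mu^3\cE}=cE\chi_3$; together these are a $2\times2$ linear system for $P$ and $Q$ whose solution is exactly the pair of coefficients displayed in the statement.

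The only point requiring care — rather than any conceptual difficulty — is the combinatorial normalization of the symmetric products $\Omega\vee\Omega$, $\Id\vee\bn$ and $\bn^{\vee3}$, since it is this bookkeeping that fixes the precise numerical prefactors (the $\tfrac12$, the $\tfrac{3cE}{2}$, and the integer weights multiplying $\chi_1,\chi_3$) in the displayed formulas. The structure of the result — that $\pe$ lies in $\operatorname{span}\{\Id,\bn\vee\bn\}$, that $\qe$ lies in $\operatorname{span}\{\Id\vee\bn,\bn^{\vee3}\}$, and that only $\chi_1,\chi_2,\chi_3$ enter — is forced purely by the axisymmetry established in the first step. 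The explicit contractions are carried out in the appendix.
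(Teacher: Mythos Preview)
Your proposal is correct and follows essentially the same route as the paper: both arguments rest on the axisymmetry of $\cE(\bu)$ about $\bn$ (the paper isolates this as a separate lemma showing $\hat\alpha_1\parallel F$), then decompose $\pe$ and $\qe$ into the rotationally invariant tensors $\{\Id,\bn\vee\bn\}$ and $\{\Id\vee\bn,\bn^{\vee3}\}$ and read off the scalar coefficients as the moments $\chi_k$. The only difference is stylistic: the paper works in an explicit orthonormal basis $\{\be_1,\be_2,\be_3=\bn\}$ and simplifies via $\sum_i\be_i\vee\be_i=\Id$, whereas you invoke the invariant-tensor decomposition abstractly and extract $A,B,P,Q$ by trace/contraction identities --- but this is the same computation in different clothing, and you should simply complete the $2\times2$ system for $P,Q$ rather than defer it, since (as you yourself flag) the precise numerical prefactors hinge on the $\vee$ normalization and are the only nontrivial content left.
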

\begin{proof}
See the appendix.
\end{proof}

In slab geometry, we end up with the following expressions for the components of the pressure term $\Pi = \pe + \pc + \pd$ which can be computed from Lemma~\ref{lem:pd_pc} and Lemma~\ref{lem:pe_qe}:
\begin{gather}
\label{eq:pressures}
\pe = \chi(E,F) E\:, \q
\pc = \ds  \eta(E,F)
\left( \rs E + \ra a T^4  + \frac{S}{c \sigt} \right) \:,\\
 \pd = -\frac{1}{c \sig{t}} [ D_E(E,F) \p_x E
+D_F(E,F) \p_x F ] =: {\bf D} (\bu) \p_x \bu \:,
\end{gather}
where the convection and
diffusion coefficients are given by
\begin{gather}
\chi(E,F) = \frac{1+3\gamma^2}{3+\gamma^2}, \q\q \eta = \frac{8\gamma
^2}{3(3-\gamma ^2)}, \label{eq:chi_eta} \\
D_E(E,F) = \frac{ 3 (\gamma ^2+5) \, (\gamma ^2-1)^2 }{2\gamma ^4 (\gamma
 ^2-3)^2}  \left [ (\gamma ^2-3)\ln \left(\ds \frac{1-\gamma}{1+\gamma}
 \right)-6\gamma \right], \\
 D_F (E,F) =  \frac{9(\gamma^2+1) (\gamma^2-1)^2}{2\gamma ^5(\gamma ^2 -3)^2}
 \left [ (\gamma ^2-3) \ln \left( \frac{1-\gamma}{1+\gamma} \right) -6\gamma
 \right ],
\end{gather}
and
\begin{equation}
     \gamma = \frac{-3F}{2cE+\sqrt{(2cE)^2-3F^2}}.
\end{equation}

These coefficients are displayed in \reffig{pict:ModelCoef}. Note that 
$\chi$, $\eta$, and $D_F$ are all even functions of the ratio $F/(cE)$, while
$D_E$ is odd.

\begin{figure}
\centering
\subfloat[Convection coefficients.]{\includegraphics[scale=0.42]{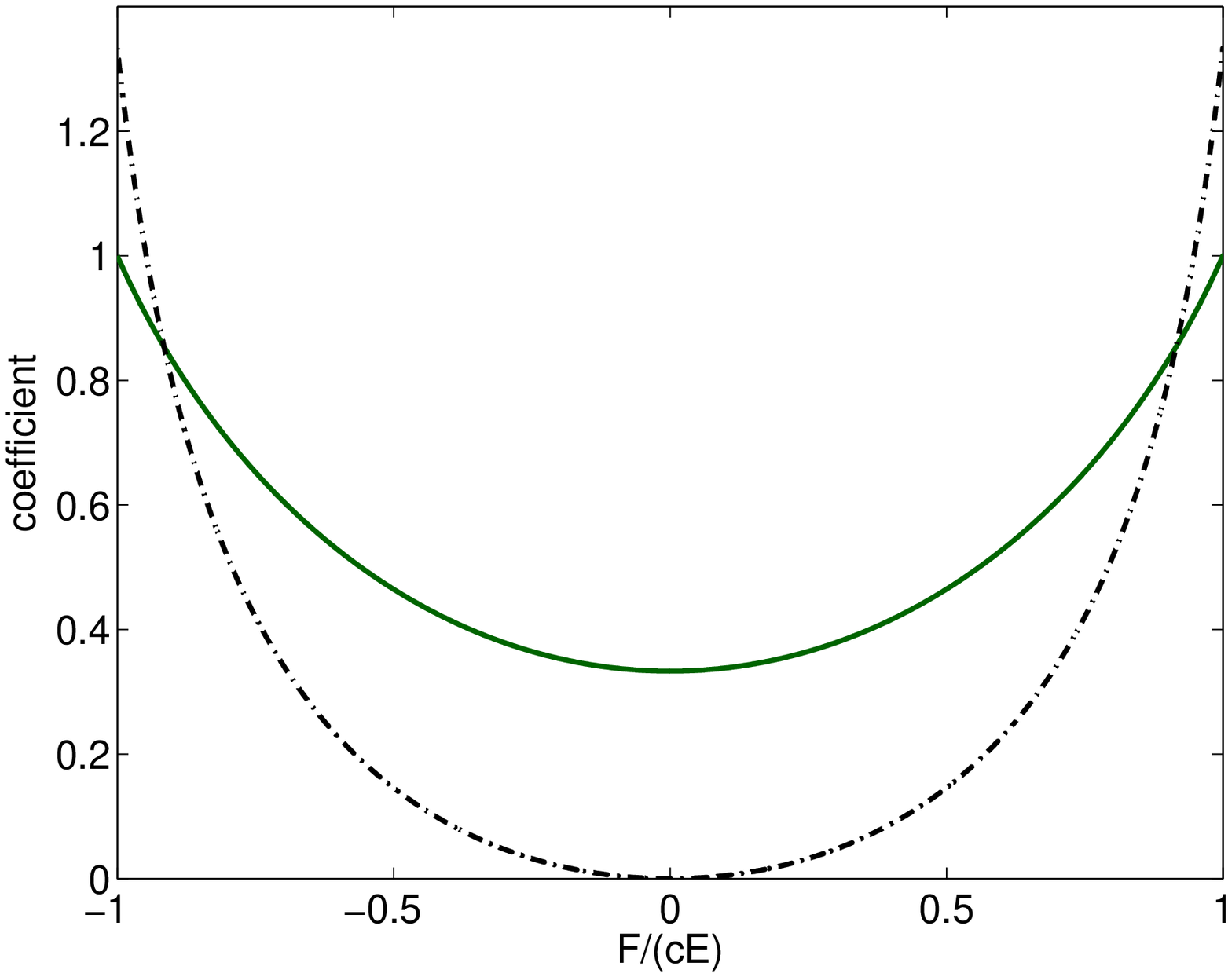}}
\,
\subfloat[Diffusion coefficients.]{\includegraphics[scale=0.42]{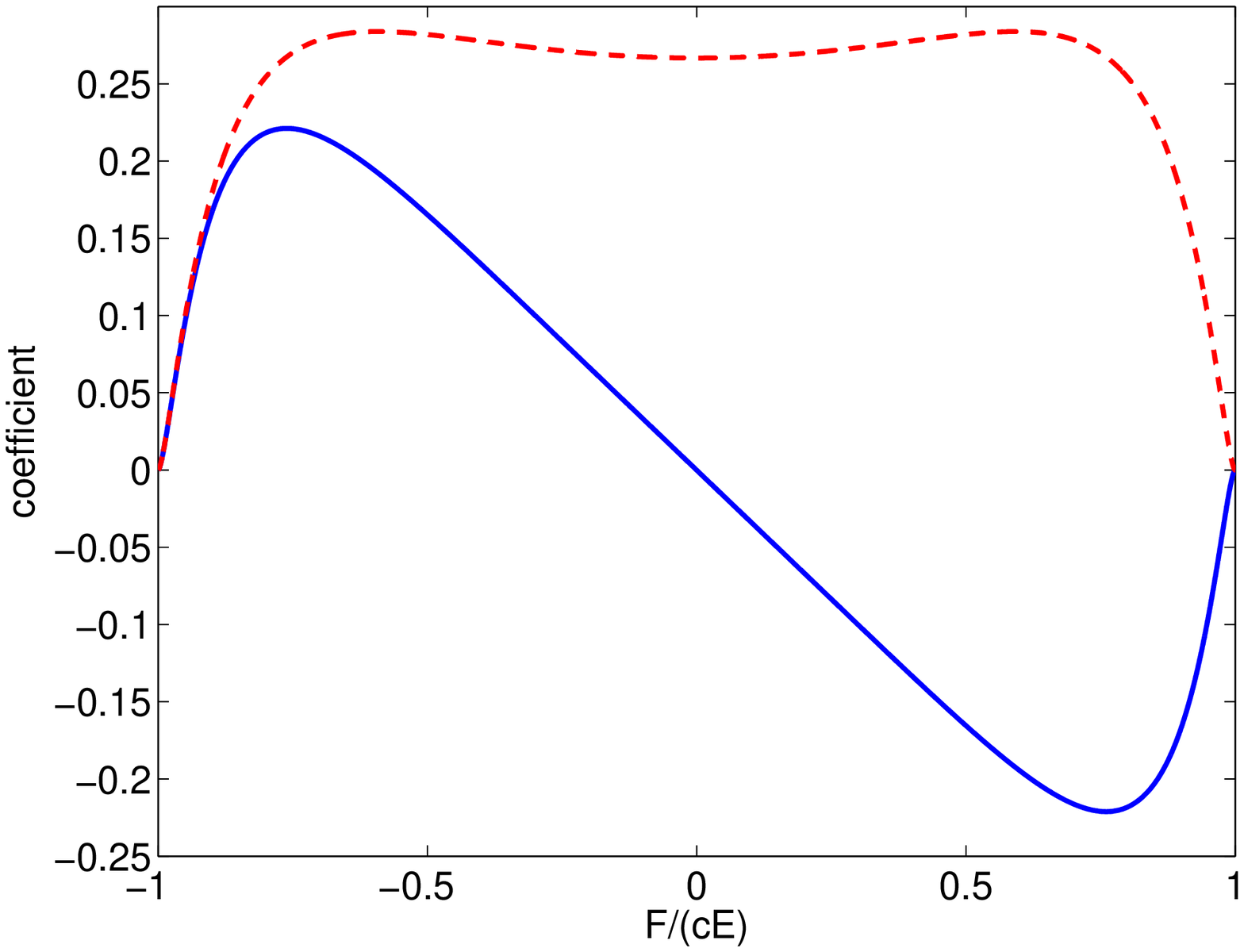}}
\caption{Perturbed M1 model coefficients. Left: $\chi$ (dark green solid line)
and  $\eta$ (black dash-dot line).  Right:  $D_E$ (blue solid line) and $D_F$
(red dashed line).}
\label{pict:ModelCoef}
\end{figure}%

\section{Numerical Simulation using Discontinuous Galerkin}
\label{sec:dg}

In slab geometries, the diffusion-corrected $M_1$ model and the material energy
\eqref{eq:material_energy} reduce to
\seqalign{\label{eq:m1_system}}{
\p_t \bu + \p_x \bff(\bu, \p_x\bu) &= \bs(\bu), \quad (x,t) \in
(x_L,x_R)\times
(0,t_{\tx{final}}), \\
\p_t T &= \frac{c \sig{a}}{C_v} (E -aT^4),
\intertext{where}
\bu =\begin{bmatrix} cE \\ F \end{bmatrix},
\quad
\bs(\bu) &= \begin{bmatrix}
-c^2\sig{a}(E -aT^4) + cS \\
-c\sig{t} F
\end{bmatrix},
\q \bff(\bu, \p_x\bu) = \begin{bmatrix}
cF \\
c^2 \pdel
\end{bmatrix},
}
$\pdel = \pe + \delta (\pc + \pd)$ and $C_v = \frac{\p e}{\p T}$ is the specific
heat at constant volume.  In this setting, the convective flux of the $PM_1$
model is hyperbolic. 
\begin{prop}
\label{prop:pM1_hyperbolic}
The perturbed $M_1$ system in slab geometry is hyperbolic if $\pd=0$ and $|F| <
cE$.
\end{prop}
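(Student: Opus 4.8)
The idea is to reduce hyperbolicity to the sign of a single scalar and then exploit the realizability bound obeyed by the $M_1$ Eddington factor. With $\pd=0$ the slab system \eqref{eq:m1_system} is the $2\times 2$ balance law $\p_t\bu+\p_x\bff(\bu)=\bs(\bu)$ with $\bu=(cE,F)^{T}$, $\bff(\bu)=(cF,\ c^{2}\pdel)^{T}$ and $\pdel=\pe+\delta\pc$; the material data $T,\sigt,\siga,S$ occur only in $\bs$ and as frozen coefficients, never through the $\bu$-dependence of $\bff$. First I would write down the flux Jacobian, which has the companion-type form
\[
\mathbf A=\frac{\p\bff}{\p\bu}=\begin{pmatrix} 0 & c\\ c\,\p_E\pdel & c^{2}\,\p_F\pdel\end{pmatrix},
\]
whose super-diagonal entry $c$ is nonzero. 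Hence $\mathbf A$ is $\R$-diagonalizable — equivalently the system is hyperbolic — if and only if the characteristic polynomial $\lambda^{2}-c^{2}(\p_F\pdel)\lambda-c^{2}(\p_E\pdel)$ has a strictly positive discriminant, i.e.\ if and only if $\Delta:=c^{2}(\p_F\pdel)^{2}+4\,\p_E\pdel>0$.

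Next I would compute $\p_E\pdel$ and $\p_F\pdel$ from the explicit closure. Writing $\pe=\chi E$ and $\pc=\eta\,(\rs E+c_{0})$ with $c_{0}:=\ra aT^{4}+S/(c\sigt)\ge 0$ (a constant in $\bu$), and recalling from \eqref{eq:chi_eta} that $\chi,\eta$ and $\gamma$ depend on $\bu$ only through the normalized flux $f:=F/(cE)\in(-1,1)$, the relations $\p_E f=-f/E$, $\p_F f=1/(cE)$ give, after a short chain-rule calculation,
\[
c\,\p_F\pdel=\chi'(f)+\delta\,\eta'(f)\,\rho=:\zeta,\qquad \p_E\pdel=\chi(f)-f\,\zeta+\delta\,\eta(f)\,\rs,
\]
where $\rho:=\rs+c_{0}/E\ge 0$ and primes denote $d/df$. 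Substituting, the discriminant collapses to the compact expression
\[
\Delta=\zeta^{2}-4f\,\zeta+4\chi(f)+4\,\delta\,\eta(f)\,\rs.
\]

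The crux of the argument is then to view $\zeta^{2}-4f\zeta+4\chi(f)$ as a quadratic in the real number $\zeta$: its discriminant in $\zeta$ equals $16\bigl(f^{2}-\chi(f)\bigr)$, which is strictly negative for $|f|<1$ because the $M_1$ closure obeys the strict realizability bound $\chi(f)>f^{2}$. I would justify that bound either conceptually — for $|F|<cE$ the minimum-entropy density $\cB(\ahat)$ is strictly positive (Theorem~\ref{thm:DubFeu}), so by strict Jensen $\chi=\vint{(\Omega\cdot\bn)^{2}\cB}/(cE)>\bigl(\vint{(\Omega\cdot\bn)\cB}/(cE)\bigr)^{2}=f^{2}$ — or directly from the closed form in \eqref{eq:chi_eta}, which yields $\chi(f)-f^{2}=\tfrac13\bigl(2\sqrt{1-\tfrac34 f^{2}}-1\bigr)^{2}>0$. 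Consequently $\zeta^{2}-4f\zeta+4\chi(f)>0$ for every real $\zeta$, in particular for $\zeta=c\,\p_F\pdel$; and since $\eta(f)\ge 0$ (its closed form has $\gamma^{2}<1<3$), $\rs=\sigs/\sigt\in[0,1]$ and $\delta\ge 0$, the leftover term $4\delta\eta(f)\rs$ is nonnegative. Hence $\Delta>0$, $\mathbf A$ has two distinct real eigenvalues, and the system is hyperbolic.

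I expect the main obstacle to be organizational rather than conceptual: carrying the chain rule cleanly through both the $f$-dependence of $\chi,\eta$ and the explicit $E$-dependence of $\rho$, so as to land exactly on the completed form above in which the bound $\chi>f^{2}$ can be invoked verbatim. It is worth stressing that strictness of $|F|<cE$ is essential — at $|F|=cE$ one has $\chi=f^{2}$ and $\zeta$ finite, so $\Delta$ degenerates, the two characteristic speeds coalesce and $\mathbf A$ fails to be diagonalizable; this is precisely the free-streaming limit. As a sanity check, setting $\delta=0$ recovers the classical hyperbolicity of $M_1$ (Theorem~\ref{thm:DubFeu}), and the computation shows the convective correction $\delta\pc$ merely shifts $c\,\p_F\pdel$ by $\delta\eta'\rho$ and adds $4\delta\eta\rs\ge 0$ to $\Delta$, neither of which can destroy positivity.
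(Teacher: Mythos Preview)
Your argument is correct and follows essentially the same route as the paper: compute the discriminant of the $2\times2$ flux Jacobian and show it is positive using the two scalar bounds $\chi(f)>f^{2}$ and $\eta(f)\ge 0$. The paper phrases the last step as completing the square, $(\xi'-2f)^{2}+4r_{\rm s}\eta+4(\chi-f^{2})$, while you phrase it as ``quadratic in $\zeta$ with negative discriminant''; these are the same observation. Your version is actually slightly more careful than the appendix proof, since you retain the $\eta$-dependence of the $T^{4}$ and $S$ contributions through $\rho=r_{\rm s}+c_{0}/E$ and carry the control parameter $\delta$; neither refinement changes the structure of the argument.
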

\begin{proof}
 The proof is a direct calculation, given in the appendix.
\end{proof}

We simulate the system \eqref{eq:m1_system} using a Runge-Kutta discontinuous
Galerkin (RKDG) method.  The RKDG method is a method of lines: the DG
discretization is only applied to spatial variables while time discretization is
achieved by explicit Runge-Kutta time integrators. The presentation here is
rather brief and relies on details found in \cite{OlbHauFra11}, where the method
was applied to the $M_1$ model. A general description of the RKDG method can be
found, for example, in \cite{CocLinShu89, Cockburn-Shu-Karniadakis-2000}.

\subsection{Spatial Discretization}

We divide the computational domain $(x_L,x_R)$ into $J$ cells with edges
\[ x_L=x_{1/2} < x_{3/2} < \ldots < x_{J+1/2}=x_R, \] 
and let $x_j$ denote the center of each cell $I_j=(x_{j-1/2}, x_{j+1/2})$.
We let $h_j:=x_{j+1/2}-x_{j-1/2}$ be the length of the interval $I_j$ and set
$h:=\max_j h_j$.  We denote the finite-dimensional approximation space by
\[ V_h^k = \{v\in L^1(x_L,x_R): \, v_{|_{I_j}} \in {\cal P}^k(I_j), \,
j=1,\dots,J \}, \]
where ${\cal P}^k(I_j)$ is the space of polynomials of degree at most $k$ on the
interval $I_j$.

The semidiscrete DG scheme is derived from a weak formulation of
\eqref{eq:m1_system}.
However, following \cite{CocShu98} we first reduce the convection-diffusion
equations
\eqref{eq:m1_system} to a system of first-order equations by introducing the
auxiliary variable $\bv$:
\seqalign{}{
\p_t \bu + \p_x \bff(\bu, \bv) &= \bs(\bu), \label{eq:hyperbolic} \\
 \p_x \bu &= \bv, \\
 \p_t T &= \frac{c \sig{a}}{C_v} (E -aT^4).
}
The exact solutions $\bu(\cdot,t)$, $\bv (\cdot,t)$ and $T(\cdot,t)$
are then replaced by approximations $\bu_h(\cdot,t)$, $\bv_h(\cdot,t)$ $\in
V_h^k\times V_h^k$ and $T_h(\cdot,t)\in V_h^k$, and the resulting set of
equations is required to hold for all test functions
$b_h\in V_h^k$:
\seqalign{\label{eq:weakform}}{
\int_{I_j} b_h(x)\p_t \bu_h(x,t) dx - \int_{I_j}
\bff(\bu_h(x,t),\bv_h(x,t))
\p_x
b_h(x) dx & \\
+ \left \llbracket \bff b_h(x) \right\rrbracket_j &=  \int_{I_j}
\bs(\bu_h(x,t))b_h(x)  dx \nonumber \\
\int_{I_j} b_h(x) \bv_h(x,t) dx + \int_{I_j} \bu_h(x,t) \p_x b_h(x)
dx - \left \llbracket \bu b_h(x) \right\rrbracket_j &=  0 \\
\int_{I_j} b_h(x) \p_t T_h(x,t) dx =\int_{I_j} b_h(x) \frac{c
\sig{a}(x)}{C_v} (E_h(x,t) &-aT_h^4(x,t)) dx.
}
Here we use the bracket notation:
\begin{align}
\left \llbracket \bff b_h(x) \right \rrbracket_j=
\bff_{j+1/2}b_h(x_{j+1/2}^-)
-\bff_{j-1/2}b_h(x_{j-1/2}^+)
\end{align}
where
\begin{equation}
\label{eq:edge_fluxes}
\bff_{j\pm 1/2}(\bu, \bv) = \bff(\bu(x_{j\pm 1/2},t), \bv(x_{j\pm
1/2},t)) 
\end{equation}
and
\begin{align}
\label{eq:rl_limits}
b_h(x_{j + 1/2}^-) = \lim_{\eps \to 0^+} b_h
(x_{j+1/2} -\eps), \quad b_h(x_{j - 1/2}^+) = \lim_{\eps \to 0^+}
b_h (x_{j-1/2} +\eps)
\end{align}
are the right and left limits of $b_h$ at the cell interfaces $x_{j\pm
1/2}$.  The term
$\left \llbracket \bu b_h(x) \right\rrbracket_j$ is defined in an
analogous fashion.

Since the components of $\bu_h(.,t)$ and $\bv_h(.,t)$ are piecewise polynomials,
the edge values
of $\bu$ and $\bv$ in \eqref{eq:edge_fluxes} are not strictly defined. 
Thus, the nonlinear flux function $\bff$ is replaced
by a numerical flux
$\hat{\bff}$
which depends on  the pointwise limits of $\bu_h$, $\bv_h$ on either side of the
edge at
$x_{j\pm1/2}$: 
\begin{align}
\hat{\bff}_{j \pm1/2} = \hat{\bff}(\bu_h(x^-_{j\pm1/2},t),
\bu_h(x^+_{j\pm1/2},t),
\bv_h(x^-_{j\pm1/2},t), \bv_h(x^+_{j\pm1/2},t)).
\end{align}
The notations for $\hat{\bu}$ carry over analogously.

\begin{figure}
\centering
\includegraphics[scale=0.45]{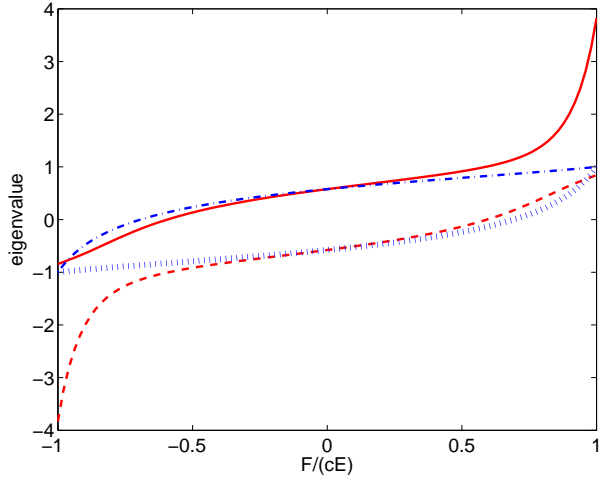}
\caption{Eigenvalues of the hyperbolic flux Jacobian: M1 model (blue lines),
perturbed M1 model (red lines).} \label{pict:EigVal}
\end{figure}%

It remains to choose suitable numerical fluxes $\hat{\bff}$ and $\hat{\bu}$.
Since \eqref{eq:m1_system} has both a convective flux
\begin{equation} 
\fc(\bu) := \begin{bmatrix}
cF \\
c^2 \, \pe + \ds \frac{c}{\sig{t}} \eta(E,F)  \left ( c\sig{s}  E +
c\sig{a} a T^4 + S \right )
\end{bmatrix}
\label{eq:conv_flux}
\end{equation}
and a diffusive flux
\begin{equation} 
\fd(\bu, \bv) = \begin{bmatrix}
0 \\
c^2 \, {\bf D} (\bu) \cdot \bv
\end{bmatrix} \:
\end{equation}
the choice is not obvious. Several approaches have been presented in
literature \cite{BasReb97,XuShu10,LiuYan09,NguPerCoc09}.
In \cite{BasReb97}, the prescription for the diffusive term is given by
\begin{align}
\fdh_{j\pm 1/2} &= \frac12 \left [ \fd(\bu^-_{j\pm 1/2} ,
\bv^-_{j\pm 1/2} ) + \fd(\bu^+_{j\pm 1/2} , \bv^+_{j\pm 1/2} ) \right ], \\
\hat{\bu}_{j\pm 1/2} &= \frac{1}{2} \left [ \bu_{j\pm 1/2}^- + \bu_{j\pm 1/2}^+
\right ].
\end{align}
Combining this term with the Lax-Friedrichs flux
for $\fc(\bu)$ gives the following total numerical flux:
\begin{align}
\label{eq:LFflux}
\hat{\bff}_{j\pm 1/2} = \frac{1}{2} \left [ \bff(\bu_{j\pm 1/2}^-,\bv_{j\pm
1/2}^-) +\bff(\bu_{j\pm 1/2}^+,\bv_{j\pm 1/2}^+)
-\lambda
(\bu_{j\pm 1/2}^+ -\bu_{j\pm 1/2}^-) \right ] \, ,
\end{align}
where $\lambda$ is the largest
magnitude of any eigenvalue of the Jacobian associated with $\fc$.  
These eigenvalues, in general, depend on material properties, the temperature
$T$ and the source term $S$.
In contrast to the M1 model, they are not bounded by the speed of light $c$.
For example, neglecting the temperature and source the maximum value is
approximately $9.12\, c$. We instead use the smaller value of $\lambda
= c$, which is the particle speed in the transport equation and is consistent
with the application of the control parameter to enforce realizability (see
Section \ref{subsec:pert}). \reffig{pict:EigVal} shows the comparison of
eigenvalues for the M1
and perturbed M1 modeling when $c=1$, $\sig{s}=1$, $\sig{t}=3$, $T=0=S$.

%

The DG solutions $\bu_h$, $\bv_h$ and $T_h$ are expanded in terms of local basis
functions
$\{
b^j_l \}_{l=0}^k$ for ${\cal P}^k(I_j)$ in each cell $I_j$:
\begin{align*}
\bu^j_h(x,t) = \sum_{l=0}^k \bu^j_l(t) b^j_l (x), \q \bv^j_h(x,t) =
\sum_{l=0}^k \bv^j_l(t) b^j_l (x), \q T^j_h(x,t) = \sum_{l=0}^k T^j_l(t) b^j_l
(x) \quad \text{for } x\in I_j.
\end{align*}
%
%
The standard choice of basis for ${\cal P}(I_j)$ is generated by
Legendre polynomials $P_l$ that are defined on the reference cell $[-1,1]$:
\begin{align} 
b^j_{l}(x) = P_l\left ( \frac{2(x-x_j)}{h_j} \right )\:,
\quad  j \in \{1,\dots,J\}\,, l\in\{0,\dots,k\}\:,
\end{align}
and normalized so that
\begin{equation}
 \int_{-1}^1 P_l(y) P_m(y) dy = \frac{2}{2m+1} \delta_{l,m}.
\end{equation}
With  $\xi_j(y) := x_j + y h_j/2$, this gives a formulation
defined on the reference cell:
\seqalign{\label{eq:spaceDG}}{
&\frac{h_j}{2m+1} \p_t \bu^j_{m}(t) - \int_{-1}^1 \bff(\bu^j_h(\xi_j(y),t),
\bv^j_h(\xi_j(y),t)) \partial_y P_m(y) dy   \label{eq:spaceDG_u} \\ 
& \hspace{3cm} + \hat{\bff}_{j+1/2} - (-1)^m \hat{\bff}_{j-1/2} =
\frac{h_j}{2}\int_{-1}^1 \bs(\bu^j_h(\xi_j(y),t)) P_m(y) dy, \nonumber \\
&\frac{h_j}{2m+1} \bv^j_{m}(t) + \sum_{l=0}^k \bu_l^j(t) \: {\cal C}_{l,m} -
\hat{\bu}_{j+1/2} + (-1)^m \hat{\bu}_{j-1/2} = 0,  \label{eq:spaceDG_w} \\
& \frac{h_j}{2m+1} \p_t T_m^j(t)  = \frac{c h_j}{2 C_v} \sum_{l=0}^k E_l^j(t)
\left ( \int_{-1}^1 P_m(y) P_l(y) \sig{a}(\xi_j(y)) \right ) dy
\label{eq:spaceDG_T} \\
 &\hspace{9cm} -\frac{ach_j}{2C_v} \int_{-1}^1 P_m(y) {T_h^j}^4(\xi_j(y),t) \sig{a}(y) dy, \nonumber
}
where
\balign{
{\cal C}_{l,m} = \int_{-1}^1 P_l(y)  \p_y P_m(y) dy.
}
The remaining integrals are calculated by a quadrature rule.  Note that
\eqref{eq:spaceDG_w} can be solved locally for $\bv^j_{m}(t)$ in each
cell $I_j$, which can then be substituted back into \eqref{eq:spaceDG_u}.

\subsection{Time Discretization: Explicit SSP Runge-Kutta Schemes}
\label{subsect:RK}
The purpose of high-order, strong stability preserving (SSP) Runge-Kutta time
integration methods is to achieve high-order accuracy in time while preserving
desirable properties of the forward Euler method (for a review, see
\cite{GotKetShu09}). In this paper, we only use
explicit schemes, which compute values of the unknowns at several
intermediate stages. Each stage is a convex combination of forward Euler
operators and this usually leads to modified CFL restrictions.
%
%

The equations in \eqref{eq:spaceDG} form a system of ODEs
for the coefficients $\bu^j_{m}(t)$ and $T_m^j(t)$.  
For all $j \in \{ 1,\ldots,J\}$ and $m \in \{0,\ldots,k\}$, we write this system
in the abstract form:
\begin{align}
\p_t \bu^j_{m}(t) &= {\cal L}^j_{\bu, m} (\bu^{j-1}_0,\ldots,\bu^{j-1}_k,
\bu^j_0,\ldots,\bu^j_k,\bu^{j+1}_0,\ldots,\bu^{j+1}_k ), \\
\p_t T^j_{m}(t) &= {\cal L}^j_{T,m} (E^j_0,\ldots,E^j_k, T^j_0,\ldots,T^j_k ).
\end{align}
Here, ${\cal L}^j_{\bu,m}$ and ${\cal L}^j_{T,m}$ are the respective right-hand
sides of the ODEs.

Let $\{t^n\}_{n=0}^N$ be an equidistant partition of
$[0,t_{\tx{final}}]$ and set $\Delta t := t_{\tx{final}}/N$.
Let $\Lambda$ denote the application of a generic slope limiter, and 
let $\pi_{V^m_h}$ be the orthogonal projection onto the finite dimensional space
$V_h^m$.
Note that $\Lambda$ is applied at every Runge-Kutta stage.
The algorithm for the optimal third-order SSP Runge-Kutta (SSPRK(3,3))
method \cite{ShuOsh89} reads as follows: 
\begin{itemize}
\item For all $j \in \{1,\ldots,J\}$ and $m\in \{0,\ldots,k\}$, set 
$\bu_m^{j,0} = \Lambda \{ \pi_{V^m_h}(\bu_0) \}$.
\item For all $ n \in \{0,\ldots,N-1\}$, $j \in \{1,\ldots,J\}$, and $m\in
\{0,\ldots,k\}$,
\begin{enumerate}
\item Compute the intermediate stages
\begin{align}
\bu_m^{j,(1)} &= \Lambda \left\{ \bu_m^{j,n}  +\dt  {\cal
L}^j_{\bu,m}(\bu_h^{j,(0)}) \right \} \nonumber	\\
\bu_m^{j,(2)} &= \Lambda \left\{ \frac{3}{4} \bu_m^{j,n}  +\frac{1}{4} 
\bu_m^{j,(1)} +\frac{1}{4} \dt  {\cal L}^j_{\bu,m}(\bu_h^{j,(1)}) \right\}  
\label{SSP3} \\
\bu_m^{j,(3)} &= \Lambda \left \{ \frac{1}{3} \bu_m^{j,n}  +\frac{2}{3} 
\bu_m^{j,(2)} +\frac{2}{3} \dt  {\cal L}^j_{\bu,m}(\bu_h^{j,(2)}) \right\} 
\nonumber
\end{align}
\item Set $\bu_m^{j,n+1} = \bu_m^{j,(3)}$.
\end{enumerate}
\end{itemize}

\noindent For the sake of completeness, we also state the optimal fourth order scheme
SSPRK(5,4) \cite{GotKetShu09}:
\begin{align}
\bu_m^{j,(1)} &= \Lambda  \{ \bu_h^{(n)}  + 0.391752226571890 \, \dt  \, {\cal
L}^j_{\bu,m}(\bu_m^{j,(n)}  )  \}   \nonumber\\
\bu_m^{j,(2)}  &= \Lambda \{ 0.444370493651235 \, \bu_m^{j,(n)}  +
0.555629506348765 \, \bu_m^{j,(1)}   \nonumber \\
& \quad + 0.368410593050371 \, \dt \,  {\cal L}^j_{\bu,m}(\bu_m^{j,(1)}  )
\} \nonumber \\
\bu_m^{j,(3)}  &= \Lambda  \{ 0.620101851488403 \, \bu_m^{j,(n)}  +
0.379898148511597 \, \bu_m^{j,(2)}   \nonumber \\
& \quad + 0.251891774271694  \, \dt \,  {\cal L}^j_{\bu,m}(\bu_m^{j,(2)}  )
\}  \label{SSP4} \\
\bu_m^{j,(4)}  &= \Lambda \{ 0.178079954393132 \, \bu_m^{j,(n)}  +
0.821920045606868 \, \bu_m^{j,(3)} \nonumber \\
& \quad + 0.544974750228521 \, \dt  \, {\cal L}^j_{\bu,m}(\bu_m^{j,(3)}  ) \}
\nonumber  \\
\bu_h^{(n+1)}  &= \Lambda  \{ 0.517231671970585 \, \bu_m^{j,(2)}
+0.096059710526147 \, \bu_m^{j,(3)}  \nonumber \\ 
& \quad +0.386708617503269 \, \bu_m^{j,(4)} +0.063692468666290 \, \dt \,
{\cal L}^j_{\bu,m}(\bu_m^{j,(3)} )  \nonumber \\ 
& \quad +0.226007483236906 \, \dt \, {\cal L}^j_{\bu,m}(\bu_m^{j,(4)} )  \}.
\nonumber 
\end{align}
Note that SSPRK(3,3) permits a timestep of the same size as forward
Euler, while the SSPRK(5,4) method is less restrictive, allowing for a time step
that is $1.508$ times larger the forward Euler scheme.



\subsection{Limiters} \label{subsec:limiters}
As in \cite{OlbHauFra11}, two types of limiters are used.  The first is
standard; it is used to suppress spurious oscillations and maintain stability. 
There are many such limiters available. In this paper, we apply the moment
limiter from \cite{BurSagBru01}, which is a modification of the original limiter
in \cite{Biswasa-Devine-Flaherty-94}. This limiter is applied to the variables
$\bu$, but not the auxiliary variables $\bv$ or the temperature $T$. Additional
details can be found in \cite{OlbHauFra11}.

\subsubsection{Realizability-Preserving Limiter}
The second limiter is a \textit{realizability-preserving limiter} which is
needed to ensure that the cell averages of $E$ and $F$ satisfy the realizability
condition \eqref{eq:real} at each stage of the numerical computation.  The
limiter is based on the work from \cite{ZhaShu10a} and
\cite{ZhaShu10b} and is very similar to what was done in \cite{OlbHauFra11} for
the $M_1$ model. The major difference here is the addition of the control
parameter $\delta$.

An essential ingredient of the realizability limiter is the Gauss-Lobatto
quadrature set
\begin{equation} \{ x_{j-1/2}= \hat{x}_j^1, \hat{x}_j^2,\ldots,
\hat{x}_j^{M-1}, \hat{x}_j^M = x_{j+1/2} \} \subset I_j, \end{equation}
where, for a spatial reconstruction of order $k$, $M$ is the smallest integer
such that $2M-3 \geq 2k + 1$.  This condition on $M$ ensures accuracy of the
scheme \cite{CocHouShu89}.  The weaker condition $2M-3 \geq k$ ensures that the
quadrature integrates elements of the approximation space $V_h^k$ exactly.

The realizability limiter is defined in order to ensure that
$\bu_h(\hat{x}_j^\ell) \in \cR_2$ at each point $\hat{x}_j^\ell$ in the
quadrature set. However, we enforce the convexity condition indirectly by
requiring the positivity of the intermediate quantities%
\footnote{The meaning of all subsequent subscripts, superscripts and
adornments of $Q$ and $R$ will be inherited from analogous definitions for $E$
and $F$.}%
\begin{equation}
 Q := \frac{c E +  F }{2}
\quand
R := \frac{c E - F }{2} \:.
\label{eq:QR}
 \end{equation}
The inverse transformation that maps $(Q,R) \mapsto (cE,F)$ is given by

\begin{equation}
E = \frac{ Q + R}{c}
\quand
F =  Q - R \:.
\label{eq:QRinv}
\end{equation}
An additional limiter is also used to enforce the positivity of the temperature
reconstruction at each quadrature point.

We now proceed to define the limiters. Let $\bu_h^{j,n} = (c E_h^{j,n},
F_h^{j,n})$ and $T_h^{j,n}$ be the approximations of $\bu$ and $T$ in cell
$I_j$ at time $t^n$, and let
$\hat{\bu}_h^{j,n}$ and $\hat{T}_h^{j,n}$ denote the modifications of
$\bu_h^{j,n}$ and $T_h^{j,n}$ that are generated by the
limiting.  We assume that the cell average of
$\bu_h^{j,n}$, which we denote
by $\wbar{\bu}_{h}^{j,n}$, is realizable, i.e., $\wbar{\bu}_{h}^{j,n} \in {\cal
R}_2$. We also assume that the cell average of
$T_h^{j,n}$, which we denote by $\wbar{T}_h^{j,n}$, is positive. Let
$Q_h^{j,n}(x)$
and $R_h^{j,n} (x)$ be the approximations of $Q$ and $R$,
respectively, and define limited variables by
\seqalign{\label{eq:intermQuant}}{
 \hat{Q}_h^{j,n}(x) &= \theta^{j,n}_Q Q_h^{j,n}(x) + (1-\theta^{j,n}_Q)
\wbar{Q}_h^{j,n} \:, \\
  \hat{R}_h^{j,n}(x) &= \theta^{j,n}_R R_h^{j,n}(x) + (1-\theta^{j,n}_R)
\wbar{R}_h^{j,n} \:, \\
	\hat{T}_h^{j,n}(x) &= \theta^{j,n}_T T_h^{j,n}(x) + (1-\theta^{j,n}_T)
\wbar{T}_h^{j,n} \:,
\intertext{where}
 \theta^{j,n}_Q &:= \min \left\{ \frac{\wbar{Q}_h^{j,n}-\veps
/2}{\wbar{Q}_h^{j,n}-Q^{j,n}_{\min}},1 \right\}
  \,,  \label{eq:thetaQ} \quad
 Q^{j,n}_{\min} := \min_{\ell=1,\ldots,M} Q^{j,n}_h(\xjalpha)\:, \\
 \theta^{j,n}_R&:= \min \left\{ \frac{\wbar{R}_h^{j,n}-\veps
/2}{\wbar{R}_h^{j,n}-R^{j,n}_{\min}},1 \right\}
  \,, \label{eq:thetaR} \quad
 R^{j,n}_{\min} := \min_{\ell=1,\ldots,M} R^{j,n}_h(\xjalpha)\:, \\
 \theta^{j,n}_T&:= \min \left\{
\frac{\wbar{T}_h^{j,n}}{\wbar{T}_h^{j,n}-T^{j,n}_{\min}},1 \right\}
  \,, \label{eq:thetaT} \quad
 T^{j,n}_{\min} := \min_{\ell=1,\ldots,M} T^{j,n}_h(\xjalpha)\:.
}
The parameter $\varepsilon >0$ is chosen to maintain numerical stability
with finite precision arithmetic; its value should be small relative to the
magnitude of the variables in a given problem. The components of
$\hat{\bu}_h^{j,n}$ are then defined using \eqref{eq:QRinv}. 
They satisfy the following property which is a key ingredient for maintaining
realizability in the RKDG scheme.

\begin{lem}[\cite{OlbHauFra11}]
If $\wbar{\bu}_h^{j,n} \in \cR_2$ (respectively: $\wbar{T}_h^{j,n} \geq 0$),
then $\hat{\bu}_h^{j,n}(\hat{x}_j^\ell) \in \cR_2^\veps:= \cR_2 + [\veps,0]^T$
(respectively: $\hat{T}_h^{j,n}(\hat{x}_j^\ell) \geq 0$) for $\ell =
1,\ldots,M$.
\end{lem}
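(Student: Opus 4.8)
The plan is to verify the claim by tracking how the convex-combination formulas in \eqref{eq:intermQuant} interact with the linearity of the moment-to-$(Q,R)$ map and the convexity of the relevant constraint sets. The key structural observation is that the transformation $(cE,F) \mapsto (Q,R)$ in \eqref{eq:QR} is linear and invertible, and that realizability $\bu \in \cR_2$ is equivalent to $|F| \le cE$, which in the $(Q,R)$ variables is simply the condition $Q \ge 0$ and $R \ge 0$ (since $cE = Q+R$ and $F = Q-R$). Thus it suffices to show that the limited quantities $\hat Q_h^{j,n}(\hat x_j^\ell)$ and $\hat R_h^{j,n}(\hat x_j^\ell)$ are each $\ge \veps/2$ at every Gauss–Lobatto node $\hat x_j^\ell$; then $\hat E = (\hat Q + \hat R)/c \ge \veps/c$ and $|\hat F| = |\hat Q - \hat R| \le \hat Q + \hat R = c \hat E$, which places $\hat\bu_h^{j,n}(\hat x_j^\ell)$ in the shifted set $\cR_2^\veps = \cR_2 + [\veps,0]^T$. (I would double-check the exact relation between the $\veps/2$ shift on $Q,R$ and the $\veps$ shift on $cE$; this is a bookkeeping point, not a conceptual one.) The temperature statement is the degenerate one-variable analogue and follows by the same argument with the lower bound $0$ instead of $\veps/2$.

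First I would reduce to a single scalar statement: given a polynomial $p \in \cP^k(I_j)$ with cell average $\bar p$ satisfying $\bar p \ge \veps/2$ (resp. $\bar p \ge 0$ for temperature), and given $p_{\min} := \min_\ell p(\hat x_j^\ell)$, the limited polynomial $\hat p(x) := \theta\, p(x) + (1-\theta)\bar p$ with $\theta = \min\{ (\bar p - \veps/2)/(\bar p - p_{\min}),\, 1\}$ satisfies $\hat p(\hat x_j^\ell) \ge \veps/2$ for all $\ell$. Apply this to $p = Q_h^{j,n}$, $p = R_h^{j,n}$, and (with threshold $0$) $p = T_h^{j,n}$. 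The fact that $\bar Q_h^{j,n} \ge 0$ and $\bar R_h^{j,n} \ge 0$ — indeed that they are the exact cell averages needed — follows because $\bar\bu_h^{j,n} \in \cR_2$ by hypothesis and the map to $(Q,R)$ is linear, so the cell average commutes with it.

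Next, the scalar claim itself: if $p_{\min} \ge \veps/2$ then $\theta = 1$ (the first argument of the $\min$ exceeds $1$ when $\bar p \ge \veps/2 > p_{\min}$ fails, i.e. when $p_{\min}\ge \veps/2$), $\hat p = p$, and there is nothing to prove. If $p_{\min} < \veps/2 \le \bar p$, then $\bar p - p_{\min} > 0$ so $\theta \in [0,1)$ is well defined, and for any node $\hat x_j^\ell$ we write $\hat p(\hat x_j^\ell) = \bar p + \theta\,(p(\hat x_j^\ell) - \bar p) \ge \bar p + \theta\,(p_{\min} - \bar p) = \bar p + \theta\,(p_{\min} - \bar p)$; substituting $\theta = (\bar p - \veps/2)/(\bar p - p_{\min})$ gives exactly $\bar p - (\bar p - \veps/2) = \veps/2$. (Here I use $p(\hat x_j^\ell) - \bar p \ge p_{\min} - \bar p$, valid because $p_{\min} \le p(\hat x_j^\ell)$ and $\theta \ge 0$, together with $p_{\min} - \bar p < 0$.) This is precisely the identity that makes the one-sided cutoff preserve the bound; the analogous computation with $\veps/2$ replaced by $0$ handles $T$.

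The main obstacle — such as it is — is not any single estimate but making sure the three cutoffs are applied in a way that the resulting $(\hat Q, \hat R)$ really do recombine, via \eqref{eq:QRinv}, into a point of $\cR_2^\veps$ with the advertised shift vector $[\veps,0]^T$ rather than some other shift. Since $Q$ and $R$ are limited independently, one gets $\hat Q(\hat x_j^\ell) \ge \veps/2$ and $\hat R(\hat x_j^\ell) \ge \veps/2$ simultaneously, hence $c\hat E = \hat Q + \hat R \ge \veps$ and $|\hat F| = |\hat Q - \hat R| \le \hat Q + \hat R$, so the pair lies in $\{(cE,F): |F| \le cE,\ cE \ge \veps\} = \cR_2 + [\veps,0]^T$; this is the content of the lemma, and it is already established in \cite{OlbHauFra11}, so I would simply cite that reference for the recombination step and present the scalar cutoff computation above as the self-contained verification.
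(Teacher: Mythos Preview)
The paper does not supply a proof of this lemma; it merely records the statement and cites \cite{OlbHauFra11}. Your argument is the standard one---reduce to a scalar cutoff via the linear $(cE,F)\mapsto(Q,R)$ map, verify the one-sided bound $\hat p(\hat x_j^\ell)\ge\veps/2$ at each node by direct substitution of $\theta$, and recombine using \eqref{eq:QRinv}---and it is correct, presumably matching what appears in the cited reference.

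One minor point you half-flagged but did not close: the hypothesis $\wbar{\bu}_h^{j,n}\in\cR_2$ yields only $\wbar Q,\wbar R\ge 0$, not $\ge\veps/2$, so the formulas \eqref{eq:thetaQ}--\eqref{eq:thetaR} could in principle return $\theta<0$ when a cell average lands in the strip $[0,\veps/2)$. This is an artifact of how the lemma is stated (here and in the reference) rather than a defect in your argument; in practice it is excluded by choosing $\veps$ far below the data scale, or cured by clipping $\theta$ at zero. Your recombination check, $c\hat E-|\hat F|=2\min(\hat Q,\hat R)\ge\veps$, is exactly the computation that identifies the resulting set as $\cR_2+[\veps,0]^T$.
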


\subsubsection{Setting the Control Parameter}
\label{sec_results_sub_sub:set_delta}

We now define the control parameter $\delta$, discussed in Section
\ref{subsec:pert}.  Our definition is guided by the following result.

\begin{lem}[\cite{OlbHauFra11}]
\label{lem:R3}
In the one dimensional setting, a necessary condition for
$(cE,F,c \Pi_\delta) \in \cR_3$ is that
\begin{itemize}
\item[(C1)] $\Pi_\delta \leq E$,
\item[(C2)] $|F \pm c \Pi_\delta | \leq cE \pm F$.
\end{itemize}
\end{lem}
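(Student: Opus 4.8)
The plan is to work directly from the definition of the realizable set $\cR_3$ in one spatial dimension: a quadruple $(cE, F, c\Pi_\delta)$ of scalar moments is realizable iff there is a non-negative density $\Psi(\mu)$, $\mu \in [-1,1]$ (after integrating out frequency), with $cE = \int \Psi\,d\mu$, $F = \int \mu\Psi\,d\mu$, and $c\Pi_\delta = \int \mu^2\Psi\,d\mu$. Since we only want \emph{necessary} conditions, it suffices to test this putative density against suitably chosen non-negative polynomials in $\mu$ of degree $\le 2$ and read off the resulting inequalities on the moments. Concretely, I would use that $\int p(\mu)\Psi(\mu)\,d\mu \ge 0$ whenever $p(\mu) \ge 0$ on $[-1,1]$, which holds because $\Psi \ge 0$.

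First, for (C1): take $p(\mu) = 1 - \mu^2 \ge 0$ on $[-1,1]$. Then $0 \le \int (1-\mu^2)\Psi\,d\mu = cE - c\Pi_\delta$, i.e.\ $\Pi_\delta \le E$. (One also gets $\Pi_\delta \ge 0$ from $p(\mu)=\mu^2$, though that is not needed for the stated conclusion.) Next, for (C2): take the non-negative quadratics $p_\pm(\mu) = \mu(1 \pm \mu) \pm (\text{const})$ — more precisely, I would look for $p(\mu) \ge 0$ on $[-1,1]$ of the form $p(\mu) = (1 \pm \mu)(\mu \mp \text{something})$. The clean choice is $p_+(\mu) = (1+\mu)(1-\mu) + \mu(1+\mu) = (1+\mu)(1-\mu+\mu)$? — rather, the transparent route is to note $\mu^2 \pm \mu = \mu(\mu\pm 1)$, and to combine: the polynomial $(1 \mp \mu)(1 \pm \mu) \pm \mu(1\pm\mu)$ does not obviously simplify, so instead I would directly verify that $\mu \mapsto (1 \mp \mu)\big(cE \pm F - (F \pm c\Pi_\delta)\mu\big)$ is... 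Let me phrase it as follows, which is the cleanest: consider $p(\mu) = (1 - \mu)(\mu + 1) \ge 0$ weighted appropriately — actually the direct statement is that $\int (1\pm\mu)(\text{linear} \ge 0 \text{ on } \pm1)\,\Psi \ge 0$. The concrete inequalities $|F \pm c\Pi_\delta| \le cE \pm F$ come from integrating $\Psi$ against $(1 \mp \mu)\mu \ge$ ... — in practice one writes $cE \pm F - |F \pm c\Pi_\delta| = \int (1 \mp \mu)\,\Psi\,d\mu \mp \mathrm{sgn}(\cdots)\int \mu(1\mp\mu)\Psi\,d\mu$ and checks that the integrand $(1\mp\mu)(1 \mp \mathrm{sgn}(\cdots)\mu)$ is a product of two factors each non-negative on $[-1,1]$, hence non-negative. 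I would present this by exhibiting, for each sign choice and each sign of $F \pm c\Pi_\delta$, the explicit non-negative degree-two test polynomial and computing its moment.

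The one genuine subtlety — and the step I expect to need the most care — is the bookkeeping of the frequency integration and the factor of $c$: the moments here are $\vint{\cdot}$-integrals over both angle and frequency, and $\Pi_\delta$ is defined with a $1/c$ prefactor via $\Pi(E,F) = \tfrac1c\vint{(\Omega\vee\Omega)(\cdots)}$, so one must track that $c\Pi_\delta = \vint{\mu^2(\cdots)}$ matches the normalization $cE = \vint{\psi}$, $F = \vint{\mu\psi}$ used in $\cR_3$. Once the variables are lined up so that $(cE, F, c\Pi_\delta)$ really are $(\int\Psi, \int\mu\Psi, \int\mu^2\Psi)$ for a common non-negative $\Psi$, the inequalities (C1)–(C2) are immediate from the test-polynomial argument. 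I would note that this lemma is quoted verbatim from \cite{OlbHauFra11}, so the proof can be kept brief, emphasizing the test-polynomial derivation and referring to that reference for the full argument establishing that these conditions (together with the analogous ones) in fact characterize $\cR_3$.
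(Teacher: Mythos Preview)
The paper does not supply its own proof of this lemma; it is simply quoted from \cite{OlbHauFra11} and used as input for the subsequent construction of the control parameter $\delta$. So there is nothing in the paper to compare your argument against, and you yourself already note this at the end of your proposal.

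Your test-polynomial strategy is correct and is the standard way to obtain such necessary conditions for truncated Hausdorff moment problems. However, the middle of your write-up wanders: you try several factorizations before settling on anything. The clean choices are immediate once you expand the conditions. Condition (C1) follows from $p(\mu)=1-\mu^2\ge 0$ on $[-1,1]$, giving $cE-c\Pi_\delta=\int(1-\mu^2)\Psi\,d\mu\ge 0$. For (C2), unpacking the absolute values shows that, beyond (C1), the only new content is $cE\pm 2F+c\Pi_\delta\ge 0$; these come directly from the non-negative polynomials $p_\pm(\mu)=(1\pm\mu)^2$, since $\int(1\pm\mu)^2\Psi\,d\mu=cE\pm 2F+c\Pi_\delta$. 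State it this way and the argument is two lines. Your remark about tracking the factors of $c$ so that $(cE,F,c\Pi_\delta)$ really are the $0$th, $1$st, and $2$nd moments of a common non-negative density is the right caveat to include.
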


Rather than to require $(cE,F,c \pdel) \in \cR_3$, we choose $\delta\in [0,1]$ to ensure
the weaker conditions (C1) and (C2). More specifically, for any $(cE,F) \in
\cR_2$, we set
\seqalign{\label{eq:delta}}{
\delta(E,F) &= \begin{cases}
				\delta_0(E,F),  \q &  \pc(E,F) + \pd(E,F) >0, \\
				\delta_1(E,F),  \q &  \pc(E,F) + \pd(E,F) <0, \\
              \end{cases}
\intertext{where}
\delta_0 = \min \left \{ \frac{E-\pe}{\pc + \pd}, 1 \right \}, \q \delta_1 &= \min \left \{ \frac{-2F+cE+c\pe}{c|\pc + \pd|}, \frac{2F+cE+c\pe}{c|\pc + \pd|}, 1 \right \}.
}

\begin{lem}
\label{lem:pressure}
For all $(cE,F) \in \cR_2$, $\pdel:= \pe
+ \delta[\pc + \pd]$ satisfies
(C1) -- (C2).
\end{lem}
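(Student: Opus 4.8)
The plan is to rewrite the target conditions (C1)--(C2) as an explicit two-sided bound on $\pdel$, and then to verify that bound branch by branch from the definition \eqref{eq:delta}, using as input the fact that the unperturbed pressure $\pe$ already satisfies it. Since the entropy ansatz $\cB(\ahat)$ is strictly positive, the array $(cE,F,c\pe)$ consists of the first three (slab-geometry) moments of the nonnegative density $\cB(\ahat)$, hence lies in $\cR_3$; by Lemma~\ref{lem:R3}, $\pe$ therefore satisfies (C1)--(C2). Writing (C2) as $-(cE\pm F)\le F\pm c\pdel\le cE\pm F$ and solving for $\pdel$ shows that (C1) and (C2) together are equivalent to the single chain
\[
 \frac{2|F|}{c} - E \;\le\; \pdel \;\le\; E .
\]
Applied to $\pe$, this chain is the same as the three scalar inequalities $\pe\le E$, $-2F+cE+c\pe\ge 0$, and $2F+cE+c\pe\ge 0$, i.e.\ exactly the numerators that occur in $\delta_0$ and $\delta_1$. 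I would record this first, since it immediately yields $\delta\in[0,1]$ in every branch of \eqref{eq:delta}.

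Next I would split on the sign of $\pc+\pd$; the case $\pc+\pd=0$ is trivial because then $\pdel=\pe$. If $\pc+\pd>0$ and $\delta=\delta_0=\min\{(E-\pe)/(\pc+\pd),1\}$, then $\delta_0\le (E-\pe)/(\pc+\pd)$ together with $\pc+\pd>0$ gives $\pdel=\pe+\delta_0(\pc+\pd)\le E$, the upper bound; while $\delta_0\ge 0$ and $\pc+\pd>0$ give $\pdel\ge\pe\ge \frac{2|F|}{c}-E$, the lower bound. If $\pc+\pd<0$ and $\delta=\delta_1$, then $\delta_1\ge 0$ forces $\pdel=\pe+\delta_1(\pc+\pd)\le\pe\le E$, so the upper bound is immediate; for the lower bound, write $\pdel=\pe-\delta_1|\pc+\pd|$ and note that the two arguments $(-2F+cE+c\pe)/(c|\pc+\pd|)$ and $(2F+cE+c\pe)/(c|\pc+\pd|)$ of $\delta_1$ are precisely the thresholds making $\pe-\delta_1|\pc+\pd|\ge \frac{2}{c}F-E$ and $\pe-\delta_1|\pc+\pd|\ge -\frac{2}{c}F-E$, respectively, after clearing the denominator $c|\pc+\pd|$. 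Collecting the estimates gives $\frac{2|F|}{c}-E\le\pdel\le E$ in all branches, which is (C1)--(C2).

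I do not expect a genuine obstacle here: the work is entirely the bookkeeping that pairs each of the two arguments of $\delta_1$ (and the single nontrivial argument of $\delta_0$) with the inequality among (C1)--(C2) it is designed to enforce, together with the observation that positivity of the minimum-entropy ansatz is what makes every clipping threshold nonnegative, so that $\delta$ is a bona fide element of $[0,1]$. The one point deserving a sentence of care is the boundary $|F|=cE$, where the ansatz degenerates to a Dirac mass and $\pe=E$: there every threshold collapses to zero, so $\delta=0$, $\pdel=\pe=E$, and (C1)--(C2) still hold; alternatively this case follows from the interior by continuity.
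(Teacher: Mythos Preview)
Your proposal is correct and follows essentially the same route as the paper: reduce (C1)--(C2) to the three scalar inequalities $c\pdel\le cE$, $c\pdel\ge 2F-cE$, $c\pdel\ge -2F-cE$, use $(cE,F,c\pe)\in\cR_3$ to see that $\pe$ already satisfies them, and then check each inequality against the matching clipping threshold in the definition of $\delta$. You spell out the case split and the role of $\delta\in[0,1]$ more explicitly than the paper (which simply says ``easily verified''), and you add a remark on the degenerate boundary $|F|=cE$, but the argument is the same.
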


\begin{proof}
The assertion $(cE,F)\in\cR_2$ implies $(cE,F, c\pe) \in \cR_3$.  It follows
then that for $\Pi^\tx{D}=0$, conditions (C1) -- (C2) are  trivially satisfied.
It remains only to show the following inequalities:
\begin{equation}
	c\pdel \leq cE
	\quand
	c\pdel \geq 2F-cE
	\quand
	c\pdel \geq -2F-cE.
\end{equation}
These relations are easily verified by applying the definition of
$\delta$ and using the fact that $(cE,F, c\pe) \in \cR_3$.
\end{proof}

With $\delta$ given by \eqref{eq:delta}, one can show that cell averages of
$\bu_h$ remain realizable and that the cell average of
$T _h$ remains positive in a forward Euler step. Let
\begin{equation}
	\bu^{j,n}_\ell := \bu_h^{j,n}(\xjalpha) \:,
\quad
T^{j,n}_\ell := T_h^{j,n}(\xjalpha)\:,
\quad
	\pdell^{j,n} := \pdel(\bu^{j,n}_\ell) \:,
\quad
	\sig{t,\ell} := \sig{t}(\xjalpha).
\end{equation}

\begin{lem}
\label{lem:CFL}
Assume that $2M -3 \geq k$ and for each $\ell=1,\ldots ,M$, that 
\begin{equation}
	\bu^{j,n}_\ell \in {\cal R}_2, \q T^{j,n}_\ell \geq 0
\end{equation}
and $\pdell^{j,n}$ satisfies (C1) and (C2).
Assume further that $\dt$ satisfies the following conditions:
 \begin{itemize}
\item[(A1)]{$\ds \dt < \min_{\ell=1,\ldots,M}{ \left \{ \frac{1}{c \sig{t,\ell}}
\right \} }$,}
\item[(A2)]{$\ds \dt <  \min_{\ell=1,\ldots,M}{ \left \{  \frac{w_\ell h}{ c(1
+w_\ell\sig{t,\ell}h  )}   \right \}, }  $ }
\item[(A3)]{$ \ds \dt \leq \min_{\ell=1,\ldots,M}{ \left \{ 
\frac{C_v}{\sig{a,\ell} ac(T^{j,n}_\ell)^3} \right \}. }$  }
 \end{itemize}
where $h:=\min_j h_j$.
Then after a forward Euler time step,
\begin{equation}
 \wbar{\bu}_{h}^{j,n+1} \in {\cal R}_2 \quand \wbar{T}^{j,n+1}_h \geq 0.
\end{equation}
\end{lem}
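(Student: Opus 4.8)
The plan is to exploit the Gauss--Lobatto quadrature to rewrite the forward Euler update of the cell average as a convex combination of states, each of which can be shown to be realizable (respectively: to have positive temperature). I would start from the semidiscrete scheme \eqref{eq:spaceDG} and take $m=0$, so that the evolution equation is exactly the evolution of the cell average $\wbar{\bu}_h^{j,n}$. Writing the volume integral $\int_{-1}^1 \bff(\bu_h^j,\bv_h^j)\,\partial_y P_0\,dy = 0$ (since $P_0$ is constant), the cell-average update reduces to a formula involving only the two numerical edge fluxes $\hat\bff_{j\pm1/2}$ and the source integral. Following the standard Zhang--Shu decomposition, I would then reintroduce the interior Gauss--Lobatto nodes $\{\xjalpha\}_{\ell=1}^M$ with weights $w_\ell$ and express $\wbar{\bu}_h^{j,n}$ as $\sum_\ell w_\ell \bu_h^{j,n}(\xjalpha)/2$ (normalizing so $\sum w_\ell = 2$), splitting off the two boundary nodes $\xjalpha[1]=x_{j-1/2}$ and $\xjalpha[M]=x_{j+1/2}$. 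The resulting expression for $\wbar{\bu}_h^{j,n+1}$ becomes a convex combination, with coefficients depending on $\dt$, $c$, $\sig{t,\ell}$, $w_\ell$, and $h$, of: (i) the interior point values $\bu^{j,n}_\ell$, $\ell=2,\dots,M-1$; (ii) certain ``half-step'' states built from the boundary node values and the Lax--Friedrichs edge fluxes from both adjacent cells; and (iii) a source contribution.

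Next I would verify that each piece of this convex combination lies in $\cR_2$. For the interior nodes this is immediate from the hypothesis $\bu^{j,n}_\ell \in \cR_2$. For the boundary-node/edge-flux states, I would pass to the $(Q,R)$ variables \eqref{eq:QR}: because the convective flux for $Q$ is $cQ$ and for $R$ is $-cR$ (up to the pressure correction, which is controlled by (C1)--(C2) via Lemma~\ref{lem:pressure}), the Lax--Friedrichs flux with numerical viscosity $\lambda = c$ is \emph{exactly} an upwind flux in these characteristic variables. Hence the half-step updates for $Q$ and $R$ are, under condition (A2), convex combinations of nonnegative quantities $Q^{j,n}_\ell, R^{j,n}_\ell$ — this is precisely where the hypotheses that $\pdell^{j,n}$ satisfies (C1)--(C2) and that $\dt < w_\ell h / (c(1+w_\ell\sig{t,\ell}h))$ enter, the latter guaranteeing the convex-combination coefficients (which include the absorption/scattering reaction contribution $c\sig{t,\ell}$ from the source $\bs$) are all nonnegative and sum to one. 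Positivity of $Q$ and $R$ is equivalent to $|F|\le cE$, i.e.\ to $\wbar{\bu}_h^{j,n+1}\in\cR_2$. Finally, condition (A1), $\dt < 1/(c\sig{t,\ell})$, handles the reaction term $-c\sig{t}F$ in the $F$-equation so that the relaxation contraction does not overshoot.

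For the temperature, I would treat \eqref{eq:spaceDG_T} with $m=0$ separately: the cell average update $\wbar T_h^{j,n+1}$ is $\wbar T_h^{j,n} + \dt\,\frac{c}{C_v}\sum_\ell w_\ell \sig{a,\ell}(E^{j,n}_\ell - a(T^{j,n}_\ell)^4)/2$ (after quadrature), and writing $E^{j,n}_\ell - a(T^{j,n}_\ell)^4 \ge -a(T^{j,n}_\ell)^4$ since $E^{j,n}_\ell\ge0$ (which follows from $\bu^{j,n}_\ell\in\cR_2$), one sees that $\wbar T^{j,n+1}_h \ge \wbar T_h^{j,n} - \dt\,\frac{ac}{C_v}\sum_\ell w_\ell \sig{a,\ell}(T^{j,n}_\ell)^4/2 \ge 0$ provided (A3) holds together with $\wbar T_h^{j,n} = \sum_\ell w_\ell T^{j,n}_\ell /2 \ge \min_\ell T^{j,n}_\ell$; a short estimate bounding each term node-by-node closes this. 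The main obstacle I anticipate is the careful bookkeeping in the second paragraph: getting the Lax--Friedrichs edge flux into genuine upwind form in the $(Q,R)$ variables requires that the pressure correction $\pdell$ respects (C1)--(C2) at \emph{both} boundary nodes of the relevant cells, and one must check that the nonnegativity of \emph{all} convex-combination weights holds simultaneously under the single stated CFL bound (A2) — this is the delicate quantitative step, essentially a transcription of the $M_1$ argument in \cite{OlbHauFra11} with $\pe$ replaced by $\pdell$, relying on Lemma~\ref{lem:pressure} to ensure the replacement is harmless.
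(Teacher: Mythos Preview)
Your proposal is correct and is precisely the argument the paper has in mind: the paper's own proof consists entirely of a reference to Theorem~3 in \cite{OlbHauFra11}, observing that the proof there ``relies exactly on the conditions (A1)--(A3) and (C1)--(C2)'' and that the only change here is that (C1)--(C2) are assumed rather than automatically satisfied by the $M_1$ closure---which is exactly your concluding observation about replacing $\pe$ by $\pdell$ via Lemma~\ref{lem:pressure}. Your sketch of the Zhang--Shu convex-combination decomposition in the $(Q,R)$ variables is the content of that cited argument.
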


\begin{proof}
We refer the reader to the proof of Theorem 3 in \cite{OlbHauFra11} for the
$M_1$ model, which relies exactly on the conditions (A1)--(A3) and (C1)--(C2).
The only difference is that (C1) and (C2) are assumed in Lemma~\ref{lem:CFL}, while
in \cite{OlbHauFra11} they are naturally satisfied by the $M_1$ model.
\end{proof}

\begin{theorem}
The Runge-Kutta discontinuous Galerkin scheme which combines
\begin{enumerate}
	\item the space discretization in \eqref{eq:spaceDG},
	\item the limiters in \eqref{eq:intermQuant},
	\item the modified pressure $\pdel$ in \eqref{eq:m1_system} with
control parameter $\delta$ given by \eqref{eq:delta},
	\item a strong-stability-preserving Runge Kutta time integrator, and
	\item a sufficiently accurate Gauss-Lobatto quadrature
\end{enumerate}
preserves the realizability of the moments in the sense of cell averages. In
particular, if the time step
conditions (A1)-(A2) in the statement of Lemma \ref{lem:CFL} hold and
if $\wbar{\bu}^{j,n}_h \in \cR_2$, then $\wbar{\bu}^{j,n+1}_h \in \cR_2$.
\end{theorem}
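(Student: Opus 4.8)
The plan is to assemble the claim from Lemmas~\ref{lem:pressure} and~\ref{lem:CFL}, the convexity of the realizable set $\cR_2$, and the cell-average--preserving property of the limiter, and then to promote the resulting one-step statement to the full SSP Runge--Kutta integrator through its convex-combination (Shu--Osher) structure; the argument is in essence a careful accounting of which ingredient supplies which hypothesis.

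First I would record two elementary facts about the limiter $\Lambda$ of \eqref{eq:intermQuant}. (i) It preserves cell averages: $\hat{Q}_h^{j,n}(x)$ and $\hat{R}_h^{j,n}(x)$ are convex combinations of $Q_h^{j,n}(x)$, $R_h^{j,n}(x)$ with the constants $\wbar{Q}_h^{j,n}$, $\wbar{R}_h^{j,n}$, so their cell averages equal $\wbar{Q}_h^{j,n}$, $\wbar{R}_h^{j,n}$; and since the map \eqref{eq:QR}--\eqref{eq:QRinv} between $(Q,R)$ and $(cE,F)$ is linear, the cell averages of $cE$ and $F$ are untouched. (ii) By the limiter lemma of \cite{OlbHauFra11} recalled just after \eqref{eq:intermQuant}, if the incoming cell average satisfies $\wbar{\bu}_h^{j,n}\in\cR_2$ then $\hat{\bu}_h^{j,n}(\xjalpha)\in\cR_2^\veps\subset\cR_2$ at every Gauss--Lobatto node $\ell=1,\dots,M$; this uses the assumption $2M-3\ge k$, which makes the quadrature exact on $V_h^k$ so that the nodal average reproduces the cell average (the accuracy condition $2M-3\ge 2k+1$ is a fortiori sufficient). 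Because these nodal values lie in $\cR_2$, Lemma~\ref{lem:pressure} applies and $\pdell^{j,n}=\pdel(\bu^{j,n}_\ell)$ satisfies (C1) and (C2) at every node --- exactly the data needed to invoke Lemma~\ref{lem:CFL}.

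With those hypotheses in hand, Lemma~\ref{lem:CFL} shows that under (A1)--(A2) a single forward-Euler stage sends the limited state to a cell average that again lies in $\cR_2$; of that lemma's hypotheses, only (A1)--(A2) bear on the moment conclusion --- (A3) governs temperature positivity --- which is why the theorem assumes only (A1)--(A2). To reach the full scheme I would induct over the Runge--Kutta stages. Writing each SSPRK(3,3) or SSPRK(5,4) stage in Shu--Osher form, $\bu_m^{j,(i)}=\Lambda\{\sum_r\beta_{ir}(\bu_m^{j,(r)}+\tilde{\alpha}_{ir}\,\Delta t\,\mathcal{L}(\bu_h^{j,(r)}))\}$ with $\beta_{ir}\ge 0$ and $\sum_r\beta_{ir}=1$, each inner bracket is a forward-Euler update whose effective step size is bounded by $\Delta t$ --- because the SSP coefficient is at least $1$ (it is $1$ for SSPRK(3,3) and $1.508$ for SSPRK(5,4)) --- and hence still obeys (A1)--(A2). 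Assuming inductively that $\wbar{\bu}_h^{j,(r)}\in\cR_2$ for all earlier stages $r$, the previous paragraph shows each inner bracket has cell average in $\cR_2$; convexity of $\cR_2$ then puts the cell average of $\sum_r\beta_{ir}(\cdots)$ in $\cR_2$, and since $\Lambda$ preserves cell averages we get $\wbar{\bu}_h^{j,(i)}\in\cR_2$. Running $i$ through all stages yields $\wbar{\bu}_h^{j,n+1}\in\cR_2$.

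I expect the main obstacle to be the substep bookkeeping rather than any genuine difficulty: one must confirm that in the Shu--Osher decomposition of the specific SSPRK method every inner forward-Euler substep carries an effective step size $\le\Delta t$ (equivalently, SSP coefficient $\ge 1$), so that the CFL restrictions (A1)--(A2) imposed on the outer step survive at each substep, and that $\Lambda$ is genuinely cell-average conserving so the inductive hypothesis propagates from stage to stage. Once these are checked, the statement follows by directly stringing together Lemmas~\ref{lem:pressure} and~\ref{lem:CFL} with the convexity of $\cR_2$, the forward-Euler step itself being handled exactly as in the proof of Theorem~3 of \cite{OlbHauFra11}.
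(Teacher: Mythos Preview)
Your proposal is correct and follows essentially the same approach as the paper's proof: verify that the limiter furnishes the hypotheses of Lemma~\ref{lem:CFL} at each stage, invoke that lemma for the forward-Euler step, and then lift to the full SSP-RK integrator via its convex-combination structure. Your write-up is considerably more explicit than the paper's (you spell out the cell-average preservation of $\Lambda$, the role of Lemma~\ref{lem:pressure} in securing (C1)--(C2), the convexity of $\cR_2$, and the Shu--Osher substep bookkeeping), but the underlying argument is the same.
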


\begin{proof}
Application of the limiters in \eqref{eq:intermQuant} ensures that the
conditions of Lemma \ref{lem:CFL} hold at each stage in the SSP-RK scheme.  
Each successive stage is an application of the forward Euler operator to the
current stage with an appropriately modified time step.  Thus, the conclusions
of Lemma \ref{lem:CFL} apply at the next stage, including the final stage,
which gives $\bu^{j,n+1}_h$.
\end{proof}

\section{Numerical Results}
\label{sec:results}
In this section, we present several numerical computations in slab geometry 
for a choice of test cases that are common
for the $M_1$ model. The goal is to compare and contrast the perturbed $M_1$
model with
the $M_1$ model and to point out benefits and drawbacks. Benchmark solutions are
generated by the discrete ordinates method with an upwind scheme in space, high-order spherical harmonics or
semi-analytic expressions. The RKDG implementation has been verified and
benchmarked in \cite{OlbHauFra11}. The correct implementation of the additional
perturbative terms has been checked by the method of manufactured solutions
\cite{SalKnu00}.


As in \cite{OlbHauFra11} our algorithm is implemented in MATLAB, and
Gauss-Lobatto quadrature on [-1,1] is used. Additionally, the Runge-Kutta time
integration methods as well as parameters for the admissibility limiter are
applied in the same way.
In order to satisfy the conditions of Lemma \ref{lem:CFL} and to guarantee
stability, the time step is set to
\seqalign{}{
& \dt  < \min \left \{ c_1, c_2, c_3, c_4  \right \}, \\
c_1 = \frac{1}{c\sig{t,\tx{max}}} \:, \quad
c_2 = &\frac{h \, w_{\tx{min}}}{c (1 +w_{\tx{max}} \sig{t,\tx{max}} \, h  )} \:, \q
c_3 = \frac{C_v}{ac \tau_{\tx{max}}} 
\: \quand
c_4 = \frac{h^2}{2(2k+1)},
}
where $k$ is the polynomial degree, $h=\min_{j}h_j$, $w_{\tx{min}}$ and $w_{\tx{max}}$ are the minimum and maximum quadrature weights, respectively. 
The quantities $\sigma_{t,\tx{max}}$ and $\tau_{\tx{max}}$ are the maximum
values of $\sig{t,\ell}$ and $\sig{a,\ell}(T_\ell^{j,n})^3$.

The constant $c_4$ is not needed to preserve realizability of the moments but
rather to enforce a parabolic CFL condition.  Without this condition, unstable
modes grow without bound until the control parameter $\delta$ turns on and
damps them.  Unfortunately, the parabolic CFL restriction leads to small time
steps.

The stability parameter for the realizability limiter of $E$ and $F$ is set to $\varepsilon = 10^{-10}$.
The same value is also used to enforce conditions (C1) and (C2), i.e., the control parameter in \eqref{eq:delta} is chosen such that
\[c\pdel\leq cE-\eps \quand |F\pm c\pdel| \leq cE \pm F \pm \eps. \]
In Sections \ref{sec_results_sub:instability}-\ref{sec_results_sub:gaussian}, we
study simulations with $c=1$ and neglect the energy equation which is included
in the last two cases from Section \ref{sec_results_sub:5}.
Unless otherwise stated, slope and realizability limiters are always turned on for all DG calculations.
If transformation to characteristic variables for the slope limiter is used, it
will be  explicitly stated. 

\subsection{Two-Beam Instability}
\label{sec_results_sub:instability}
\begin{figure}
\centering
\subfloat[$t=0.6$]{\includegraphics[scale=0.95]{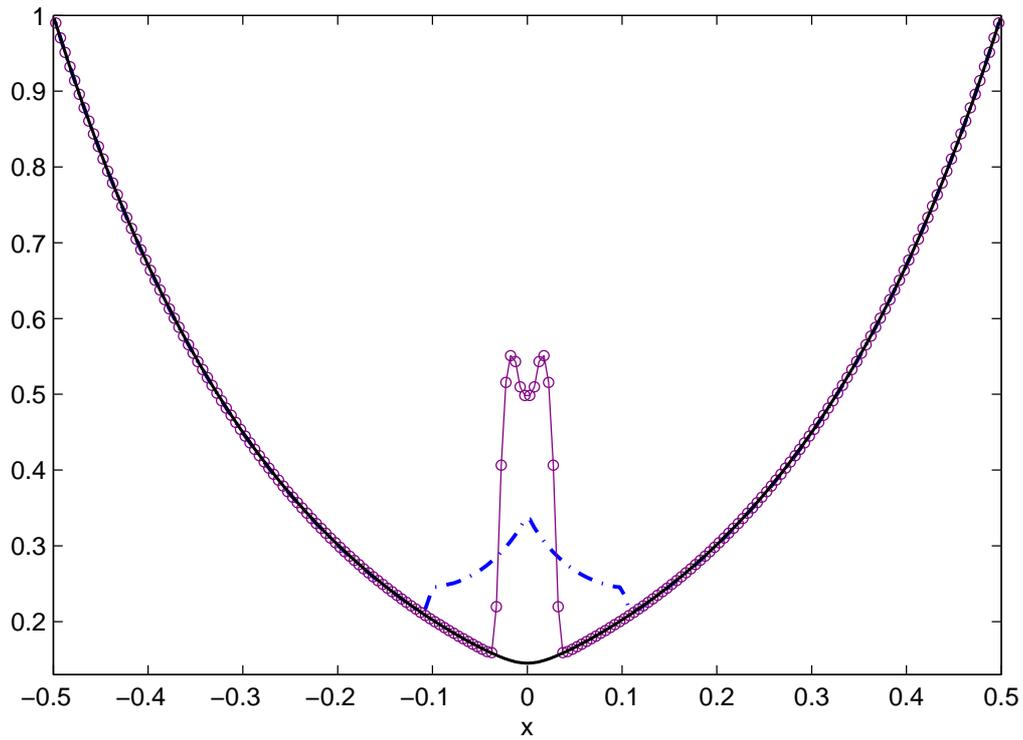}}
\,
\subfloat[$t=3$]{\includegraphics[scale=0.95]{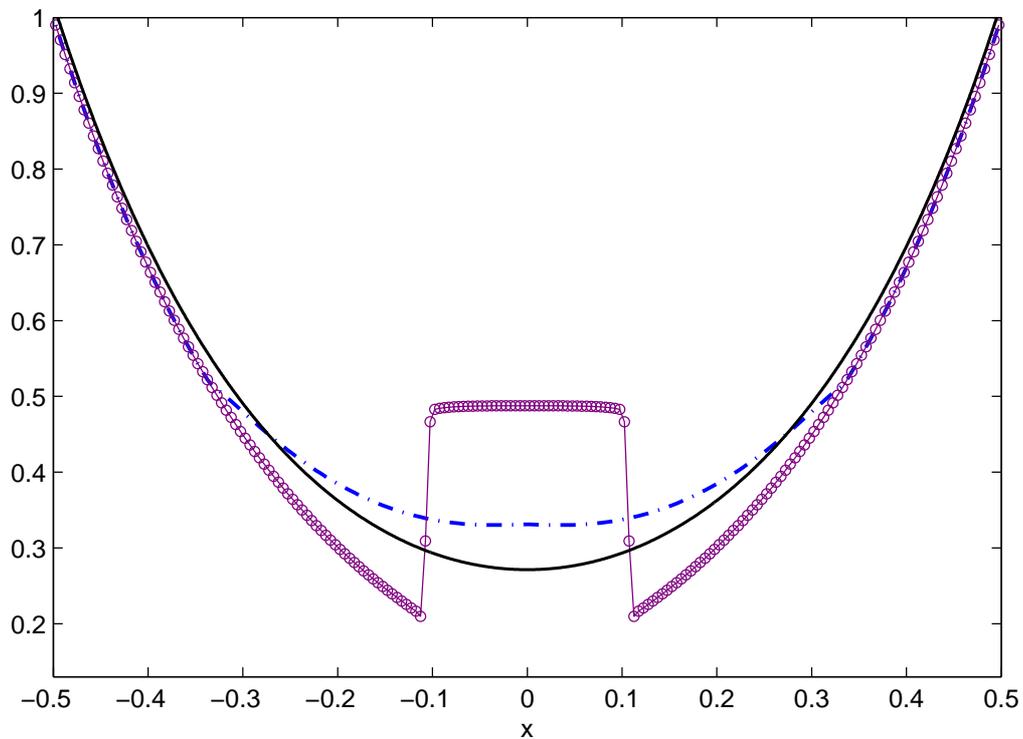}}
\caption{Plots of E for the two-beam instability. $J=200$, $k=2$: $M_1$ (purple circle line), perturbed $M_1$ (blue dash-dot line), transport (black solid line).}

\label{pict:TwoBeams}
\end{figure}

We consider two incoming beams at the boundaries of the domain $[-0.5,0,5]$ and
set $c=1$, $S=0$.\footnote{
The radiation intensity of a beam is a delta distribution in angle at $\mu=1$ (left boundary) and $\mu=-1$ (right boundary) which yields an energy density of $E=1/c$ and a flux density of $F=\pm 1$.
} 
Particles stream from both boundaries in a purely absorbing
material with $\sigma_a=4=\sig{t}$ and meet at $x=0$. 
We avoid getting too close to the boundary of the realizability domain and represent these beams in our moment model using the boundary conditions 
\[ \bu(0,t) =[1,
0.9999]^T, \quad \bu(1,t) =[1, -0.9999]^T, \quad t>0 \] and initial conditions
\[ \bu_0(x) =[2\eps, 0]^T, \quad x\in (-0.5,0,5). \]
For this problem,
coupling to the material is ignored, and the material energy equation is not
included in the simulation.

In \reffig{pict:TwoBeams}, one can observe the formation of a shock in
the $M_1$ solution which persists at the
steady-state.  The perturbed $M_1$ model also develops an unphysical transient
profile in which the particle number jumps in the center of the domain.  While
this artifact persists, the steady state solution ($t=3$) appears continuous.
The steady-state solution also has noticeable kinks in the at $x\approx\pm 0.3$.
For comparison, discrete ordinates solutions are plotted for which $256$
discretization points in angle and $1000$ points in space are used. The
perturbed $M_1$ is throughout closer to the transport solution.

\begin{rem}
Precise explanations for the occurrence of shocks and kinks in the perturbed
$M_1$ solution require an additional analysis. For example, an
explanation for the formation of shocks in the $M_1$ model is given
\cite{Brunner-Holloway-2001}. However, such an analysis goes beyond the
purpose of this paper and will be postponed to future work.
\end{rem}

\subsection{Source-Beam Problem}
\label{sec_results_sub:source_beam}

\begin{figure}[h!]
\centering
\subfloat[$t=0.5$]{\includegraphics[scale=0.5]{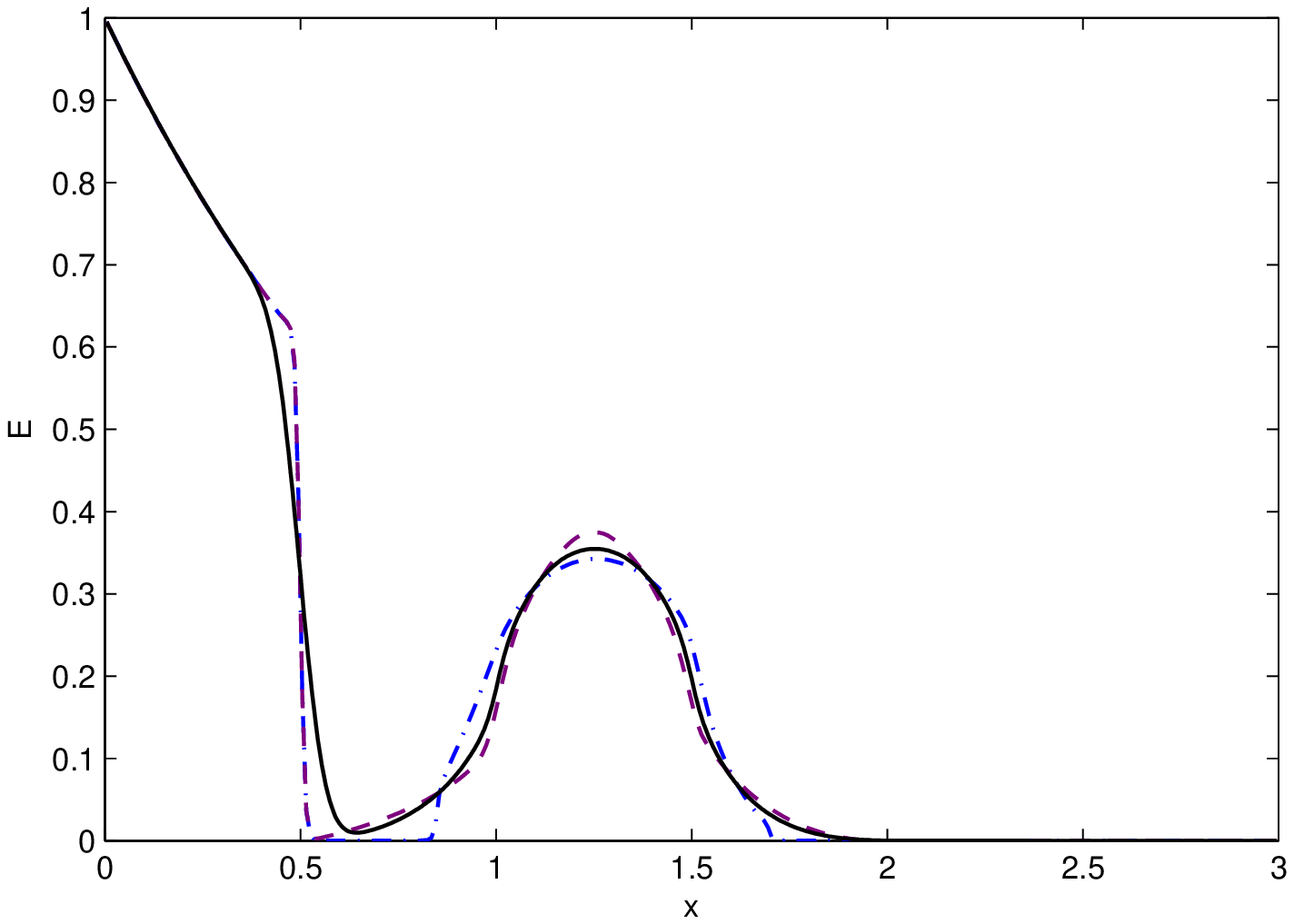}}
\,
\subfloat[$t=1$]{\includegraphics[scale=0.5]{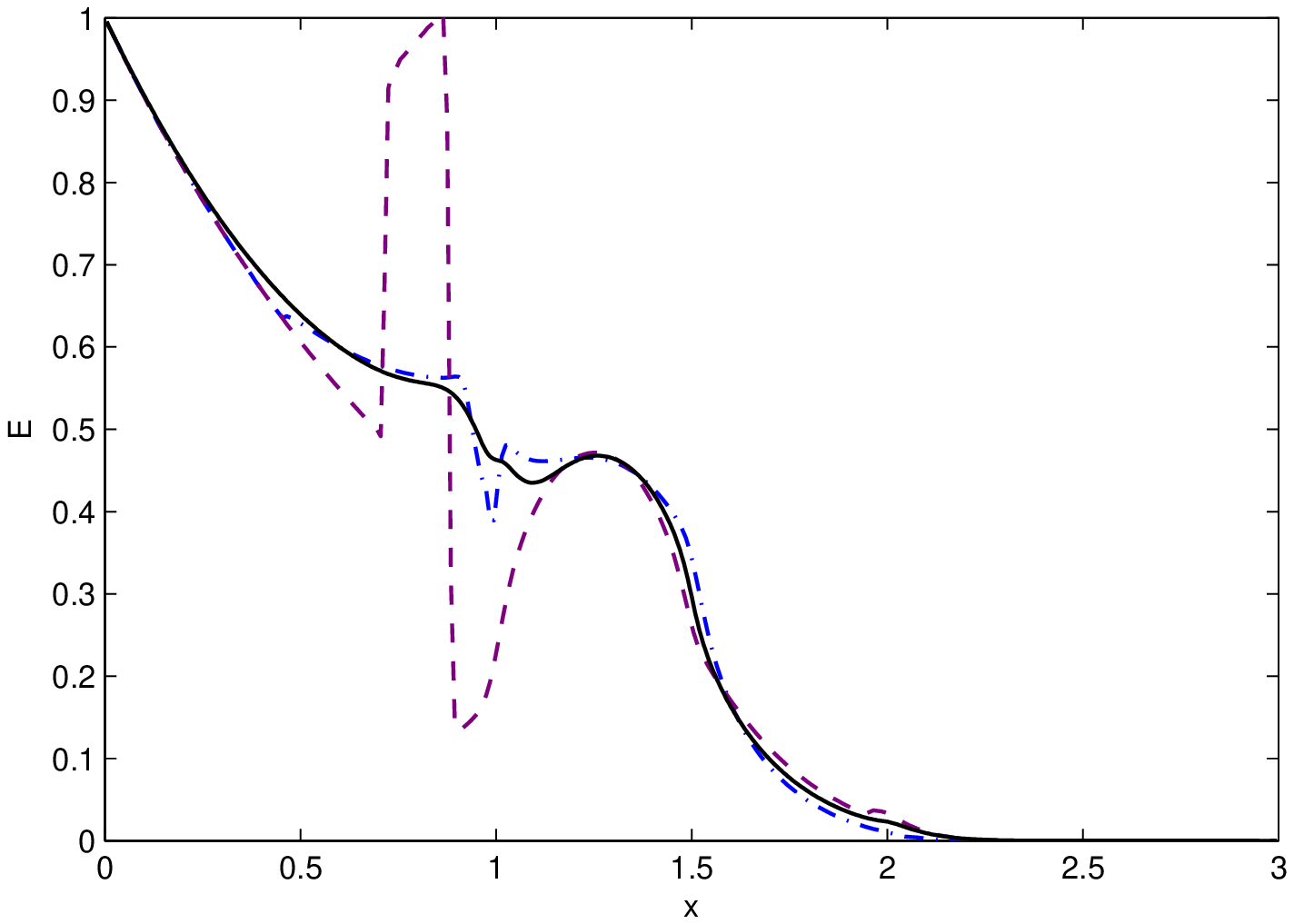}} \\
\subfloat[$t=2$]{\includegraphics[scale=0.5]{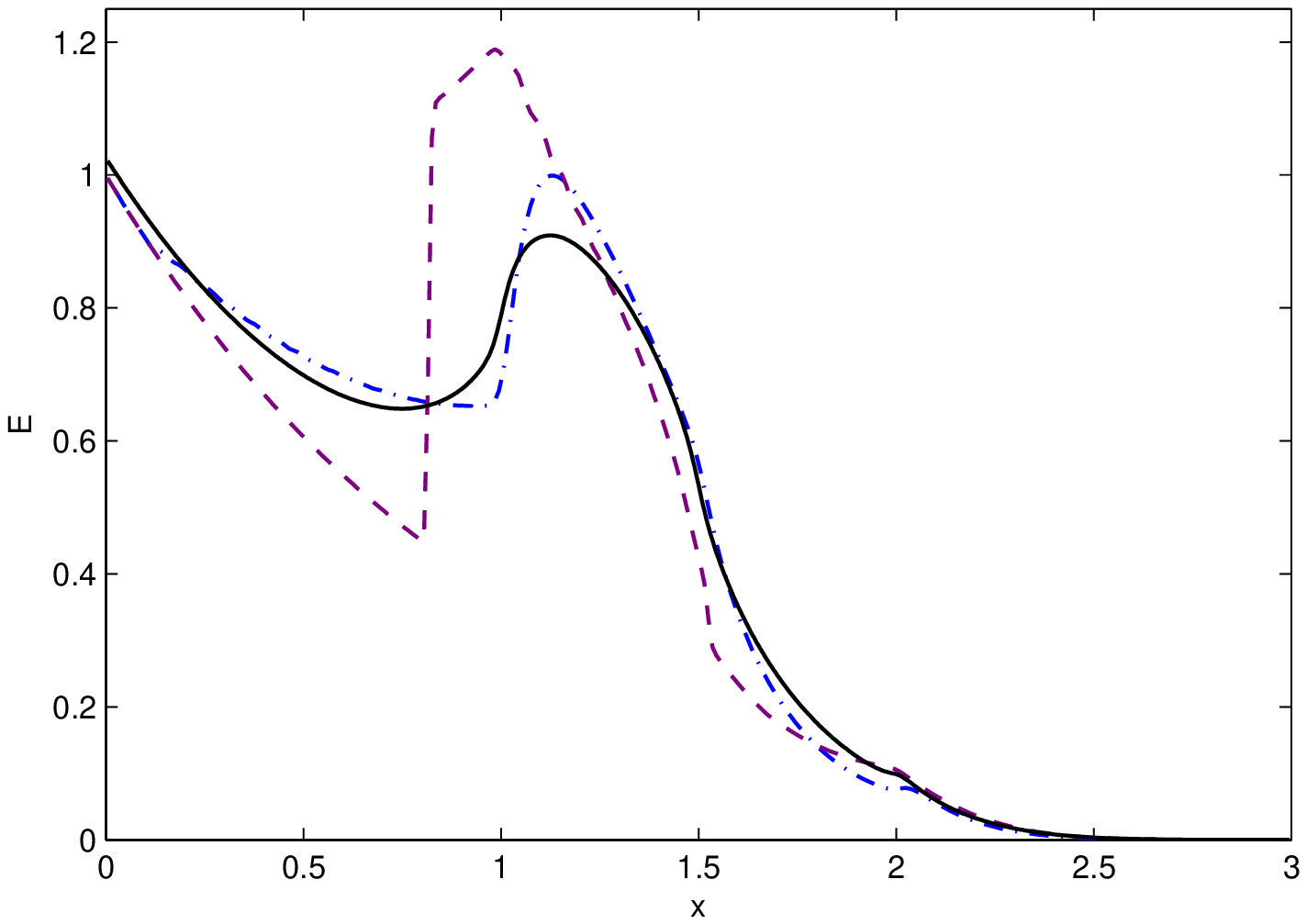}}
\,
\subfloat[$t=4$]{\includegraphics[scale=0.5]{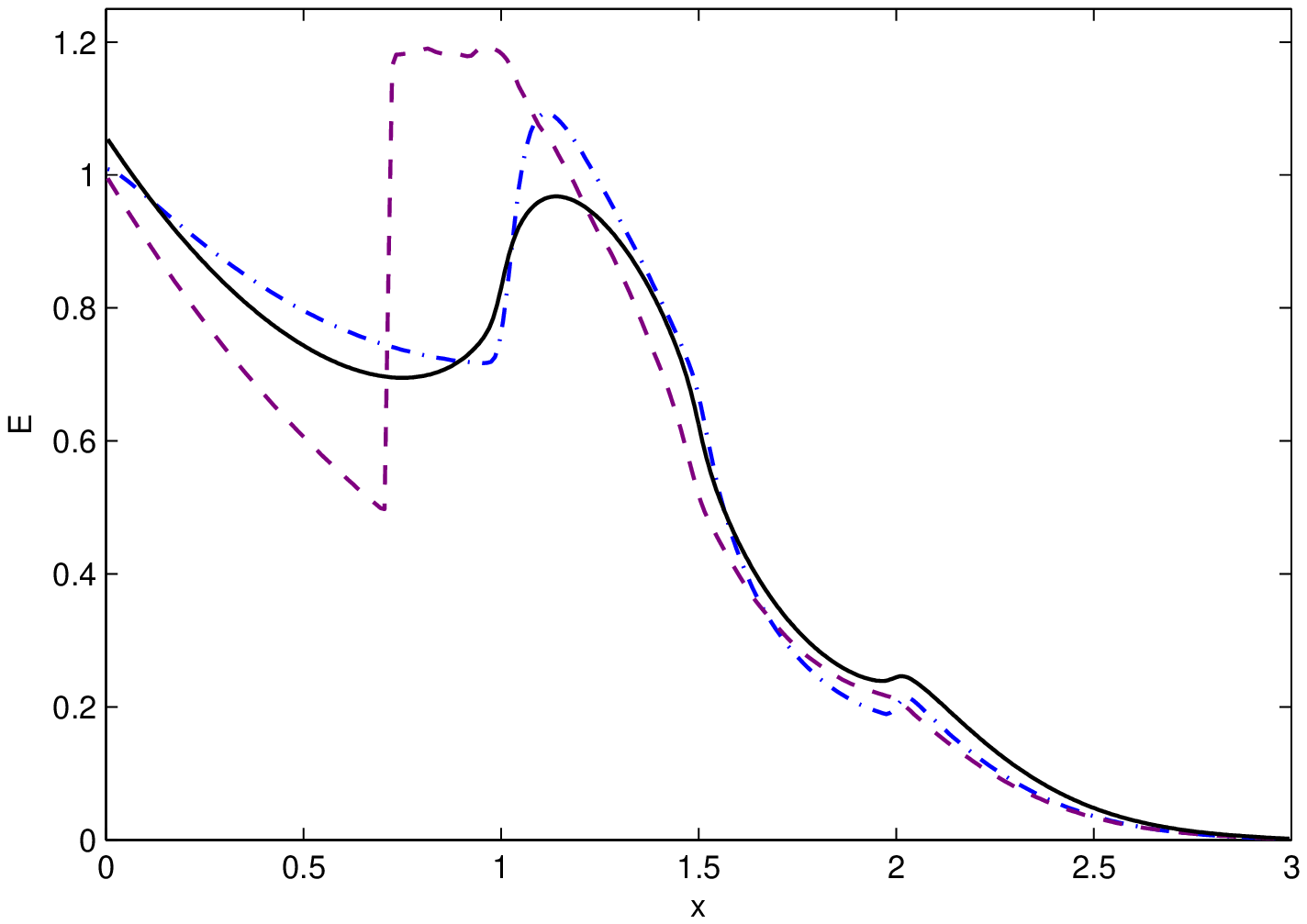}}
\caption{Plots of E for the Source-beam problem. $J=300$, $k=2$: $M_1$ (purple dashed line), perturbed $M_1$ (blue dash-dot line), transport solution (black solid line)}
\label{pict:HeatedWall}
\end{figure}%
In this problem, an incoming beam \[ \bu(0,t) =[1, 0.9999]^T, \quad t>0, \] on
the left boundary of the domain $[0,3]$ hits an isotropic source $S=1/2$
generating particles on the interval $1\leq x<1.5$.  In order to avoid
complications of spatial discontinuities in the fluxes
\cite{ZhaWonShu06,ZhaLiu05,LonTemJin95}, we smooth the source and
material cross sections, which enter into the perturbative components of the
flux. The source $S$ is smoothed
at the end points $x=1$ and $x=1.5$
\begin{equation}
S =\begin{cases}
 			\frac14(1+p_H(\frac{x-1}{\Delta})), & 1-\Delta \leq x \leq 1+\Delta \\
			\frac12, & 1+\Delta<x<1.5-\Delta\\
			\frac14(1-p_H(\frac{x-1.5}{\Delta})), & 1.5-\Delta \leq x \leq 1.5+\Delta \\
 			0, & \tx{else}
            \end{cases}
\end{equation}
Similarly, we design the material properties with the cross sections:
\begin{align}
\sig{a} &=\begin{cases}
			1, & 0\leq x < 2-\Delta \\
 			(1-p_H(\frac{x-2}{\Delta}))/2, & 2-\Delta \leq x \leq 2+\Delta \\
 			0, & \tx{else}
            \end{cases}, \quand \\
\sig{s} &= \begin{cases}
			1+p_H(\frac{x-1}{\Delta}), & 1-\Delta \leq x \leq 1+\Delta \\
 			2, & 1+\Delta < x < 2-\Delta \\
 			2+4(1+p_H(\frac{x-2}{\Delta})), & 2-\Delta \leq x \leq 2+\Delta \\
 			10, & 2+\Delta < x \leq 3 \\
 			0, & \tx{else.}
            \end{cases}
\end{align}
The function $p_H$ is a Hermite polynomial of order 10 with $p_H(\pm 1)=\pm 1$
and $p^{(k)}_H(\pm1)=0$ for $k=1,2,3,4$. If $p_H$ is extended by $\pm 1$
respectively, it is a $C^4$ function. 
The material property functions are illustrated in \reffig{pict:SourceBeam_material}.

\begin{figure}[h!]
\centering
\includegraphics[scale=0.42]{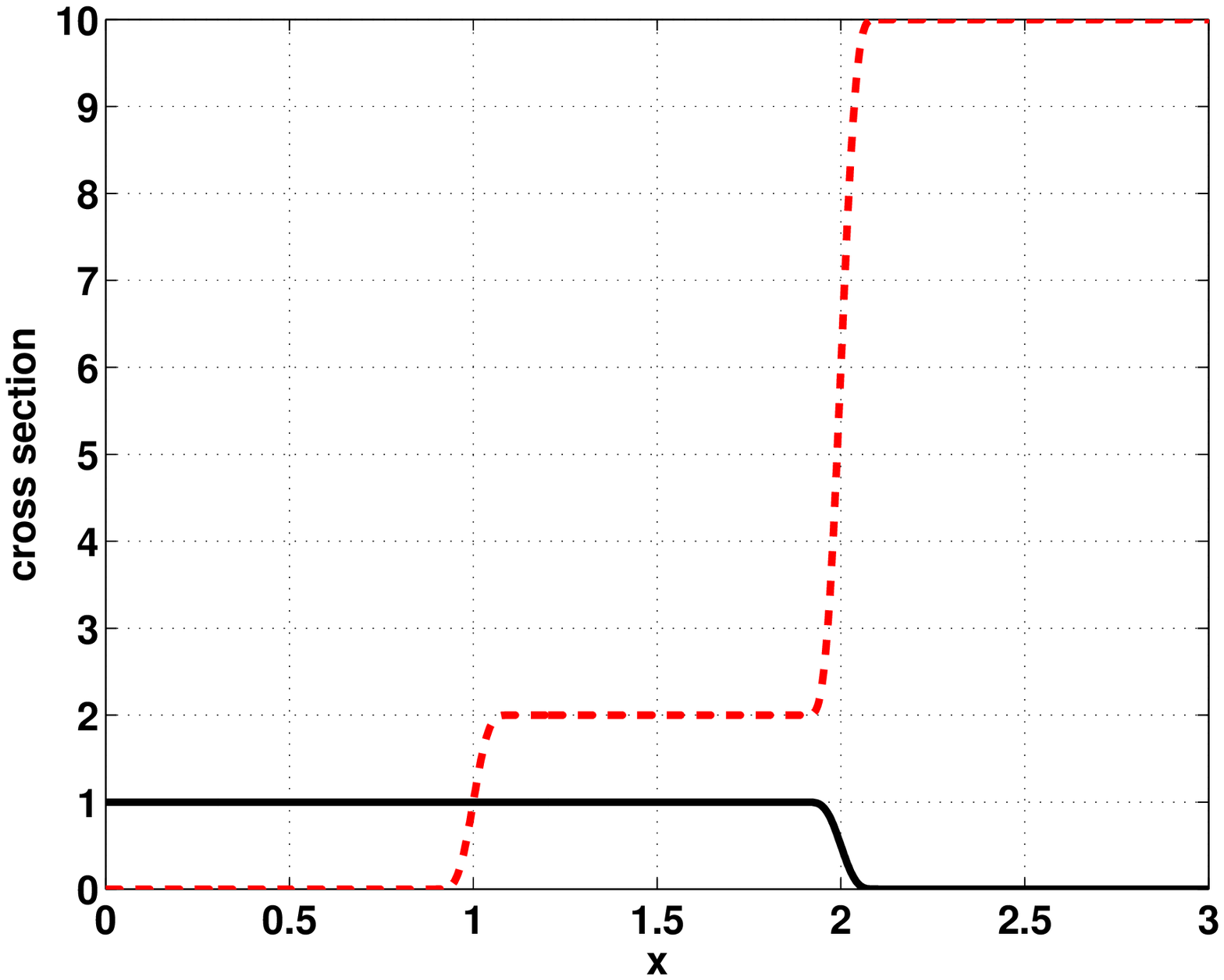}
\,
\includegraphics[scale=0.42]{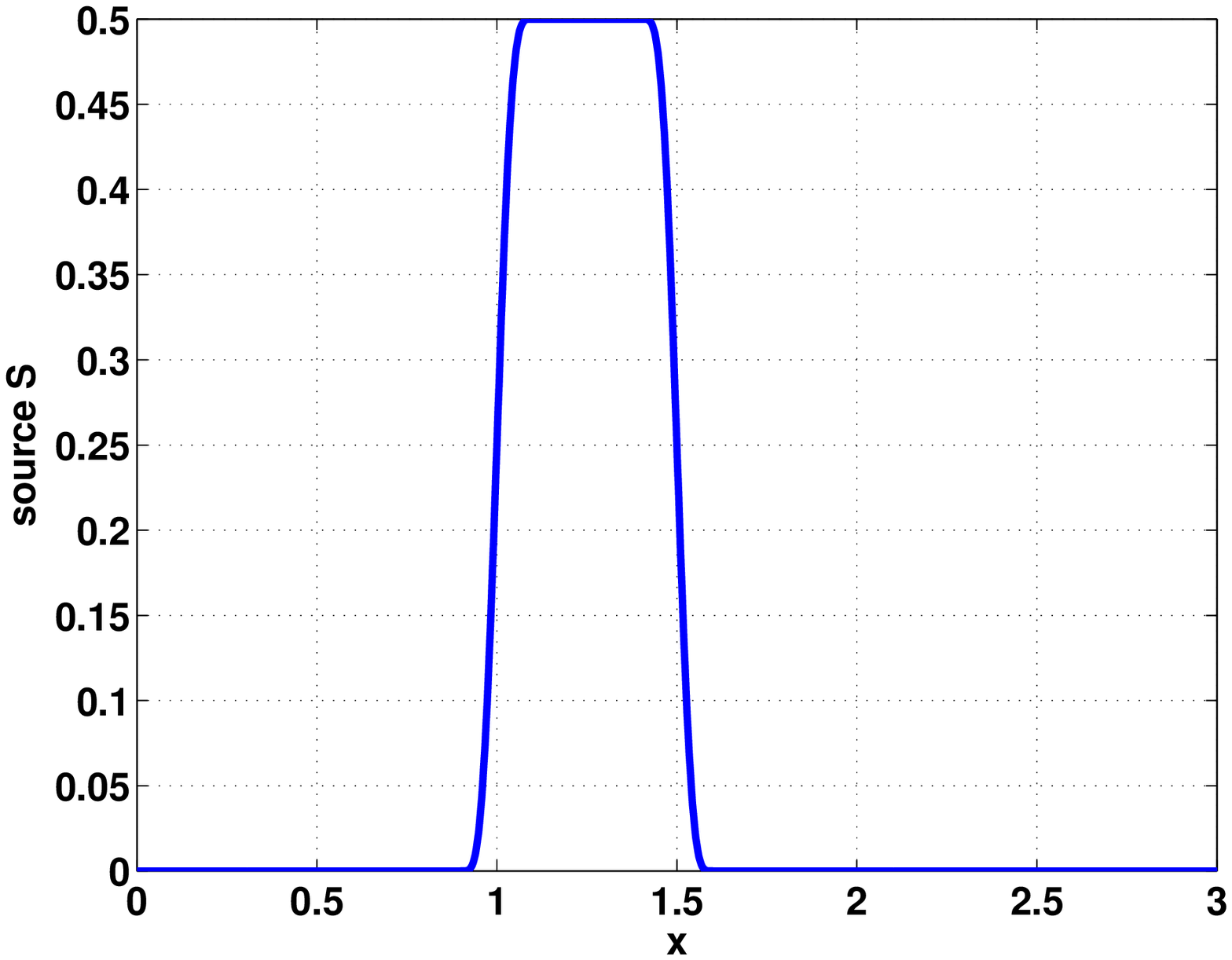}
\caption{Source-beam problem. Material properties with $\Delta=0.05$: $\sigma_s$ (red dashed line), $\sigma_a$ (black solid line), $S$ (blue solid line)}
\label{pict:SourceBeam_material}
\end{figure}%

On the right boundary, particles are absorbed and zero Dirichlet conditions 
\begin{equation}
\bu(3,t) =[\varepsilon, 0]^T, \quad t>0, 
\end{equation}
are set.
Initially, there are (almost) no particles in the system, i.e.,
\begin{equation}
\bu(x,0) =[\eps, 0]^T, \quad x\in  (0,3). 
\end{equation}
The value of $c$ is again set to one.

The perturbed $M_1$ results are compared to $M_1$ and transport solutions.
Classic $M_1$ calculations are performed using the DG method from
\cite{OlbHauFra11}, with the same computational parameters as the $PM_1$
model and slope limiting performed in the characteristic variables.  The
transport solution is computed using the discrete ordinates method with $600$ spatial cells and $256$ discrete
angles.

One can observe in \reffig{pict:HeatedWall} that as time increases, particles
entering from the left boundary encounter the source in the interior.  As this
happens the $M_1$ profile diverges from the transport solution. Even as
steady state is achieved at $t=4$, there is a large difference for $x\leq 1$.
The $PM_1$ profile agrees much better with the transport solution.

\subsection{Gaussian Source}
\label{sec_results_sub:gaussian}
The next test case simulates particles with an initial energy density that is
a Gaussian distribution in space and a zero energy density flux:
\[\bu(x,0) =\left [\frac{1}{\xi\sqrt{2\pi}} \, e^{ -\frac{x^2}{2\xi^2}}, 0
\right ]^T, \q \xi=0.1, \quad x\in  (-L,L). \]
Periodic boundary conditions on $[-L, L]$ are prescribed where $L=t_{\tx{final}}+1$.
The computational domain is always chosen large enough to ensure that a
negligible number of particles reaches the boundaries. No internal source is
assumed (so $S=0$), and the medium is purely scattering with
$\sig{s}=\sig{t}=1$. The velocity $c$ is also set to one and the material energy
equation \eqref{eq:material_energy} is neglected. All DG results are computed
with $h=0.01$ and polynomial degree $k=2$.  For comparison, discrete ordinates
solutions of the transport equations are obtained with $h=0.005$ and $128$
angular points.

\reffig{pict:Gaussian} displays the solutions at $t_{\tx{final}}=1,2,3,10$. The
$M_1$ model gives the expected wave effects that are washed out at larger
times. These effects do not occur in the perturbed $M_1$ results. However,
the $PM_1$ forms Gaussian bell that are higher and more narrow than the
benchmark solution. At lower times, their maximum propagation speed is roughly
half the correct velocity. Nevertheless, at $t_{\tx{final}}=10$ the
front of the $PM_1$ model catches up with the reference solution, at which
point all three models agree reasonably well.

\begin{figure}[h!]
\centering
\subfloat[$t_{\tx{final}}=1$]{\includegraphics[scale=0.52]{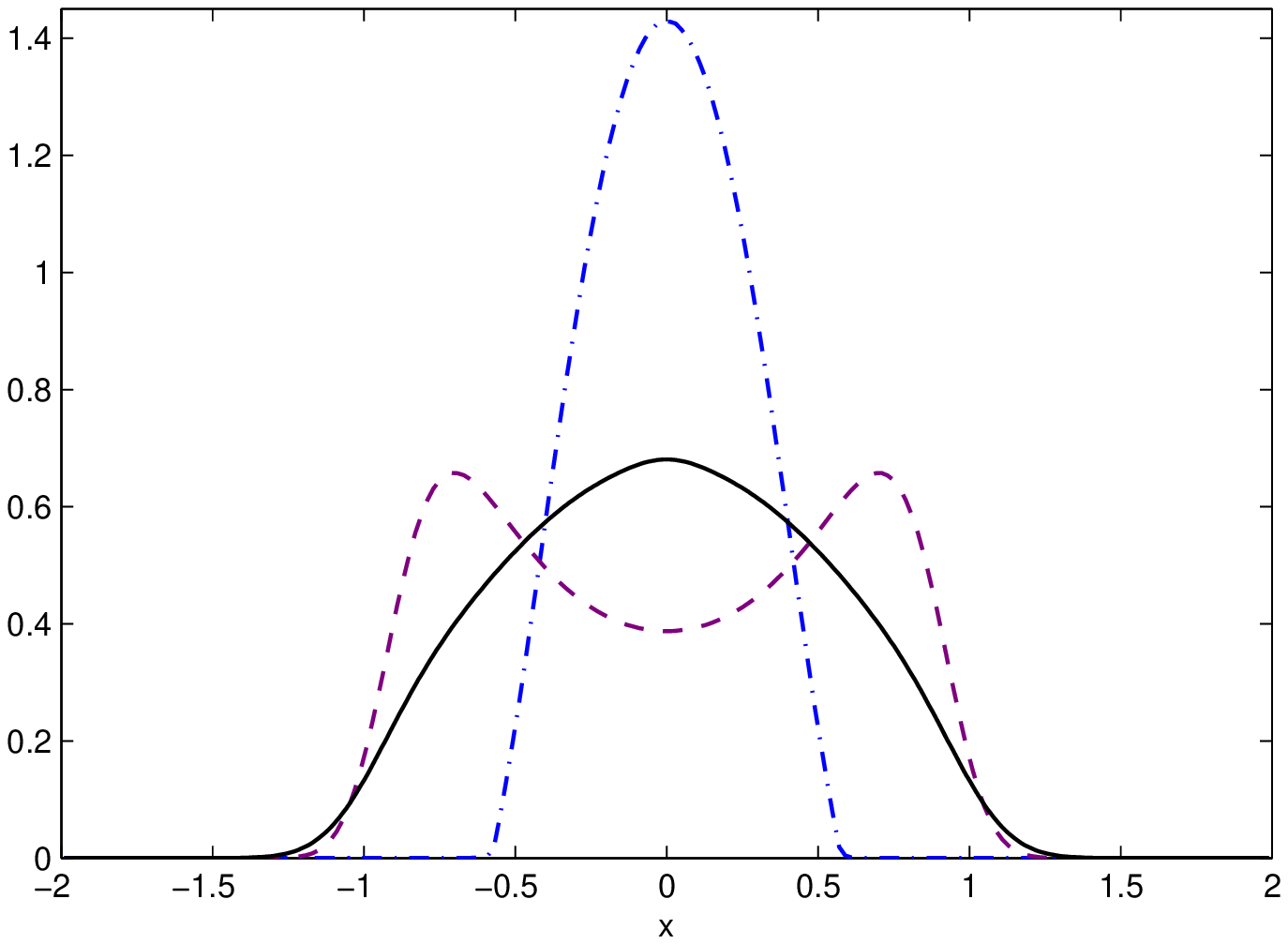}}
\,
\subfloat[$t_{\tx{final}}=2$]{\includegraphics[scale=0.52]{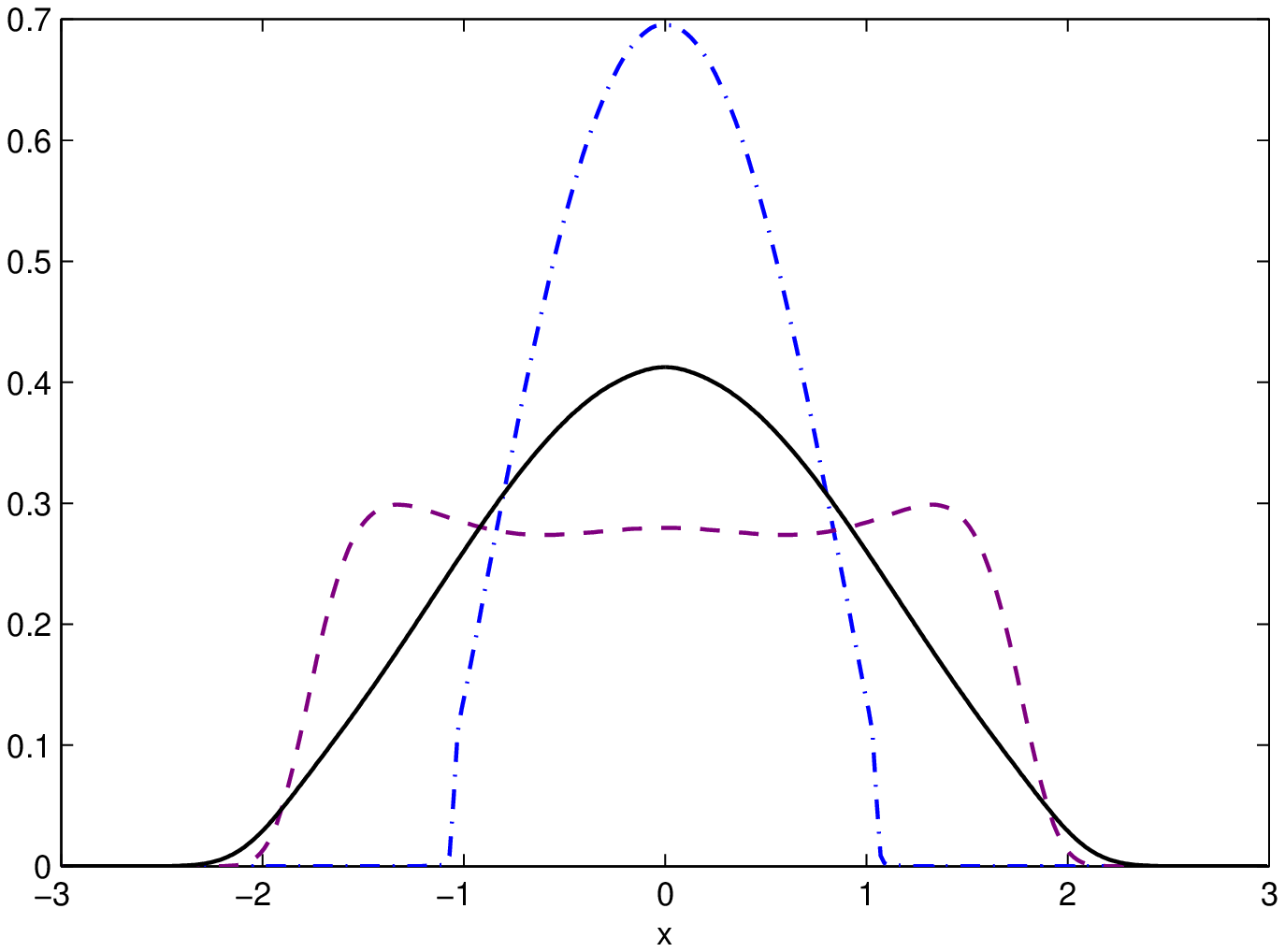}} \\
\subfloat[$t_{\tx{final}}=3$]{\includegraphics[scale=0.52]{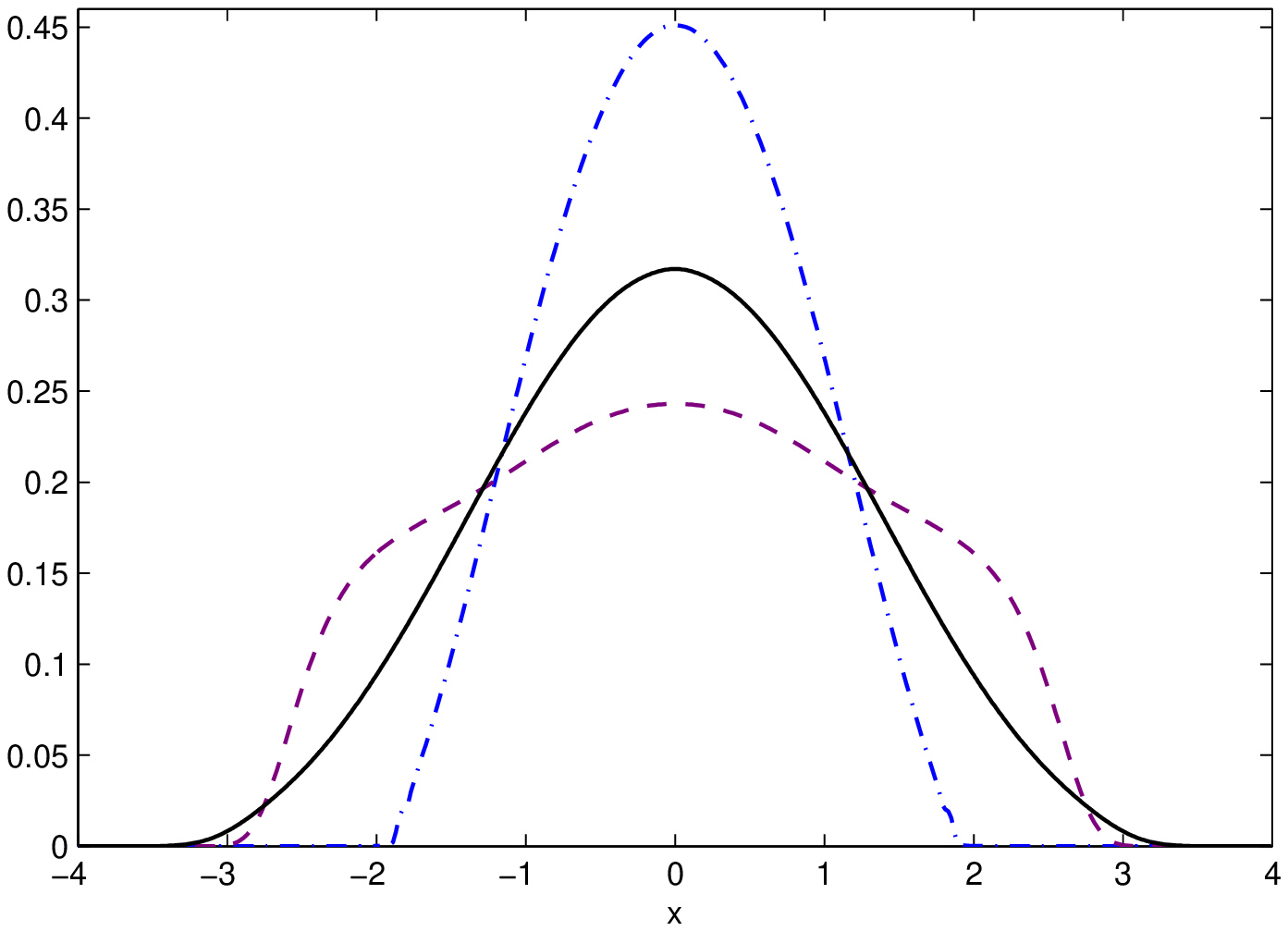}}
\
\subfloat[$t_{\tx{final}}=10$]{\includegraphics[scale=0.52]{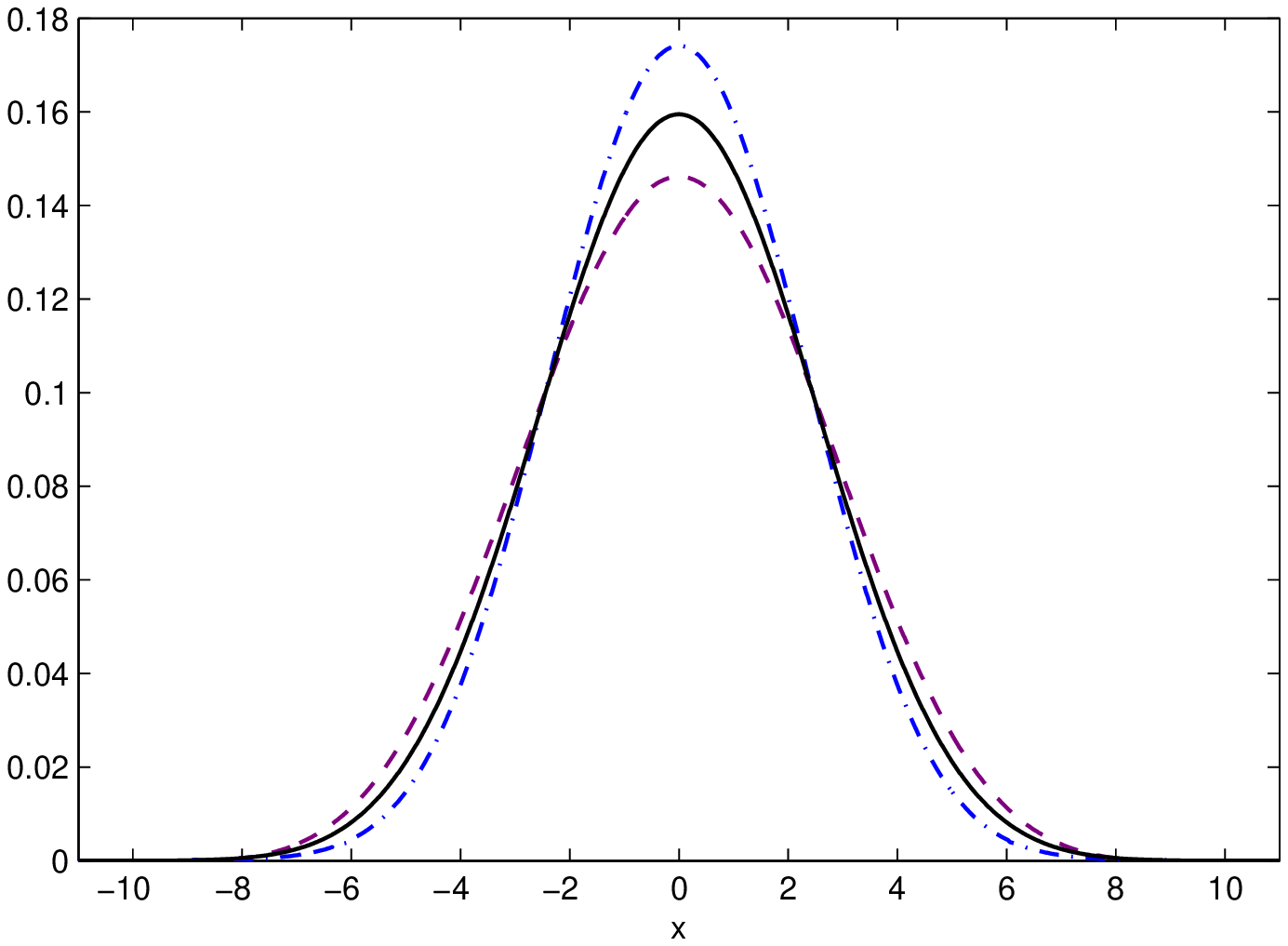}}
\caption{Plots of E for the Gaussian source. $J=100\, (t_{\tx{final}}+1)$, $k=2$: $M_1$ (purple circle line), perturbed $M_1$ (blue dash-dot line), transport (black solid line).}
\label{pict:Gaussian}
\end{figure}%

The perturbation $\tilde{\psi}$ from Section~\ref{sec:PEB} is related
to the difference between the $M_1$ and transport solution.
\reffig{pict:Gaussian} indicates that this quantity is highly time-dependent.
Additionally, the spatial gradient of $\tilde{\psi}$ is large at shorter times.
Hence, this numerical example violates the smallness assumptions made in the
derivation of the perturbed $M_1$ model in Section~\ref{sec:PEB}.  Thus the
lack of accuracy is not surprising.

\subsection{Including the Material Energy Equation}
\label{sec_results_sub:5}
The next two examples involve \eqref{eq:transport} coupling to the energy
equation \eqref{eq:material_energy}. The linearized Marshak wave problem from
\cite{SuOls97} is analyzed first and then a Marshak wave with material
parameters taken from \cite{OlbHauFra11}.

\subsubsection{Smoothed Su-Olson's Benchmark Problem}
\label{sec_results_sub:su_olson}

\begin{figure}[h!]
\centering
\subfloat[$t=0.1$]{\includegraphics[scale=0.5]{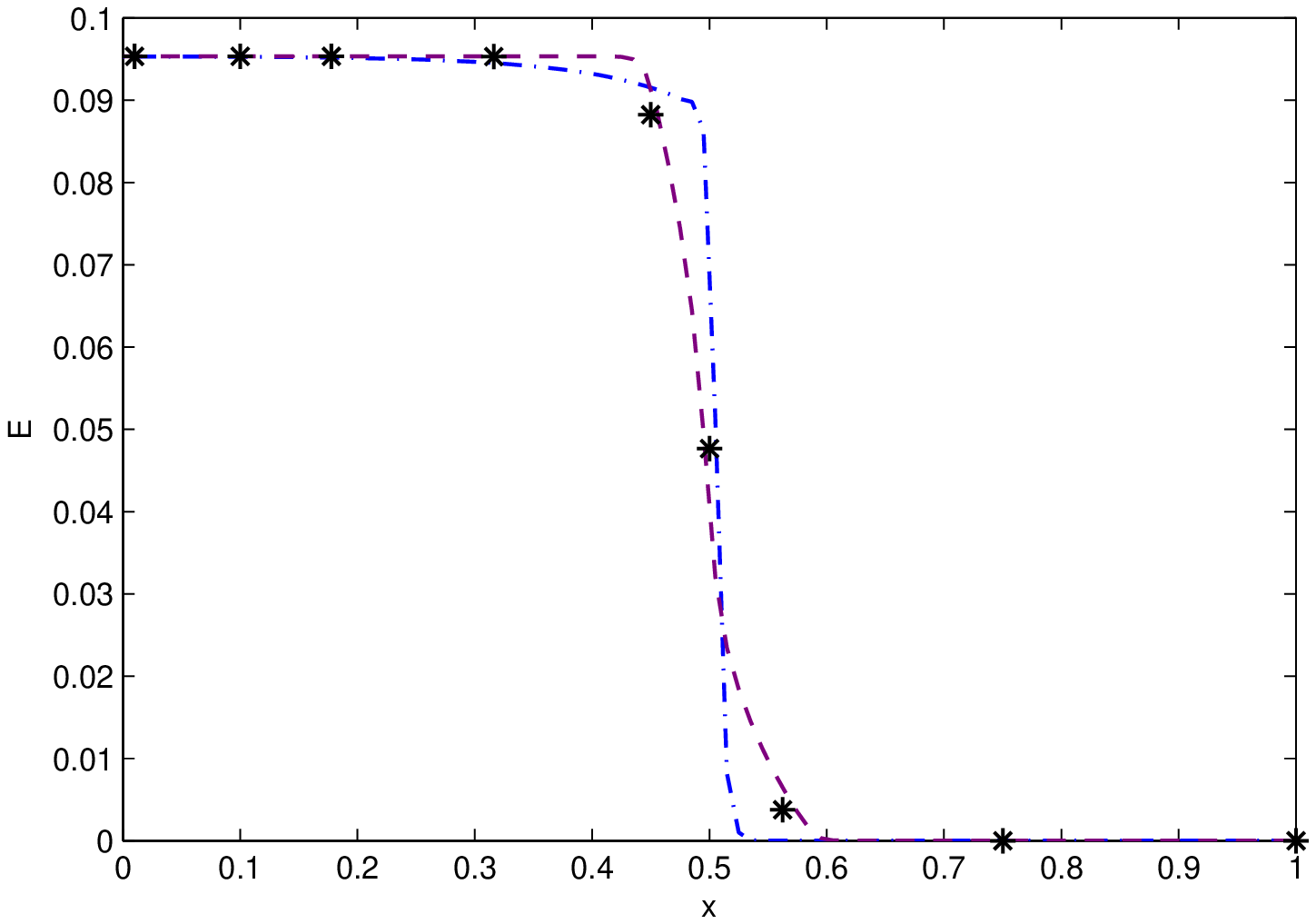}}
\,
\subfloat[$t=0.31623$]{\includegraphics[scale=0.5]{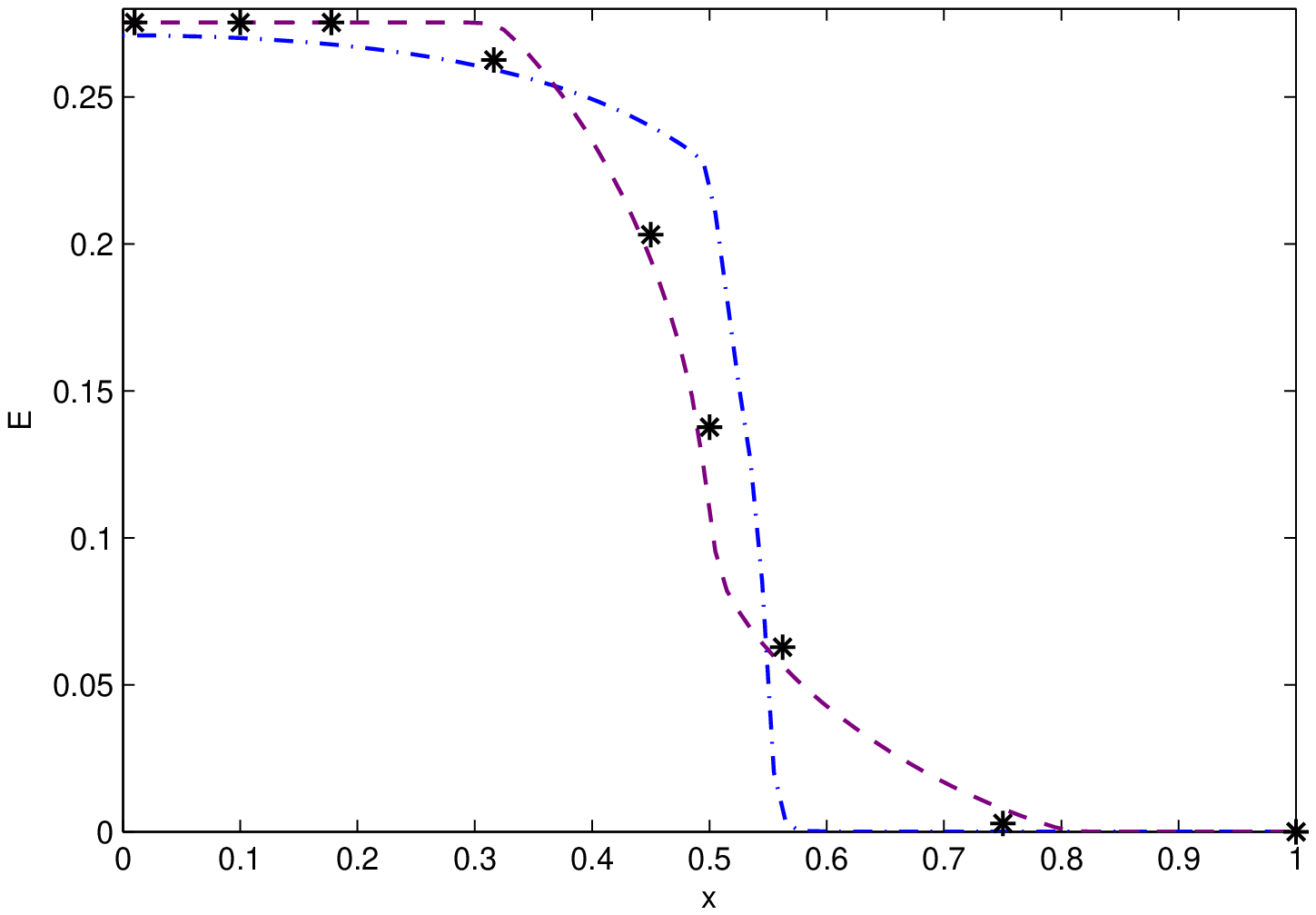}} \\
\subfloat[$t=1$]{\includegraphics[scale=0.5]{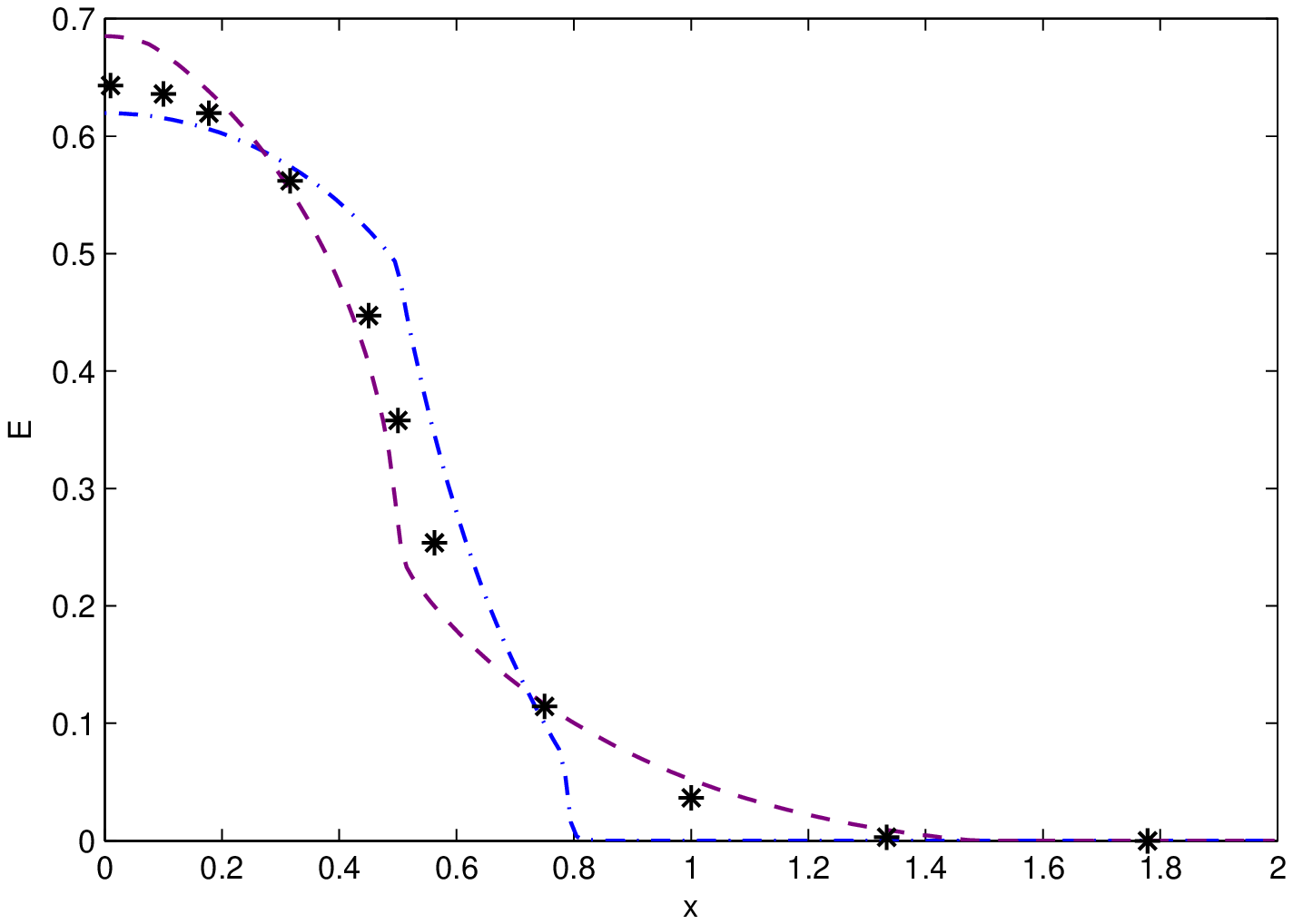}}
\,
\subfloat[$t=3.16228$]{\includegraphics[scale=0.5]{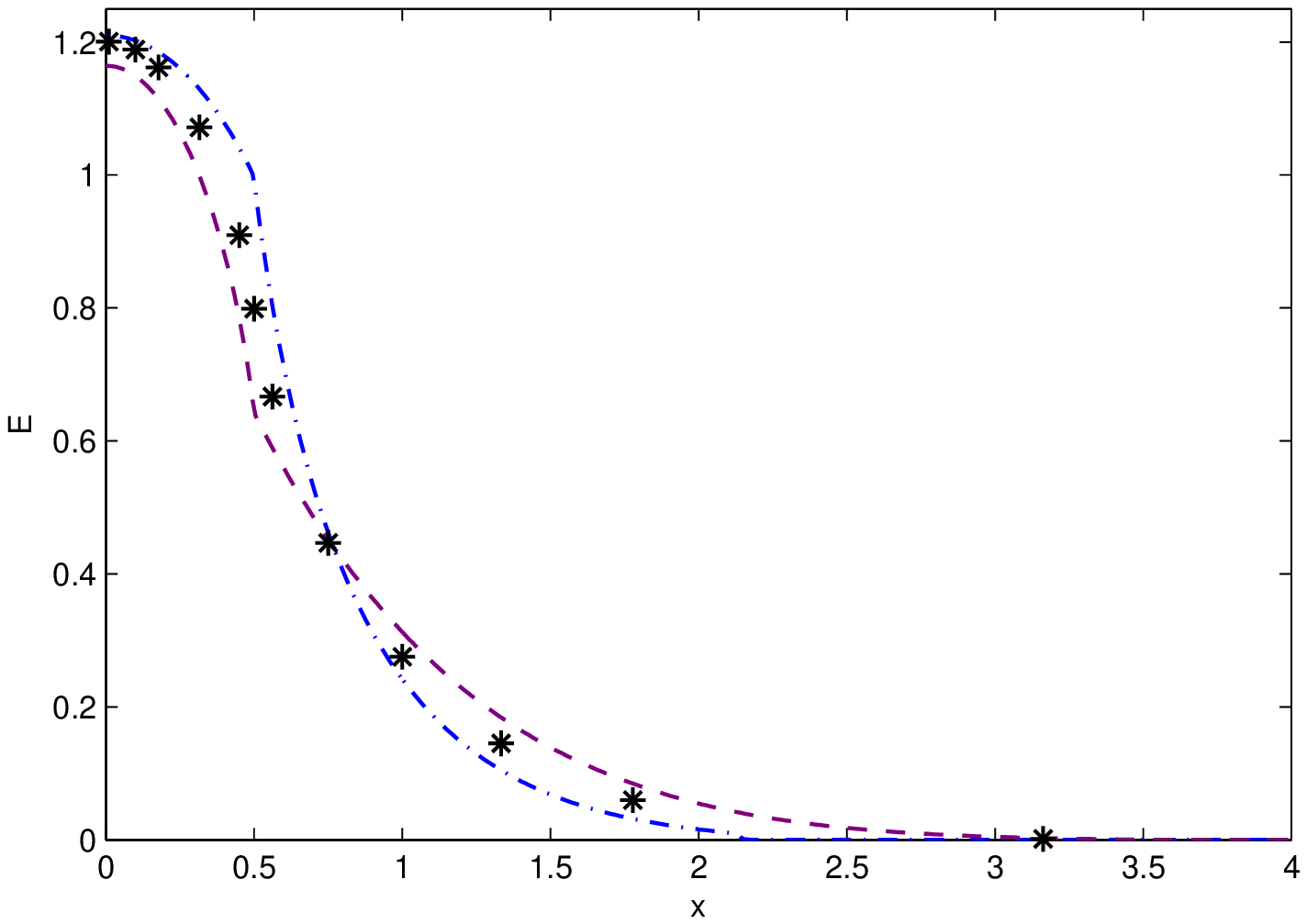}}
\caption{Plots of E for the Su-Olson Problem. $J=200$, $k=2$: $M_1$ (purple dashed line),
perturbed $M_1$ (blue dash-dot line), semi-analytic (black asterisk).}
\label{pict:SuOlson}
\end{figure}%

In \cite{SuOls97}, tabulated data is provided for analytic solutions to a
linearized Marshak wave problem, which serves as a validation of numerical
algorithms in the radiative transfer community. In particular, this
semi-analytic benchmark is also compared to diffusion-corrected $P_N$
approximations in \cite{SchFraLev11}. It is therefore of interest to study
solutions of the perturbed $M_1$ model to this problem.

We compute approximations to \eqref{eq:transport} and
\eqref{eq:material_energy} in slab geometry with the following physical data
\[C_v = T^3, \q c=1, \q \sig{a}=1=\sig{t}. \]
As in the source-beam problem, we seek to avoid the discontinuous material
properties in \cite{SuOls97} by introducing a smoothed version:
\[ S(x,t) = \begin{cases}
				(1+p_H(\frac{x+0.5}{\Delta}))/2, & -0.5-\Delta \leq x \leq -0.5+\Delta, \\
				1, & -0.5+\Delta < x < 0.5-\Delta, \\
				(1-p_H(\frac{x-0.5}{\Delta}))/2, & 0.5-\Delta \leq x \leq 0.5+\Delta, \\
				0, & \tx{else,}
                \end{cases}
\]
where $p_H$ is the Hermite polynomial described in
Section~\ref{sec_results_sub:source_beam}. However, we still 
compare our numerical results to the semi-analytic solutions from \cite{SuOls97}
because the length of the smoothing window $2\Delta=0.02$ is relatively small.

Initially, the medium is cold and there is no radiation: \[\psi(x,\mu,\nu,t=0) =
0 \quand  T(x,t=0) = 0.\] Additionally, zero boundary conditions are enforced on
an infinite domain: \[ \lim_{x\rightarrow\pm\infty} \psi(x,\mu,\nu,t) = 0 \quand
 \lim_{x\rightarrow\pm\infty} T(x,t) = 0. \] In practice, we impose periodic
boundary conditions on a large domain $[-L,L]$ where $L=\lfloor t_{\rm{final}}
\rfloor+1$.

Solutions at different times are provided in \reffig{pict:SuOlson} for the half
plane $x\geq 0$. A grid size of $h=0.01$ and polynomial degree of $k=2$ are
chosen for all DG solutions. Classic $M_1$ computations are slope limited in the
characteristic variables. They are throughout close to the semi-analytic
results. However at earlier times, the perturbed $M_1$ solutions have larger
slopes at $x\approx 0.5$. There, the perturbed $M_1$ model yields
larger deviations from the reference. Only at $t=3.16228$ solutions from both
models are close to each other as well as to the semi-analytic points.


\subsubsection{Thin Marshak Wave}
\label{sec_results_sub:marshak}
In this problem, incoming radiation is prescribed on the left boundary by
well-posed boundary conditions,
 \begin{align*}
  T(0,t) &= 1, \quad T(1,t)=0, \quad t>0, \\
 \bu(0,t) &=[T(0,t)^4 a, 0.8\cdot T(0,t)^4 ac]^T, \quad \bu(1,t) =[2\veps, 0]^T,
 \quad t>0,
 \end{align*}
and the material is assumed to be purely absorbing:
\[ \sig{a} (T) = \frac{1}{(T +0.5)^3} \frac{\tx{keV}^3}{\tx{cm}}, \quad
 \sigma_s=0, \q S=0. \]
The physical constants are given by
\begin{align*}
c &= 3\cdot 10^{10} \, \tx{cm/s} & \tx{speed of light}, \\
a &= 1.372 \cdot 10^{14} \, \tx{erg/(cm}^3 \tx{keV}^4) & \tx{radiation constant}, \\
C_v &= 3\cdot 10^{15} \, \tx{erg/(cm}^3 \tx{keV)}  & \tx{heat capacity},
\end{align*}
which implies units of cm$^{-1}$ for cross sections and keV for temperature $T$.
Initially, the material is cold
\begin{align*}
 T_0(x) & = 5\cdot 10^{-4}, \quad x\in  (0,1), \\ 
 \bu(x,0) &=[T_0(x)^4 a, 0]^T, \quad x\in  (0,1).
 \end{align*}

Due to above incoming radiation on the left boundary, radiation propagates
through the medium from left to the right. The material temperature $T$
(\reffig{pict:ThinMarshak}a) and the energy density $E$
(\reffig{pict:ThinMarshak}a) decay smoothly to zero. The $M_1$ and \ptm models
yield very similar solutions. For comparison, a reference solution is computed
using a $P_{99}$ model that is calculated with the DG method from
\cite{McCEvaLow08}. The simulation of the $P_{99}$ model uses linear elements
and $800$ spatial cells.  The reference solution shows a much stronger decay in
the energy and material temperature.
\begin{figure}
\centering
\subfloat[Temperature at $t=0.1$ns.]{\includegraphics[scale=0.95]{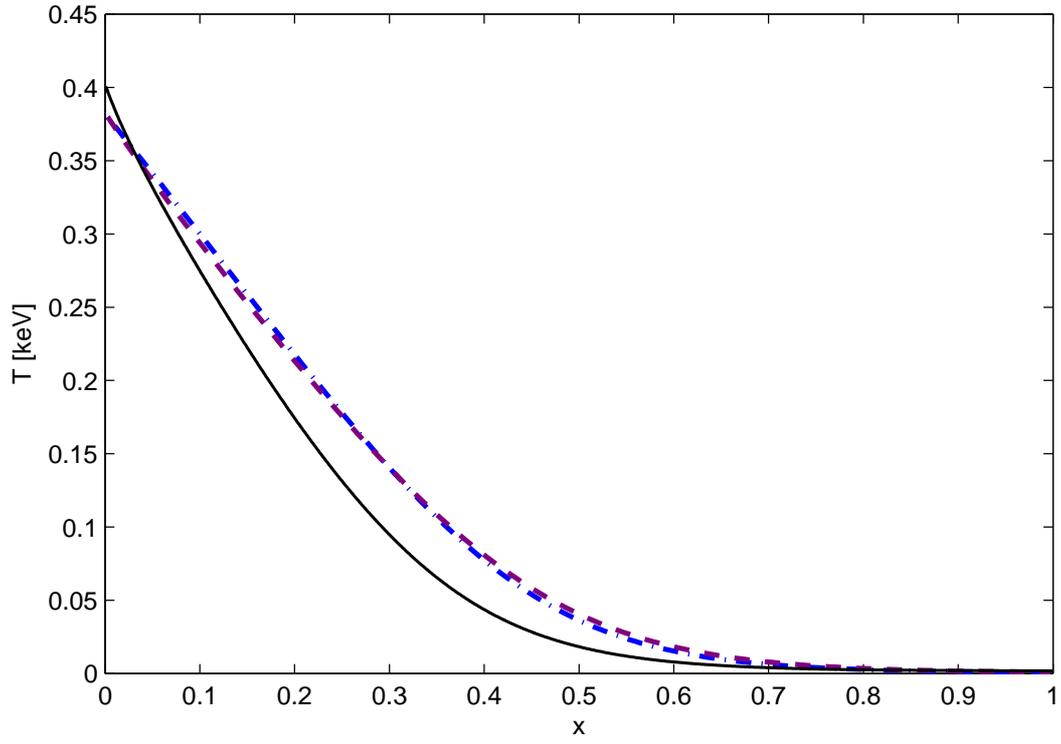}}
\,
\subfloat[Scaled scalar flux $a^{-1}E$ at $t=0.1$ns.]{\includegraphics[scale=0.95]{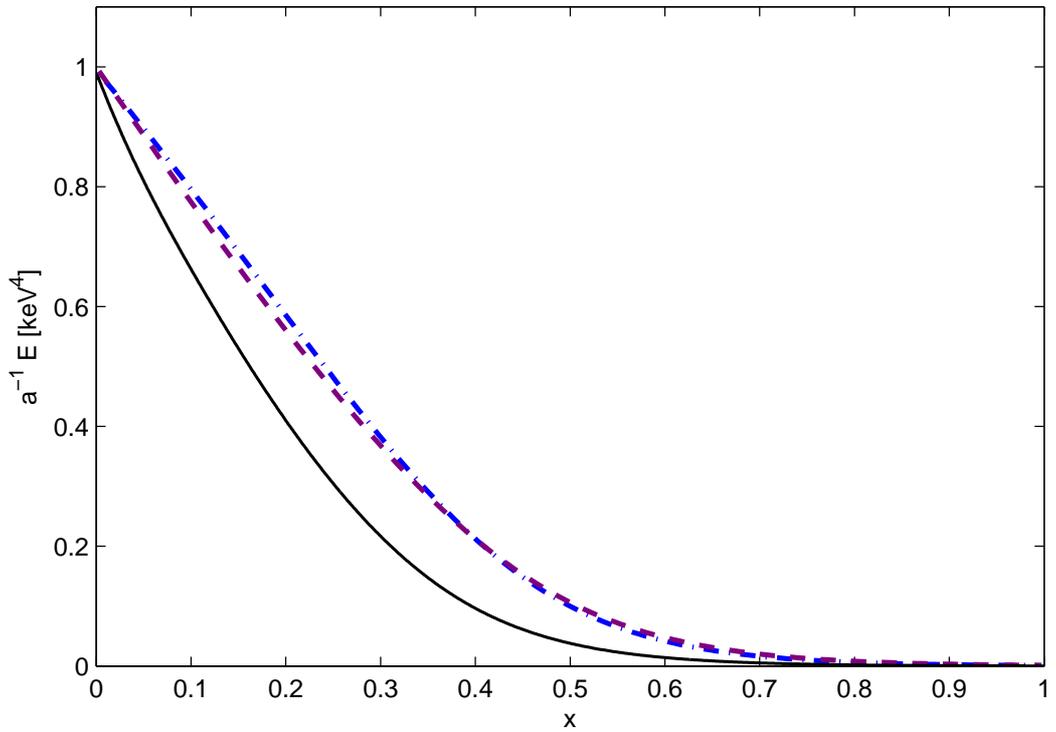}}
\caption{Thin Marshak wave. $J=160$, $k=2$: $M_1$ (purple dashed line), perturbed $M_1$ (blue dash-dot line), $P_{99}$ (black solid circle line).}
\label{pict:ThinMarshak}
\end{figure}
\section{Discussion and Conclusions}
\label{sec:disc}

In this paper, we have derived a hierarchy of closures based on perturbations
of well-known entropy-based closures.  The derivation has been done in the
context of grey photon transport.  Our derivations reveals final equations
containing an additional convective and diffusive term which are added to the
flux term of the standard closure. This is different to perturbations to
standard $P_N$ closures \cite{SchFraLev11} which only gain a diffusive
component.

For the first member of the hierarchy, the \ptm model, we compute explicit
formulas for all terms.  The resulting system of equations is a
convection-diffusion system which, for slab geometries, is discretized by using
a Runge-Kutta discontinuous Galerkin method. By introducing a special limiter
and an additional control parameter to modify the pressure term, we ensure that
cell averages of the moments satisfy the important realizability property
\eqref{eq:real}.

We perform simulations to compare qualitative results with the $M_1$ model and
with highly-resolved discretizations of the original transport equation.
Improvements to the standard $M_1$ model are observed in cases where unphysical
shocks develop in the classical $M_1$ model. However, for problems with
continuous solutions, there is little or no improvement; and in some cases, the
$M_1$ model perform marginally better.

Finally, we discuss some open problems in this framework which might be addressed in future:
\begin{itemize}
\item{
Moment systems from entropy-based closures are known to be hyperbolic and
satisfy a local dissipation law \cite{Levermore-1996,Dubroca-Feugas-1999},
but such results are not yet known for the perturbed models.  The partial result
in Proposition~\ref{prop:dissipation} confirms that the additional diffusive
term dissipates the entropy. Moreover, neglecting the diffusion contribution the
system indeed becomes hyperbolic for the special case of the $M_1$ model in
one dimension (i.e., for slab geometry).}
\item{
Standard issues in analysis such as existence and uniqueness of solutions have
yet
to be investigated for the $PM_1$ model or for PEB models in general.}

\item{
In Section \ref{sec_results_sub_sub:set_delta}, the control parameter $\delta$
is chosen to guarantee conditions (C1)-(C2). However, this ansatz is a crude
modification of the original perturbative model and could distort numerical
solutions.  A more subtle approach, possibly along the lines of flux-limited
diffusion, e.g. \cite{LevPom81}, would be preferable.}
\item{ 
An undesirable aspect of the RKDG method is the time step restriction required
by the explicit time integrator. For convection-diffusion equations, stiff
sources, and/or long time scales, this time step restriction is very
harsh. To lower the computational effort, (semi-)implicit time
discretizations are therefore necessary.  In addition, the method does not
address the challenges of spatially discontinuous fluxes that arise from
discontinuities in sources and material cross-sections.}
\item{
Another issue is the formation of unphysical shocks in the (perturbed) $M_1$
solution.  Further analysis is needed to understand why and when they appear
and how to further mitigate them.}
\end{itemize}

\appendix

\section{Appendix}

\begin{proof}[Computation of equation \eqref{eq:p_id}]
We first calculate two frequency integrals.  Let $\kappa:= \frac{h c}{k}
\ahat^T \bm$ and $\theta:= -\kappa \nu$.
Then
\begin{multline}
     \fint{\etad' \! \left(\frac{-h\nu c}{k} \ahat^T \bm \right)} {\nu}
     = \fint{ \frac{2h\nu^3 }{c^2} \frac{1}{\exp(-\kappa \nu)-1}}{\nu} \\
     =  - \frac{2h}{c^2 \kappa^4}
\fint{\frac{\theta^3}{\exp(\theta)-1}} {\theta}
    =  -\frac{2 \pi^4h}{15 c^2\kappa^4}
\end{multline}
and
\begin{multline}
     \fint{\etad'' \!\left(\frac{-h\nu c}{k}
\ahat^T  \bm\right)}{\nu}
     = \fint{ \frac{2h^2 \nu^4 }{kc}
\frac{\exp(-\kappa \nu)}{[\exp(-\kappa \nu)-1]^2}}{\nu} \\
     =  \frac{2h^2}{kc\kappa^5}
\fint{\frac{\theta^4\exp(\theta)}{[\exp(\theta)-1]^2}} {\theta}
    =  \frac{8\pi^4h^2}{15 kc\kappa^5}.
\end{multline}

With these two integrals, it is easy to show that $-\ahat/4$ is the unique
solution to the linear system
\begin{equation}
\Vint{\bm \bm^T \etad'' \!\left(\frac{-h\nu c}{k} \ahat^T  \bm\right)}
\bsbeta
    = \Vint{\bm \etad' \!\left(\frac{-h\nu c}{k} \ahat^T  \bm\right)},
\end{equation}
so that
\begin{equation}
\Vint{\bm \bm^T \etad'' \!\left(\frac{-h\nu c}{k} \ahat^T  \bm\right)}^{-1}
\Vint{\bm \etad' \!\left(\frac{-h\nu c}{k} \ahat^T  \bm\right)}
= -\frac{\ahat}{4}.
\end{equation}
Using the definition of $\P$,
\begin{equation}
\P \cE(\bu) = \frac{\ahat^T \bm}{4}\etad'' \!\left(\frac{-h\nu c}{k}
\ahat^T  \bm\right),
\end{equation}
and again the integrals above:
\begin{multline}
\fint{\P \cE(\bu)}{\nu} = \frac{-\ahat^T \bm}{4}\frac{8\pi^4h^2}{15
kc\kappa^5}
= \frac{-k\kappa}{4hc}\frac{8\pi^4h^2}{15
kc\kappa^5} \\
= -\frac{2 \pi^4h}{15 c^2 \kappa^4}
= \fint{\etad' \!\left(\frac{-h\nu c}{k} \ahat^T  \bm\right)}{\nu}
= \fint{\cE(\bu)}{\nu}
\end{multline}
\end{proof}

%
%
\begin{proof}[Proof of Lemma \ref{lem:pd_pc}]
The proof is a straight-forward calculation. It turns out to be more efficient
to calculate $\pd$ and $\pc$ without
directly using \eqref{eq:E_prime}.  Instead for any function $g$, we compute
\begin{equation}
 \vint{(\Omega \vee \Omega) \Pt g}
=  \vint{(\Omega \vee \Omega) g} - \sum_{k=0}^1 \frac{\p \Vint{(\Omega \vee \Omega)
\cE}}{\p\bu_k}  \vint{\bm_k g}
=  \vint{(\Omega \vee \Omega)g} - \sum_{k=0}^1 \frac{ \p \pe}{\p \bu_k}  \vint{\bm_k
g}. \label{eq:calc}
\end{equation}
Using \eqref{eq:calc}, we find for the diffusive
correction,
\begin{align}
\pd &= \frac{1}{c \sigt}\Vint{(\Omega \vee \Omega)\psid}  \nonumber \\
 &= -\frac{1}{c \sigt}\Vint{(\Omega \vee \Omega)\Pt (\Omega \cdot \grad_x
\cE(\bu)) }
\nonumber \\
  &= -\frac{1}{c \sigt}\div \Vint{(\Omega^{\vee 3}\cE(\bu)}
 +  \frac{1}{c \sigt}\sum_{k=0}^{1}\frac{ \p \pe}{\p \bu_k}  \div \vint{\bm_k
\Omega  \cE}
\nonumber \\
&=  -\frac{1}{c \sigt} \div \qe
 +  \frac{1}{c \sigt} \frac{ \p \pe}{\p E}  (\div F)
 +  \frac{1}{\sigt} \frac{ \p \pe}{\p F }  (\div \pe)
\end{align}
For the convection correction, we use the fact that $\phi$, $B$ and $S$ are
independent of $\Omega$.  This implies that $\vint{\bm_1 \phi} \equiv
\vint{\Omega \phi} = 0$ and similarly for $B$ and $S$.  We also use the Stefan
Boltzmann-Law \eqref{eq:SB_law} and the identity $\bu_0 = cE = \int_0^\infty
\phi \,
d\nu$. This gives
\begin{align}
\pc &= \frac{1}{c}\Vint{(\Omega \vee \Omega)\psic} \nonumber \\
  &= \frac{\rs}{4\pi c}
  \Vint{(\Omega \vee \Omega)\Pt \phi}
  + \frac{\ra}{4\pi c}
  \Vint{(\Omega \vee \Omega)\Pt B}
    + \frac{1}{4\pi c \sigt}
  \Vint{(\Omega \vee \Omega)\Pt S}
\nonumber \\
 & = \frac{\rs}{4\pi c}
  \left(\vint{(\Omega \vee \Omega)\phi} - \frac{ \p \pe}{\p E} \vint{\phi}
\right)
+ \frac{\ra}{4\pi c} \left(\vint{(\Omega \vee \Omega)B}- \frac{ \p \pe}{\p
E} \vint{B} \right)  \nonumber \\
& \qquad
+ \frac{1}{4\pi c \sigt} \left(\vint{(\Omega \vee \Omega)S}- \frac{ \p \pe}{\p
E} \vint{S}
\right) \nonumber \\
  &= \left( \third \Id -  \frac{ \p \pe}{\p E} \right)
  \left( \rs E + \ra a T^4  + \frac{S}{c \sigt} \right).
\end{align}
\end{proof}

%
%

\begin{proof}[Proof of Lemma \ref{lem:pe_qe}]
Let $\{\be_1, \be_2, \be_3\}$ be any orthogonal basis for $\bbR^3$.  Then
\begin{equation}
\Omega = \sum_{i=1}^3 \Omega_i \be_i \,, \quad \Omega_i := (\Omega \cdot \be_i)
\,, \quad \sum_{i=1}^3 \Omega_i^2 = 1, \label{eq:omega_expand}
\end{equation}
and
\begin{equation}
\Vint{(\Omega^{\vee k})\E} = \Vint{\left (\sum_{i=1}^3 \Omega_i
e_i\right)^{\vee k}\E} .
\end{equation}
Now set $\be_3 = \bn=F/|F|$ and note that, according to Lemma \ref{lem:colinear}
below, $\E$ depends on $\Omega$ only through $\Omega_3$. Thus, only the terms
with even
powers of $\Omega_1$ and $\Omega_2$
will survive.  For $k=2$, this means
\begin{equation} 
c \pe
= \Vint{\Omega_1^2\E}\be_1 \vee \be_1
 +\Vint{\Omega_2^2\E}\be_2 \vee \be_2
+\Vint{\Omega_3^2\E}\bn \vee \bn, \label{eq:pe_calc}
\end{equation}
and for $k=3$,
\begin{equation} \label{eq:qe_calc}
\qe
= 3\Vint{\Omega_1^2 \Omega_3 \E}\be_1 \vee \be_1 \vee \bn
 + 3\Vint{\Omega_2^2 \Omega_3 \E}\be_2 \vee \be_2 \vee \bn
 + \Vint{\Omega_3^3\E}\bn^{\vee 3}.
\end{equation}
The goal then is to write these formulas in terms of $\Omega_3$ only.  Let us
focus first on $\pe$. Because $\E$ depends only on $\Omega_3$, symmetry
arguments can be used to
conclude that first two terms in \eqref{eq:pe_calc} are the same.  Combined
with the far right relation \eqref{eq:omega_expand}, this gives
\begin{align}
c \pe
&= \Vint{\Omega_1^2\E}(\be_1 \vee \be_1 + \be_2 \vee \be_2)
+\Vint{(\Omega_3^2\E}\bn \vee \bn \nonumber \\
&= \Vint{\Omega_1^2\E}(\be_1 \vee \be_1 + \be_2 \vee \be_2 + \bn \vee \bn)
+\Vint{(\Omega_3^2-\Omega_1^2)\E} \bn \vee \bn \nonumber \\
&= \half \Vint{(1-\Omega_3^2)\E} \tx{Id}
+\half \Vint{(3\Omega_3^2-1)\E} \bn \vee \bn ,
\end{align}
where we have used the fact that $\be_1 \vee \be_1 + \be_2 \vee \be_2 + \bn \vee
\bn$ is the identity.  From the definition of $\chi_2$, we conclude that
\begin{equation}
\pe = \frac{E}{2}\left[(1- \chi_2) \Id + (3\chi_2-1)(\bn \vee \bn)\right]. \\
\end{equation}
Similarly for $k=3$,
\begin{align}
\qe
&= 3\Vint{\Omega_1^2\Omega_3\E}(\be_1 \vee \be_1 + \be_2 \vee \be_2 + \bn \vee
\bn) \vee
\bn
+\Vint{(\Omega_3^2-3\Omega_1^2)\Omega_3\E} \bn^{\vee 3} \nonumber \\
&= \frac{3}{2} \Vint{(1-\Omega_3^2)\Omega_3\E} \Id \vee \bn
+ \half \Vint{(5\Omega_3^2-3)\Omega_3\E} \bn^{\vee 3} \nonumber \\
& = \frac{3cE}{2}\left[(\chi_1 - \chi_3) (\Id \vee \bn) + (5 \chi_3 - 3\chi_1)
\bn^{\vee 3}\right].
\end{align}
%
%
%

\end{proof}

\begin{lem} \label{lem:colinear}
For the $M_1$ model, the multiplier $\hat{\alpha}_1$ is co-linear with $F$,
that is
\begin{equation}
    \frac{\hat{\alpha}_1}{|\hat{\alpha}_1|} = \frac{F}{|F|}
\end{equation}
\end{lem}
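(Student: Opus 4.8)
The plan is to exploit the rotational structure of the $M_1$ ansatz. Recall that for the $M_1$ model the entropy ansatz is $\cB(\ahat) = \frac{2h\nu^3}{c^2}\left(\exp\left(-\frac{h\nu c}{k}(\hat\alpha_0 + \ahat_1 \cdot \Omega)\right) - 1\right)^{-1}$, so it depends on $\Omega$ only through the scalar $\ahat_1 \cdot \Omega$. The moment constraint that fixes $\ahat$ is $\bu = \vint{\bm\, \cB(\ahat)}$; in particular the flux component reads $F = \vint{\Omega\, \cB(\ahat)}$. First I would observe that both sides of this relation transform equivariantly under rotations $\cO \in SO(3)$: since $d\Omega$ is rotation-invariant and $\cB(\ahat)$ depends only on $\ahat_1\cdot\Omega$, replacing $\ahat_1$ by $\cO\ahat_1$ and changing variables $\Omega \mapsto \cO\Omega$ shows that $\vint{\Omega\,\cB(\cO\ahat_1, \hat\alpha_0)} = \cO\,\vint{\Omega\,\cB(\ahat)} = \cO F$. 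Hence the map $\ahat_1 \mapsto F$ intertwines the $SO(3)$-action on the domain and range.

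**Key steps.** The second step is to use this equivariance to pin down the direction. Take any rotation $\cO$ fixing the axis $\ahat_1$ (i.e. $\cO\ahat_1 = \ahat_1$); then by the uniqueness of the minimizer in Theorem~\ref{thm:DubFeu} — the dual problem \eqref{eq:dual_optimization} is strictly convex, so $\ahat$ is the unique multiplier realizing $\bu$ — the equivariance forces $\cO F = F$. A vector fixed by every rotation about the axis $\ahat_1$ must itself be parallel to $\ahat_1$, so $F = \lambda\,\ahat_1$ for some scalar $\lambda$. The third step is to show $\lambda > 0$, i.e. that $F$ and $\ahat_1$ point the \emph{same} way rather than opposite ways. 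Align the axis with $\be_3$, write $\mu = \Omega\cdot\be_3$, and compute
\begin{equation}
F\cdot\ahat_1 = |\ahat_1|\int_0^\infty\!\!\int_{-1}^{1} 2\pi\mu\, \cB\big(-\tfrac{h\nu c}{k}(\hat\alpha_0 + |\ahat_1|\mu)\big)\,d\mu\,d\nu.
\end{equation}
Since $\cB(\etad'(\,\cdot\,))$ is the derivative of the strictly convex $\etad$, it is a strictly increasing function of its argument, hence $\mu \mapsto \cB(-\frac{h\nu c}{k}(\hat\alpha_0+|\ahat_1|\mu))$ is monotone in the same sense as the coefficient $-|\ahat_1|\le 0$; pairing a monotone function against the odd weight $\mu$ over the symmetric interval $[-1,1]$ gives a sign, and one checks the sign works out so that $F\cdot\ahat_1 \ge 0$, with equality only if $\ahat_1 = 0$ (the isotropic case $F=0$). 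Therefore $\lambda = |F|/|\ahat_1| > 0$ whenever $F \ne 0$, giving $\ahat_1/|\ahat_1| = F/|F|$.

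**Main obstacle.** The genuinely delicate point is the sign determination in the third step: the bare equivariance argument only yields collinearity, and getting the \emph{orientation} right requires carefully tracking the minus signs through the definition of $\etad$ and $\cB$ together with the monotonicity/rearrangement estimate on $[-1,1]$. A clean way to organize this is to note that $\cB(\ahat) = \etad'(\ahat^T\bm)$ with $\etad$ strictly convex, so $\etad'$ is strictly increasing; combined with $\frac{\partial}{\partial(\ahat_1\cdot\Omega)}\,\cB \propto -\etad''<0$ appearing with the explicit negative prefactor $-h\nu c/k$, one sees $\cB$ increases as $\mu$ decreases along the axis, and the integral of $\mu\,\cB$ then has the opposite sign — one must verify this opposite sign is in fact $+$, which it is because increasing $|\ahat_1|$ concentrates $\cB$ toward $\mu = -1$... so one should instead choose the sign convention of $\ahat_1$ so that $F$ points toward the concentration. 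The boundary case $F = 0$ (where $\ahat_1 = 0$ and the statement is vacuous) should be excluded at the outset. Everything else is routine once the equivariance and uniqueness are in place.
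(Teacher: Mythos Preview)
Your approach is essentially the same as the paper's: both exploit the rotational symmetry of the ansatz together with the invariance of $d\Omega$ under $SO(3)$. The paper argues in the reverse direction --- it takes an orthogonal matrix $R$ fixing $F$, shows that $R\hat\alpha_1$ is also a valid multiplier for the same moments, and invokes uniqueness of the dual minimizer to conclude $R\hat\alpha_1 = \hat\alpha_1$. Your direction (fix $\hat\alpha_1$, conclude the rotation fixes $F$) is slightly cleaner, since it follows directly from the equivariance of the forward map $\hat\alpha_1 \mapsto F = \vint{\Omega\,\cB(\ahat)}$ and does not actually require uniqueness; your appeal to Theorem~\ref{thm:DubFeu} in that step is superfluous.

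One further point: the paper's proof stops at collinearity and does not address the orientation at all, whereas you correctly observe that the lemma as stated asserts equality of unit vectors and attempt to pin down the sign. Your sign discussion is, as you acknowledge, the delicate part and remains somewhat tangled (the direction in which $\cB$ concentrates depends on sign conventions that must be tracked carefully through the definition of $\etad'$). For the paper's purposes this does not matter --- the only use of the lemma, in the proof of Lemma~\ref{lem:pe_qe}, is to conclude that $\cE(\bu)$ depends on $\Omega$ only through $\Omega\cdot\bn$, and collinearity alone suffices for that.
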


\begin{proof}
If $\E = \etad'\left( -\frac{h\nu c}{k} (\hat{\alpha}_0 + \hat{\alpha}_1\bm_1) \right)$ solves the optimization
problem \eqref{eq:optimization}, then by definition
\begin{equation}
F  = \Vint{\Omega \, \etad' \left(-\frac{h\nu c}{k} (\hat{\alpha}_0 +
\hat{\alpha}_1\bm_1) \right )}. \label{eq:M1_consistency}
\end{equation}
Let $R$ be any orthogonal $3 \times 3$ matrix which preserves $F$.  Then
multiplying
\eqref{eq:M1_consistency} by $R$ gives
\begin{equation}
     F  = \Vint{R \Omega \, \etad' \left(-\frac{h\nu c}{k} (\hat{\alpha}_0 +
\hat{\alpha}_1\bm_1) \right )}
     = \Vint{\Omega \, \etad' \left( -\frac{h\nu c}{k} (\hat{\alpha}_0 + R
\hat{\alpha}_1\bm_1) \right)},
\end{equation}
where we have used the fact that the measure $d\Omega$ is invariant under the
action of $R$.  Because the solution of the optimization is unique, we conclude
that $R\hat{\alpha}_1 = \hat{\alpha}_1$ and therefore, since $R$ is arbitrary, $\hat{\alpha}_1$
and $F$ must be co-linear   hjhkjhkj
\end{proof}

%
%

\begin{proof}[Proof of Proposition \ref{prop:pM1_hyperbolic}]
Without loss of generality, we consider $c=1$ and prove that the eigenvalues of the Jacobian associated with the convective flux in \eqref{eq:conv_flux}
are real.
To do so, the following definitions are introduced:
\begin{align*}
\alpha &:= \frac{\p}{\p E} \left (\xi E +r_a a T^4 +\frac{S}{\sig{t}}  \right ),  & \beta &:= \frac{\p}{\p F} \left (\xi E +r_a a T^4 +\frac{S}{\sig{t}}  \right ), \\
\xi(f) &:= \chi(f) +r_s\eta(f), &  f&:=F/E.
\end{align*} 
We show that the radical $\alpha +\beta^2/4$ in the formula for the eigenvalues is positive for all $f\neq 1$.
Note that \eqref{eq:eta_abstract} implies $\eta=1/3 +\chi' f -\chi$ and hence,
\begin{align}
\xi = r_s \left (\frac13 -\chi +\chi' f \right) +\chi.
\end{align}
The prime notation always refers to the derivative with respect to $f$. 
With this, we conclude
\begin{align}
\beta^2 +4\alpha &= \xi'^2 -4f \xi' +4\xi = \xi'^2 -4f \xi' + 4 r_s \left (\frac13 -\chi +\chi' f \right ) + 4\chi \\
		&= (\xi' -2f)^2 +4 r_s \left (\frac13 -\chi +\chi' f \right) +4 (\chi -f^2). \label{eq:radical}
\end{align}
Using \eqref{eq:chi_eta}, straight-forward calculations imply
\begin{align}
\chi -f^2 > 0 \q \tx{for all } f\neq 1 \quand  \frac13 -\chi +\chi' f \geq 0. \label{eq:chi_ineq}
\end{align}
Applying \eqref{eq:chi_ineq} on \eqref{eq:radical} completes the proof.

\end{proof}

\bibliographystyle{siam}
\bibliography{literature,entropy_closures}
\end{document}